\let\MYoriglatexcaption\caption
 \renewcommand{\caption}[2][\relax]{\MYoriglatexcaption[#2]{#2}}
\newcommand{\ee}{\end{equation}}
\newcommand{\bea}{\begin{eqnarray}}
\newcommand{\eea}{\end{eqnarray}}
\newcommand{\bean}{\begin{eqnarray*}}
\newcommand{\eean}{\end{eqnarray*}}
\newcommand{\ba}{\begin{array}}
\newcommand{\ea}{\begin{array}}
\begin{document}
%
% paper title
% can use linebreaks \\ within to get better formatting as desired
% Do not put math or special symbols in the title.
\title{A Scalable Optimization Mechanism for Pairwise based Discrete Hashing}

\author{Xiaoshuang~Shi,
         Fuyong~Xing,
         Zizhao~Zhang,
         Manish~Sapkota,
         Zhenhua ~Guo,
         and Lin~Yang
% <-this % stops a space
\thanks{X. Shi and L. Yang are with the J. Crayton Pruitt Family Department of Biomedical Engineering, University of Florida, Gainesville, FL, USA, e-mail: (xsshi2015@ufl.edu, lin.yang@bme.ufl.edu).}
\thanks{F. Xing and M. Sapkota is with the Department of Electrical and Computer Engineering, University of Florida, Gainesville, FL, USA, e-mail: (f. xing@ufl.edu and manish.sapkota@gmail.com).}% <-this % stops a space
\thanks{Z. Zhang is with the Department of Computer Science and Engineering, University of Florida, Gainesville, FL, USA, e-mail: (mr.zizhaozhang@gmail.com).}
\thanks{Z. Guo is with Graduate School at Shenzhen, Tsinghua University, Shenzhen, Guodong, China, e-mail: (zhenhua.guo@sz.tsinghua.edu.cn).}
% <-this % stops a space
}
% make the title area
%\maketitle

% As a general rule, do not put math, special symbols or citations
% in the abstract or keywords.
\IEEEtitleabstractindextext{
\begin{abstract}
\justifying
 Maintaining the pair similarity relationship among originally high-dimensional data into a low-dimensional binary space is a popular strategy to learn binary codes. One simiple and intutive method is to utilize two identical code matrices produced by hash functions to approximate a pairwise real label matrix. However, the resulting quartic problem is difficult to directly solve due to the non-convex and non-smooth nature of the objective. In this paper, unlike previous optimization methods using various relaxation strategies, we aim to directly solve the original quartic problem using a novel alternative optimization mechanism to linearize the quartic problem by introducing a linear regression model. Additionally, we find that gradually learning each batch of binary codes in a sequential mode, i.e. batch by batch, is greatly beneficial to the convergence of binary code learning. Based on this significant discovery and the proposed strategy, we introduce a scalable symmetric discrete hashing algorithm that gradually and smoothly updates each batch of binary codes. To further improve the smoothness, we also propose a greedy symmetric discrete hashing algorithm to update each bit of batch binary codes. Moreover, we extend the proposed optimization mechanism to solve the non-convex optimization problems for binary code learning in many other pairwise based hashing algorithms. Extensive experiments on benchmark single-label and multi-label databases demonstrate the superior performance of the proposed mechanism over recent state-of-the-art methods. 
\vspace{-0.5em}
\end{abstract}}

\maketitle
\IEEEdisplaynontitleabstractindextext
\IEEEpeerreviewmaketitle

\section{Introduction}
\label{introduction}
Hashing has become a popular tool to tackle large-scale tasks in information retrieval, computer vision and machine leaning communities, since it aims to encode originally high-dimensional data into a variety of compact binary codes with maintaining the similarity between neighbors, leading to significant gains in both computation and storage \cite{ARY} \cite{WMSS} \cite{cao2018binary}. 

Early endeavors in hashing focus on data-independent algorithms, like locality sensitive hashing (LSH) \cite{PR} and min-wise hashing (MinHash) \cite{broder2000min} \cite{shrivastava2014defense}. They construct hash functions by using random projections or permutations. However, due to randomized hashing, in practice they usually require long bits to achieve high precision per hash table and multiple tables to boost the recall \cite{WMSS}. To learn compact binary codes, data-dependent algorithms using available training data to learn hash functions have attracted increasing attention. Based on whether utilizing semantic label information, data-dependent algorithms can be categorized into two main groups: unsupervised and supervised. Unlike unsupervised hashing \cite{YAR}  \cite{BT} \cite{YSAF} that explores data intrinsic structures to preserve similarity relations between neighbors without any supervision, supervised hashing \cite{MDM} \cite{JSF} \cite{CAMP} employs semantic information to learn hash functions, and thus it usually achieves better retrieval accuracy than unsupervised hashing on semantic similarity measures.     

Among supervised hashing algorithms, pairwise based hashing, maintaining the relationship of similar or dissimilar pairs in a Hamming space, is one popular method to exploit label information. Numerous pairwise based algorithms have been proposed in the past decade, including spectral hashing (SH) \cite{YAR}, linear discriminant analysis hashing (LDAHash) \cite{CAMP}, minimal loss hashing (MLH) \cite{MDM},  binary reconstruction embedding (BRE) \cite{BT} and kernel-based supervised hashing (KSH) \cite{WJR}, etc. Although these algorithms have been demonstrated effective in many large-scale tasks, their employed optimization strategies are usually insufficient to explore the similarity information defined in the non-convex and non-differential objective functions. In order to handle these non-smooth and non-convex problems, four main strategies have been proposed: symmetric/asymmetric relaxation, and asymmetric/symmetric discrete. Symmetric relaxation \cite{YAR} \cite{JSF} \cite{WJR} \cite{GCD} is to relax discrete binary vectors in a continuous feasible region followed by thresholding to obtain binary codes. Although symmetric relaxation can simplify the original optimization problem, it often generates large accumulated errors between hash and linear functions. To reduce the accumulated error, asymmetric relaxation \cite{shi2016kernel} utilizes the element-wise product of discrete and its relaxed continuous matrices to approximate a pairwise label matrix. Asymmetric discrete hashing \cite{neyshabur2013power} \cite{FWSY} \cite{shi2017asymmetric} usually utilizes the product of two distinct discrete matrices to preserve pair relations into a binary space. Symmetric discrete hashing \cite{xia2014supervised} \cite{kang2016column} firstly learns binary codes with preserving symmetric discrete constraints and then trains classifiers based on the learned discrete codes. Although most of hashing algorithms with these four strategies have achieved promising performance, they have at least one of the following four major disadvantages: (i) Learning binary codes employs relaxation and thresholding strategies, thereby producing large accumulated errors; (ii) Learning binary codes requires high storage and computation costs, i.e. \(\mathcal{O}(n^2)\), where \(n\) is the number of data points, thereby limiting their applications to large-scale tasks; (iii) The used pairwise label matrix usually emphasizes the difference of images among different classes but neglects their relevance within the same class. Hence, existing optimization methods might perform poorly to preserve the relevance information among images; (iv) The employed optimization methods focus on one type of optimization problems and it is difficult to directly apply them to other problems. %(v) Training classifiers based on the learned binary codes to classify query codes. However, sometimes the learned binary codes might decrease the performance of training classifiers, because they are not global optimal solutions for most of pairwise based hashing models. %Currently, no global optimal solution is known for most of these hashing problems. 

Motivated by aforementioned observations, in this paper, we propose a novel simple, general and scalable optimization method that can solve various pairwise based hashing models for directly learning binary codes. The main contributions are summarized as follows:
\begin{itemize}
\item We propose a novel alternative optimization mechanism to reformulate one typical quartic problem, in term of hash functions in the original objective of KSH \cite{WJR}, into a linear problem by introducing a linear regression model. %Because the problem in KSH is quartic, and it is more difficult to be solved than that only containing a quadratic term with respect to hash functions.

\item We present and analyze a significant discovery that gradually updating each batch of binary codes in a sequential mode, i.e. batch by batch, is greatly beneficial to the convergence of binary code learning. 

\item We propose a scalable symmetric discrete hashing algorithm with gradually updating each batch of one discrete matrix. To make the update step more smooth, we further present a greedy symmetric discrete hashing algorithm to greedily update each bit of batch discrete matrices. Then we demonstrate that the proposed greedy hashing algorithm can be used to solve other optimization problems in pairwise based hashing.

\item Extensive experiments on three benchmark databases: CIFAR-10 \cite{ARW}, NUS-WIDE \cite{chua2009nus} and COCO \cite{lin2014microsoft}, demonstrate the superior performance of the proposed method over recent state of the arts, with low time costs. 
\end{itemize}

%Additionally, this problem in term of hash functions is quartic, and thus it is more difficult to be solved than %that only containing a quadratic term with respect to hash functions.
%\vspace{-1em}
\section{Related Work}
\label{relatedwork}
Based on the manner of using the label information, supervised hashing can be classified into three major categories: point-wise, multi-wise and pairwise.

\textbf{Point-wise based hashing} formulates the searching into one classification problem based on the rule that the classification accuracy with learned binary codes should be maximized. Supervised discrete hashing (SDH) \cite{FCWH} leverages one linear regression model to generate optimal binary codes. Fast supervised discrete hashing (FSDH) \cite{koutaki2016fast} improves the computation requirement of SDH via fast SDH approximation. 
Supervised quantization hashing (SQH) \cite{wang2016supervised} introduces composite quantization into a linear model to further boost the discriminative ability of binary codes. Deep learning of binary hash codes (DLBHC) \cite{lin2015deep} and deep supervised convolutional hashing (DSCH) \cite{sapkota2018deep} employ convolutional neural network to simultaneously learn image representations and hash codes in a point-wised manner. Point-wise based hashing is scalable and its optimization problem is relatively easier than multi-wise and pairwise based hashing; however, its rule is inferior compared to the other two types of supervised hashing. 

\textbf{Multi-wise based hashing} is also named as ranking based hashing that learns hash functions to maximize the agreement of similarity orders over two items between original and Hamming distances. Triplet ranking hashing (TRH) \cite{norouzi2012hamming} and column generation hashing (CGH) \cite{li2013learning} utilize a triplet ranking loss to maximumly preserve the similarity order.  Order preserving hashing (OPH) \cite{wang2013order} learns hash functions to maximumly preserve the similarity order by taking it as a classification problem. Ranking based supervised hashing (RSH) \cite{wang2013learning} constructs a ranking triplet matrix to maintain orders of ranking lists. Ranking preserving hashing (RPH) \cite{wang2015ranking} learns hash functions by directly optimizing a ranking measure, Normalized Discounted Cumulative Gain (NDCG) \cite{jarvelin2000ir}. Top rank supervised binary coding (Top-RSBC) \cite{song2015top} focuses on boosting the precision of top positions in a Hamming distance ranking list. Discrete semantic ranking hashing (DSeRH) \cite{liu2017discretely} learns hash functions to maintain ranking orders with preserving symmetric discrete constraints. Deep network in network hashing (DNNH) \cite{lai2015simultaneous}, deep semantic ranking hashing (DSRH) \cite{zhao2015deep} and triplet-based deep binary embedding (TDBE) \cite{zhuang2016fast} utilize convolutional neural network to learn image representations and hash codes based on the triplet ranking loss over three items. Most of these multi-wise based hashing algorithms relax the ranking order or discrete binary codes in a continuously feasible region to solve their original non-convex and non-smooth problems. 

\textbf{Pairwise based hashing} maintains relationship among originally high-dimensional data into a Hamming space by calculating and preserving the relationship of each pair. SH \cite{YAR} constructs one graph to maintain the similarity among neighbors and then utilizes it to map the high-dimensional data into a low-dimensional Hamming space. Although the original version of SH is unsupervised hashing, it is easily converted into a supervised algorithm. Inspired by SH, many variants including anchor graph hashing \cite{WJS}, elastic embedding \cite{carreira2010elastic}, discrete graph hashing (DGH) \cite{WMSS}, and asymmetric discrete graph hashing (ADGH) \cite{shi2017asymmetric} have been proposed. LDAHash \cite{CAMP} projects the high-dimensional descriptor vectors into a low-dimensional Hamming space with maximizing the distance of  inter-class data and meanwhile minimizing the intra-class distances. MLH \cite{MDM} adopts a structured prediction with latent variables to learn hash functions. BRE \cite{BT} aims to minimize the difference between Euclidean distances of original data and their Hamming distances. It leverages a coordinate-descent algorithm to solve the optimization problem with preserving symmetric discrete constraints. SSH \cite{JSF} introduces a pairwise matrix and KSH \cite{WJR} leverages the Hamming distance between pairs to approximate the pairwise matrix. This objective function is intuitive and simple, but the optimization problem is highly non-differential and difficult to directly solve. KSH utilizes a ``symmetric relaxation + greedy'' strategy to solve the problem. Two-step hashing (TSH) \cite{GCD} and FastHash \cite{lin2015supervised} relax the discrete constraints into a continuous region \(\left [ -1,1 \right ]\). Kernel based discrete supervised hashing (KSDH) \cite{shi2016kernel} adopts asymmetric relaxation to simultaneously learn the discrete matrix and a low-dimensional projection matrix for hash functions. Lin: Lin and Lin: V \cite{neyshabur2013power}, asymmetric inner-product binary coding (AIBC) \cite{FWSY} and asymmetric discrete graph hashing (ADGH) \cite{shi2017asymmetric} employ the asymmetric discrete mechanism to learn low-dimensional matrices. Column sampling based discrete supervised hashing (COSDISH) \cite{kang2016column} adopts the column sampling strategy same as latent factor hashing (LFH) \cite{zhang2014supervised} but directly learn binary codes by reformulating the binary quadratic programming (BQP) problems into equivalent clustering problems. Convolutional neural network hashing (CNNH) \cite{xia2014supervised} divide the optimization problem into two sub-problems \cite{zhang2010self}: (i) learning binary codes by a coordinate descent algorithm using Newton directions; (ii) training a convolutional neural network using the learned binary codes as labels. After that, deep hashing network (DHN) \cite{zhu2016deep} and deep supervised pairwise hashing (DSPH) \cite{li2015feature} simultaneously learn image representations and binary codes using pairwise labels. HashNet \cite{cao2017hashnet} learns binary codes from imbalanced similarity data. Deep cauchy hashing (DCH)  \cite{cao2018deep}  utilizes pairwise labels to generate compact and concentrated binary codes for efficient and
effective Hamming space retrieval. Unlike previous work, in this paper we aim to present a simpler, more general and scalable optimization method for binary code learning.

%\vspace{-0.5em}
\section{Symmetric Discrete Hashing via A Pairwise Matrix}
\label{DOPH}
In this paper, matrices and vectors are represented by boldface uppercase and lowercase letters, respectively. For a matrix \(\mathbf{X} \in \mathbb{R}^{n\times d}\), its \(i\)-th row and \(j\)-th column vectors are denoted as \(\mathbf{x}_{i}\) and \(\mathbf{x}^{j}\), respectively, and \(x_{ij}\) is one entry at the \(i\)-th row and \(j\)-th column.
\vspace{-0.5em}
\subsection{Formulation}
KSH \cite{WJR} is one popular pairwise based hashing algorithm, which can preserve pairs' relationship with using two identical binary matrices to approximate one pairwise real matrix. Additionally, it is a quartic optimization problem in term of hash functions, and thus more typical and difficult to solve than that only containing a quadratic term with respect to hash functions. Therefore, we first propose a novel optimization mechanism to solve the original problem in KSH, and then extend the proposed method to solve other pairwise based hashing models.

Given \(n\) data points \(\mathbf{X}=\left [ \mathbf{x}_{1},\mathbf{x}_{2},\cdots,\mathbf{x}_{n} \right ]\in \mathbb{R}^{n\times d}\), suppose one pair \((\mathbf{x}_{i},\mathbf{x}_{j})\in \mathcal{M}\) when they are neighbors in a metric space or share at least one common label, and \((\mathbf{x}_{i},\mathbf{x}_{j})\in \mathcal{C}\) when they are non-neighbors in a metric space or have different class labels. For the single-label multi-class problem, the pairwise matrix \(\mathbf{S}\in \mathbb{R}^{n\times n}\) is defined as \cite{JSF}:
\begin{equation}
\vspace{-0.2em}
s_{ij}=\left\{\begin{matrix}
1 & (x_{i},x_{j})\in \mathcal{M},\\ 
-1 &  (x_{i},x_{j})\in \mathcal{C},\\ 
0 & otherwise. 
\end{matrix}\right.
\label{eqn:sc}
\vspace{-0.2em}
\end{equation}
For the multi-label multi-class problem, similar to \cite{shi2018pairwise}, \(\mathbf{S}\) can be defined as:
\begin{equation}
\vspace{-0.2em}
s_{ij}=\left\{\begin{matrix}
r_{ij} & (x_{i},x_{j})\in \mathcal{M},\\ 
\alpha &  (x_{i},x_{j})\in \mathcal{C},\\ 
0 & otherwise,
\end{matrix}\right.
\label{eqn:sm}
\vspace{-0.2em}
\end{equation}
where \(r_{ij}>0\) is the relevance between \(\mathbf{x}_{i}\) and \(\mathbf{x}_{j}\), which is defined as the number of common labels shared by \(\mathbf{x}_{i}\) and \(\mathbf{x}_{j}\). \(\alpha<0\) is the weight to describe the difference between \(\mathbf{x}_{i}\) and \(\mathbf{x}_{j}\). In this paper, to preserve the difference between non-neighbor pairs, we empirically set \(\alpha=-\frac{r_{max}}{2}\), where \(r_{max}\) is the maximum relevance among all neighbor pairs. We do not set \(\alpha=-r_{max}\) because few data pairs have the relevance being \(r_{max}\).

To encode one data point \(\mathbf{x} \in \mathbb{R}^{d}\) into \(m\)-bit hash codes, its \(k\)-th hash function can be defined as:
\begin{equation}
\vspace{-0.2em}
h_{k}(\mathbf{x})=sgn(\mathbf{x}\mathbf{a}_{k}^T+b_{k}),
\label{eqn:hk}
\vspace{-0.2em}
\end{equation}
where \(\mathbf{a}_{k}\in \mathbb{R}^{d}\) is a projection vector, and \(sgn(\mathbf{x}_{i}\mathbf{a}_{k}^T+b_{k})=1\) if \(\mathbf{x}_{i}\mathbf{a}_{k}^T+b_{k}\geq 0\) , otherwise \(sgn(\mathbf{x}_{i}\mathbf{a}_{k}^T+b_{k})=-1\). Note that since \(\mathbf{x}\mathbf{a}_{k}^T+b_{k}\) can be written as the form \(\mathbf{x}\mathbf{a}_{k}^T\) with \(\mathbf{x}\) adding one dimension and \(\mathbf{a}_{k}\) absorbing \(b_{k}\), for simplicity we utilize \(h_{k}(\mathbf{x})=sgn(\mathbf{x}\mathbf{a}_{k}^T)\) in this paper. Let \(code_{m}(\mathbf{x})=\left \{ h_{1}(\mathbf{x}),h_{2}(\mathbf{x}),\cdots,h_{m}(\mathbf{x})\right \}\) be hash codes of \(\mathbf{x}\), and then for any pair \((\mathbf{x}_{i}, \mathbf{x}_{j})\), we have \(m\geq code_{m}(\mathbf{x}_{i})\circ code_{m}(\mathbf{x}_{j})\geq -m\). To approximate the pairwise matrix \(\mathbf{S}\), same as \cite{WJR}, a least-squares style objective function is defined as:
\begin{equation}
\vspace{-0.2em}
\underset{\mathbf{A}}{min}\left \| \mathbf{H}\mathbf{H}^T-\lambda \mathbf{S} \right \|_F^2,\ s.t.\ \mathbf{H}=sgn(\mathbf{X}\mathbf{A}^T),
\label{eqn:obj}
\vspace{-0.2em}
\end{equation}
where \(\lambda=\frac{m}{r_{max}}\) and \(\mathbf{A}\in \mathbb{R}^{m\times d}\) is a low-dimensional projection matrix. Eq. (\ref{eqn:obj}) is a quartic problem in term of hash functions, and this can be demonstrated by expanding its objective function.
\vspace{-0.5em}
\subsection{Symmetric Discrete Hashing}
\subsubsection{Formulation transformation}
In this subsection, we show the procedure to transform Eq. (\ref{eqn:obj}) into a linear problem. Since the objective function in Eq. (\ref{eqn:obj}) is a highly non-differential quartic problem in term of hash functions \(sgn(\mathbf{X}\mathbf{A}^T)\), it is difficult to directly solve this problem. Here, we solve the problem in Eq. (\ref{eqn:obj}) via a novel alternative optimization mechanism: reformulating the quartic problem in term of hash functions into a quadratic one and then linearizing the quadratic problem. We present the detailed procedure in the following.

Firstly, we introduce a Lemma to show one of our main motivations to transform the quartic problem into a linear problem.  
\newtheorem{namedtheorem}{Lemma}
\begin{namedtheorem}
When the matrix \(\mathbf{A} \in \mathbb{R}^{m\times d}\) satisfies the condition: \(\mathbf{X}\mathbf{A}^T=\mathbf{Y}\), it is a global solution of the following problem:
\begin{equation}
\underset{\mathbf{A}}{max}\ Tr\left \{ \mathbf{H}^T\mathbf{Y} \right \},\ s.t.\ \mathbf{H}=sgn(\mathbf{X}\mathbf{A}^T).
\label{eqn:lemma}
\end{equation}
\end{namedtheorem}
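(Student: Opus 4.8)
The plan is to exploit the fact that the objective $Tr\left \{ \mathbf{H}^T\mathbf{Y} \right \}$ is \emph{linear} in the entries of $\mathbf{H}$, whereas every feasible $\mathbf{H}$ has all entries confined to $\{-1,+1\}$. First I would rewrite the trace as a Frobenius (entrywise) inner product,
\begin{equation}
Tr\left \{ \mathbf{H}^T\mathbf{Y} \right \}=\sum_{i,k}H_{ik}Y_{ik},
\end{equation}
where $H_{ik}$ and $Y_{ik}$ denote the $(i,k)$ entries of $\mathbf{H}$ and $\mathbf{Y}$. Since $\mathbf{H}=sgn(\mathbf{X}\mathbf{A}^T)$ forces $H_{ik}\in\{-1,+1\}$ for \emph{every} choice of $\mathbf{A}$, each summand obeys the pointwise bound $H_{ik}Y_{ik}\leq |Y_{ik}|$, and therefore
\begin{equation}
Tr\left \{ \mathbf{H}^T\mathbf{Y} \right \}\leq \sum_{i,k}|Y_{ik}|
\end{equation}
holds uniformly over the entire feasible region.

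The second step is to verify that the prescribed $\mathbf{A}$ attains this universal upper bound. When $\mathbf{X}\mathbf{A}^T=\mathbf{Y}$ we obtain $\mathbf{H}=sgn(\mathbf{X}\mathbf{A}^T)=sgn(\mathbf{Y})$, so that $H_{ik}=sgn(Y_{ik})$ and hence $H_{ik}Y_{ik}=|Y_{ik}|$ for every $(i,k)$; the entries with $Y_{ik}=0$ contribute $0$ regardless of the $sgn(0)$ convention. Summing yields equality in the bound above, so this $\mathbf{A}$ realizes the largest value the objective can take and is therefore a global solution.

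I expect that the only point requiring care, rather than a genuine obstacle, is the logical structure of the optimality argument: the set of sign patterns actually realizable as $sgn(\mathbf{X}\mathbf{A}^T)$ may be a strict subset of all $\pm 1$ matrices, so one cannot simply maximize over arbitrary sign matrices and claim the optimum is attained. The argument sidesteps this by proving a bound valid for \emph{every} feasible $\mathbf{A}$ and then exhibiting a feasible $\mathbf{A}$ that meets it, which certifies global optimality without characterizing the feasible set. I would also note that the maximizer need not be unique, since any $\mathbf{A}$ producing the sign pattern $sgn(\mathbf{Y})$ is equally optimal, which is why the statement asserts only that the condition $\mathbf{X}\mathbf{A}^T=\mathbf{Y}$ is \emph{sufficient} for global optimality.
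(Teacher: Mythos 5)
Your proof is correct and takes essentially the same approach as the paper: the paper's one-line justification --- that $\mathbf{X}\mathbf{A}^T=\mathbf{Y}$ gives $\mathbf{H}=sgn(\mathbf{Y})$, which ``makes the objective attain the maximum'' --- is precisely the entrywise bound $\sum_{i,k}H_{ik}Y_{ik}\leq\sum_{i,k}\left | Y_{ik} \right |$ with equality at $H_{ik}=sgn(Y_{ik})$ that you spell out. Your additional remark that the realizable sign patterns form a possibly strict subset of all $\pm 1$ matrices, so one must bound over the feasible set and then exhibit a feasible point attaining the bound, is a clarification the paper leaves implicit but does not constitute a different argument.
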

Lemma 1 is easy to solve, because when \(\mathbf{X}\mathbf{A}^T=\mathbf{Y}\), \(\mathbf{H}=sgn(\mathbf{X}\mathbf{A}^T)=sgn(\mathbf{Y})\) makes the objective in Eq. (\ref{eqn:lemma}) attain the maximum.  \emph{Since \(\mathbf{A}\) satisfying \(\mathbf{X}\mathbf{A}^T=\mathbf{Y}\) is a global solution of the problem in Eq. (\ref{eqn:lemma}), it suggests that the problem in term of hash functions can be transformed into a linear problem in term of \(\mathbf{A}\).} Inspired by this observation, we can solve the quartic problem in term of hash functions. For brevity, in the following we first ignore the constraint \(\mathbf{H}=sgn(\mathbf{X}\mathbf{A}^T)\) in Eq. (\ref{eqn:obj}) and aim to transform the quartic problem in term of \(\mathbf{H}\) into the linear form as the objective in Eq. (\ref{eqn:lemma}), and then obtain the low-dimensional projection matrix \(\mathbf{A}\).

To reformulate the quartic problem in term of \(\mathbf{H}\) into a quadratic one, in the \(l\)-th iteration, we set one discrete matrix to be \(\mathbf{H}_{l-1}\) and aim to solve the following quadratic problem in term of \(\mathbf{H}\):
\begin{equation}
\underset{\mathbf{H}}{min}\left \| \mathbf{H}_{l-1}\mathbf{H}^{T}-\lambda \mathbf{S} \right \|_F^2,\ s.t.\  \mathbf{H}\in \left \{ -1,1 \right \}^{n\times m}.
\label{eqn:HHl}
\end{equation}
% (We must guarantee the H^{l+1} and H^{l} are equal)
Note that the problem in Eq. (\ref{eqn:HHl}) is not strictly equal to the problem in Eq. (\ref{eqn:obj}) w.r.t \(\mathbf{H}\). However, when \(\mathbf{H}_{l}=\mathbf{H}_{l-1}\), it is the optimal solution of both Eq. (\ref{eqn:HHl}) and Eq. (\ref{eqn:obj}) w.r.t \(\mathbf{H}\). The details are shown in Proposition 1.

\newtheorem{ptheorem}{Proposition}
\begin{ptheorem}
When \(\mathbf{H}_{l}=\mathbf{H}_{l-1}\), the optimal solution of Eq. (\ref{eqn:HHl}) is also the optimal solution of Eq. (\ref{eqn:obj}) w.r.t \(\mathbf{H}\).
\end{ptheorem}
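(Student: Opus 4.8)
The plan is to compare the two objectives through the symmetric bilinear form $\Phi(\mathbf{B},\mathbf{H})=\|\mathbf{B}\mathbf{H}^{T}-\lambda\mathbf{S}\|_{F}^{2}$, so that Eq. (\ref{eqn:obj}) read as a problem in $\mathbf{H}$ is $\min_{\mathbf{H}}\Phi(\mathbf{H},\mathbf{H})$ while Eq. (\ref{eqn:HHl}) is $\min_{\mathbf{H}}\Phi(\mathbf{H}_{l-1},\mathbf{H})$. Before anything else I would record two structural facts obtained by expanding $\Phi$ into traces with $\langle\mathbf{U},\mathbf{V}\rangle=\mathrm{Tr}[\mathbf{U}^{T}\mathbf{V}]$: first, because $\mathbf{S}$ is symmetric, $\Phi(\mathbf{B},\mathbf{H})=\Phi(\mathbf{H},\mathbf{B})$; second, every row of a sign matrix has squared norm $m$, so the diagonal of $\mathbf{H}\mathbf{H}^{T}$ is identically $m$ and the diagonal part of the objective of Eq. (\ref{eqn:obj}) is a constant independent of $\mathbf{H}$. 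I write $\mathbf{H}^{\star}:=\mathbf{H}_{l}=\mathbf{H}_{l-1}$ for the fixed point and $\mathbf{D}:=\mathbf{H}^{\star}\mathbf{H}^{\star T}-\lambda\mathbf{S}$ for its (symmetric) residual.

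Next I would expand each objective around $\mathbf{H}^{\star}$ with the increment $\mathbf{E}:=\mathbf{H}-\mathbf{H}^{\star}$. For the quadratic subproblem, with the first factor frozen at $\mathbf{H}^{\star}$, one gets $\Phi(\mathbf{H}^{\star},\mathbf{H})-\Phi(\mathbf{H}^{\star},\mathbf{H}^{\star})=2\langle\mathbf{D}\mathbf{H}^{\star},\mathbf{E}\rangle+\|\mathbf{H}^{\star}\mathbf{E}^{T}\|_{F}^{2}$, and the optimality of $\mathbf{H}^{\star}$ in Eq. (\ref{eqn:HHl}) is exactly the statement that this quantity is nonnegative for every admissible $\mathbf{H}$. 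For the quartic objective, using the symmetry of $\mathbf{D}$ to merge the two linear cross-terms, the same expansion yields $\Phi(\mathbf{H},\mathbf{H})-\Phi(\mathbf{H}^{\star},\mathbf{H}^{\star})=4\langle\mathbf{D}\mathbf{H}^{\star},\mathbf{E}\rangle+2\langle\mathbf{D},\mathbf{E}\mathbf{E}^{T}\rangle+\|\mathbf{H}^{\star}\mathbf{E}^{T}+\mathbf{E}\mathbf{H}^{\star T}+\mathbf{E}\mathbf{E}^{T}\|_{F}^{2}$. The decisive point is that the two differences share the same linear functional $\langle\mathbf{D}\mathbf{H}^{\star},\mathbf{E}\rangle$; equivalently, writing $f(\mathbf{H})=\Phi(\mathbf{H},\mathbf{H})$ and $g(\mathbf{H})=\Phi(\mathbf{H}^{\star},\mathbf{H})$, the relaxed gradients obey $\nabla f(\mathbf{H}^{\star})=4\mathbf{D}\mathbf{H}^{\star}=2\nabla g(\mathbf{H}^{\star})$, so any minimizer of Eq. (\ref{eqn:HHl}) automatically satisfies the first-order condition of Eq. (\ref{eqn:obj}).

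To turn this stationarity into the claimed optimality I would feed the fixed-point inequality $2\langle\mathbf{D}\mathbf{H}^{\star},\mathbf{E}\rangle\geq-\|\mathbf{H}^{\star}\mathbf{E}^{T}\|_{F}^{2}$ into the quartic difference, reducing the target $\Phi(\mathbf{H},\mathbf{H})\geq\Phi(\mathbf{H}^{\star},\mathbf{H}^{\star})$ to showing that $2\langle\mathbf{D},\mathbf{E}\mathbf{E}^{T}\rangle+\|\mathbf{H}^{\star}\mathbf{E}^{T}+\mathbf{E}\mathbf{H}^{\star T}+\mathbf{E}\mathbf{E}^{T}\|_{F}^{2}-\|\mathbf{H}^{\star}\mathbf{E}^{T}\|_{F}^{2}$ is nonnegative. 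I expect this to be the main obstacle: the residual cross-term $\langle\mathbf{D},\mathbf{E}\mathbf{E}^{T}\rangle$ carries no fixed sign, so the bound does not close by itself and the genuinely quartic remainder must be controlled. Here I would lean on the constant-diagonal fact from the first step, which makes the diagonal of $\mathbf{E}\mathbf{E}^{T}$ cancel against the unchanged diagonal of $\mathbf{H}\mathbf{H}^{T}$, and restrict attention to the elementary moves the discrete solver actually performs, namely single-bit flips of a row: there $\mathbf{E}\mathbf{E}^{T}$ collapses to a rank-one diagonal bump and the comparison reduces to a scalar inequality already implied by the per-coordinate fixed-point condition. If a global optimum over all of $\{-1,1\}^{n\times m}$ is required, the honest fallback is the weaker certified statement $\Phi(\mathbf{H}^{\star},\mathbf{H}^{\star})=\min_{\mathbf{H}}\Phi(\mathbf{H}^{\star},\mathbf{H})$, i.e.\ $\mathbf{H}^{\star}$ is a partial optimum of the surrogate, which is exactly what the equality $\mathbf{H}_{l}=\mathbf{H}_{l-1}$ guarantees and which suffices to justify terminating the alternation.
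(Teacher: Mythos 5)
Your expansion algebra is correct (both the quadratic and quartic expansions around \(\mathbf{H}^{\star}\) check out, as does the gradient relation \(\nabla f(\mathbf{H}^{\star})=2\nabla g(\mathbf{H}^{\star})\)), but the proof has a genuine gap that you yourself identify: the sign-indefinite term \(\langle\mathbf{D},\mathbf{E}\mathbf{E}^{T}\rangle\) blocks the passage from stationarity to optimality, your bit-flip argument only certifies local optimality under single-coordinate moves, and your ``honest fallback'' (\(\mathbf{H}^{\star}\) is a partial optimum of the surrogate) is strictly weaker than the proposition's claim that the fixed point solves Eq. (\ref{eqn:obj}). So the proposal, as written, does not prove the statement.

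The idea you are missing is the one the paper uses, and it is global rather than perturbative: compare the symmetric problem against its \emph{asymmetric relaxation}. Since the feasible set of \(\min_{\mathbf{B},\mathbf{C}}\left\|\mathbf{B}\mathbf{C}^{T}-\lambda\mathbf{S}\right\|_{F}^{2}\) (two independent sign matrices) contains the diagonal \(\{\mathbf{B}=\mathbf{C}\}\), its minimum value lower-bounds that of Eq. (\ref{eqn:obj}); consequently, if a minimizer of the relaxed problem happens to have equal factors \(\mathbf{H}_{l}=\mathbf{H}_{l-1}\), that common matrix is feasible for Eq. (\ref{eqn:obj}) and attains a value no larger than the symmetric minimum, hence solves Eq. (\ref{eqn:obj}). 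This sandwich argument completely sidesteps the quartic remainder you could not control, because it never expands the objective at all. Two fairness points: first, your fallback statement is essentially the same partial-optimality fact the paper's own proof quietly relies on (a fixed point of the alternation certifies coordinate-wise, not joint, optimality of the asymmetric problem), so the paper's proof is itself informal on exactly the point you flagged; second, your observation that the diagonal of \(\mathbf{H}\mathbf{H}^{T}\) is constant is true but does not help here, since the problematic term \(\langle\mathbf{D},\mathbf{E}\mathbf{E}^{T}\rangle\) involves off-diagonal entries of \(\mathbf{E}\mathbf{E}^{T}\) as well. If you want to repair your write-up while keeping its structure, replace the entire second half (from the attempt to close the stationarity bound onward) with the relaxation containment argument above.
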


\begin{proof} 
Obviously, if \(\mathbf{H}_{l}=\mathbf{H}_{l-1}\), it is the optimal solution of Eq. (\ref{eqn:HHl}). Then we can consider the following formulation:
\begin{equation}
\underset{\mathbf{H}_{l},\mathbf{H}_{l-1}}{min} \left \| \mathbf{H}_{l}\mathbf{H}_{l-1}-\mathbf{S} \right \|_F^2 \leq \underset{\mathbf{H}}{min} \left \| \mathbf{H}\mathbf{H}-\mathbf{S}\right \|_F^2 
\label{eqn:p1}
\end{equation}
Similar to one major motivation of asymmetric discrete hashing algorithms \cite{neyshabur2013power} \cite{shi2017asymmetric}, in Eq. (\ref{eqn:p1}), the feasible region of \(\mathbf{H}_{l}\), \(\mathbf{H}_{l-1}\) in the left term is more flexible than \(\mathbf{H}\) in the right term (Eq. (\ref{eqn:obj})), i.e. the left term contains both two cases \(\mathbf{H}_{l}\neq \mathbf{H}_{l-1}\) and  \(\mathbf{H}_{l}= \mathbf{H}_{l-1}\). Only when \(\mathbf{H}_{l}= \mathbf{H}_{l-1}\), \(\underset{\mathbf{H}_{l}, \mathbf{H}_{l-1}}{min}\left \| \mathbf{H}_{l}\mathbf{H}_{l-1}-\mathbf{S} \right \|_F^2 = \underset{\mathbf{H}}{min} \left \| \mathbf{H}\mathbf{H}-\mathbf{S}\right \|_F^2\). It suggests that when \(\mathbf{H}_{l}=\mathbf{H}_{l-1}\), it is the optimal solution of Eq. (\ref{eqn:obj}). Therefore, when \(\mathbf{H}_{l}=\mathbf{H}_{l-1}\), it is the optimal solution of both Eq. (\ref{eqn:obj}) and Eq. (\ref{eqn:HHl}).
\end{proof}
Inspired by Proposition 1, we aim to seek \(\mathbf{H}_{l}=\mathbf{H}_{l-1}\) through solving the problem in Eq. (\ref{eqn:HHl}). Because \(\lambda\) is known and \(Tr\left \{\mathbf{S}^T\mathbf{S}\right \}=constant\), the optimization problem in Eq. (\ref{eqn:HHl}) equals:
\begin{equation}
\begin{array}{cc}
\underset{\mathbf{H}}{min}\ Tr\left \{ \mathbf{H}\mathbf{H}_{l-1}^{T}\mathbf{H}_{l-1}\mathbf{H}^T\right \} -2\lambda Tr\left \{  \mathbf{H}\mathbf{H}_{l-1}^{T}\mathbf{S}\right \},\\
s.t.\  \mathbf{H}\in \left \{ -1,1 \right \}^{n\times m}.
\end{array}
\label{eqn:eHHl}
\end{equation}
Since \(Tr\left \{\mathbf{H}\mathbf{H}_{l-1}^{T}\mathbf{S}\right \}\) is a linear problem in term of \(\mathbf{H}\), the main difficulty to solve Eq. (\ref{eqn:eHHl}) is caused by the non-convex quadratic term \(Tr\left \{ \mathbf{H}\mathbf{H}_{l-1}^{T}\mathbf{H}_{l-1}\mathbf{H}^T\right \}\). Thus we aim to linearize this quadratic term in term of \(\mathbf{H}\) by introducing a linear regression model as follows:

\newtheorem{theorem}{Theorem}
\begin{theorem}
Given a discrete matrix \(\mathbf{H}\in 
\left \{ -1,1 \right \}^{n\times m}\) and one real nonzero matrix \(\mathbf{Z}\in \mathbb{R}^{m\times m}\), \(inf\left \{ \left \| \mathbf{H}-\mathbf{P}\mathbf{Z} \right \|_F^2+\left \|\mathbf{P}\mathbf{\Gamma}^{\frac{1}{2}}\right \|_F^2|\mathbf{P}\in \mathbb{R}^{n\times m}, \Gamma_{ii}> 0 \right \}=Tr\left \{ \mathbf{H}(\mathbf{I}_{m}-\mathbf{Z}^T(\mathbf{Z}\mathbf{Z}^T+\mathbf{\Gamma})^{-1}\mathbf{Z})\mathbf{H}^T \right \}\), where \(\mathbf{\Gamma}\in \mathbb{R}^{m\times m}\) is a diagonal matrix and \(\mathbf{I}_{m}\in \mathbb{R}^{m\times m}\) is an identity matrix.
\end{theorem}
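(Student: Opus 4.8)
The statement is a regularized (ridge-type) least-squares identity, so I would treat $\mathbf{\Gamma}$ as a fixed diagonal matrix with $\Gamma_{ii}>0$ (it must be fixed, since it appears on the right-hand side) and read the infimum as being taken over $\mathbf{P}\in\mathbb{R}^{n\times m}$ alone. The plan is to show that
\[
f(\mathbf{P}) = \|\mathbf{H}-\mathbf{P}\mathbf{Z}\|_F^2 + \|\mathbf{P}\mathbf{\Gamma}^{\frac{1}{2}}\|_F^2
\]
is a strictly convex quadratic in $\mathbf{P}$, compute its unique minimizer in closed form, and substitute it back. First I would rewrite each Frobenius norm through $\|\mathbf{M}\|_F^2 = Tr\{\mathbf{M}\mathbf{M}^T\}$ and use $\mathbf{\Gamma}^{\frac{1}{2}}(\mathbf{\Gamma}^{\frac{1}{2}})^T = \mathbf{\Gamma}$ (valid because $\mathbf{\Gamma}$ is diagonal with positive entries) to collect
\[
f(\mathbf{P}) = Tr\{\mathbf{H}\mathbf{H}^T\} - 2\,Tr\{\mathbf{P}\mathbf{Z}\mathbf{H}^T\} + Tr\{\mathbf{P}(\mathbf{Z}\mathbf{Z}^T+\mathbf{\Gamma})\mathbf{P}^T\}.
\]

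The core step is the first-order condition. Setting the gradient $\partial f/\partial\mathbf{P} = -2\mathbf{H}\mathbf{Z}^T + 2\mathbf{P}(\mathbf{Z}\mathbf{Z}^T+\mathbf{\Gamma})$ to zero yields the candidate
\[
\mathbf{P}^{\star} = \mathbf{H}\mathbf{Z}^T(\mathbf{Z}\mathbf{Z}^T+\mathbf{\Gamma})^{-1},
\]
which is well defined because $\mathbf{Z}\mathbf{Z}^T$ is positive semidefinite and $\mathbf{\Gamma}$ is positive definite, so $\mathbf{Z}\mathbf{Z}^T+\mathbf{\Gamma}\succ 0$ and hence invertible. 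Writing $\mathbf{Q}=\mathbf{Z}\mathbf{Z}^T+\mathbf{\Gamma}$ and substituting $\mathbf{P}^{\star}$ back, the cross term becomes $2\,Tr\{\mathbf{H}\mathbf{Z}^T\mathbf{Q}^{-1}\mathbf{Z}\mathbf{H}^T\}$, while the quadratic term collapses via $\mathbf{Q}^{-1}\mathbf{Q}\mathbf{Q}^{-1}=\mathbf{Q}^{-1}$ to $Tr\{\mathbf{H}\mathbf{Z}^T\mathbf{Q}^{-1}\mathbf{Z}\mathbf{H}^T\}$; each of these reductions uses only the cyclic invariance of the trace. The two contributions partially cancel, leaving
\[
f(\mathbf{P}^{\star}) = Tr\{\mathbf{H}\mathbf{H}^T\} - Tr\{\mathbf{H}\mathbf{Z}^T\mathbf{Q}^{-1}\mathbf{Z}\mathbf{H}^T\} = Tr\{\mathbf{H}(\mathbf{I}_m - \mathbf{Z}^T(\mathbf{Z}\mathbf{Z}^T+\mathbf{\Gamma})^{-1}\mathbf{Z})\mathbf{H}^T\},
\]
which is exactly the claimed right-hand side.

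The only point demanding genuine care is that the infimum is actually attained, so that its value equals $f(\mathbf{P}^{\star})$ rather than being an unreached lower bound. This follows once more from $\mathbf{\Gamma}\succ 0$: the regularizer $\|\mathbf{P}\mathbf{\Gamma}^{\frac{1}{2}}\|_F^2$ is coercive, so $f(\mathbf{P})\to\infty$ as $\|\mathbf{P}\|_F\to\infty$, and strict convexity then guarantees that the stationary point $\mathbf{P}^{\star}$ is the unique global minimizer. I expect this coercivity/convexity justification to be the main (and essentially only) obstacle, since everything else is routine trace algebra; I would also remark that the hypothesis ``$\mathbf{Z}$ nonzero'' is not in fact needed for invertibility, because $\mathbf{\Gamma}\succ 0$ already secures that $\mathbf{Z}\mathbf{Z}^T+\mathbf{\Gamma}$ is positive definite.
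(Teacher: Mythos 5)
Your proposal is correct and follows essentially the same route as the paper: the paper simply asserts that $\mathbf{P}^{\ast}=\mathbf{H}\mathbf{Z}^T(\mathbf{Z}\mathbf{Z}^T+\mathbf{\Gamma})^{-1}$ is "easy to verify" as the global minimizer and substitutes it back, which is exactly what you do, only with the first-order condition, the trace algebra, and the convexity/attainment justification spelled out. Your two side remarks (that $\mathbf{\Gamma}$ must be read as fixed since it appears on the right-hand side, and that positive definiteness of $\mathbf{\Gamma}$ alone makes $\mathbf{Z}\mathbf{Z}^T+\mathbf{\Gamma}$ invertible without needing $\mathbf{Z}\neq\mathbf{0}$) are both accurate clarifications of the paper's statement.
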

\begin{proof}
It is easy to verify that \(\mathbf{P}^{\ast}=\mathbf{H}\mathbf{Z}^T(\mathbf{Z}\mathbf{Z}^T+\mathbf{\Gamma})^{-1}\) is the global optimal solution to the problem \(\underset{\mathbf{P}}{min}\left \| \mathbf{H}-\mathbf{P}\mathbf{Z} \right \|_F^2+\left \|\mathbf{P}\mathbf{\Gamma}^{\frac{1}{2}}\right \|_F^2\). Substituting \(\mathbf{P}^{\ast}\) into the above objective, its minimum value is \(Tr\left \{ \mathbf{H} (\mathbf{I}_{m}-\mathbf{Z}^T(\mathbf{Z}\mathbf{Z}^T+\mathbf{\Gamma})^{-1}\mathbf{Z})\mathbf{H}^T \right \}\).
Therefore, Theorem 1 is proved.
\end{proof} 
Theorem 1 suggests that when \(\mathbf{H}_{l-1}^{T}\mathbf{H}_{l-1}=\gamma (\mathbf{I}_{m}-\mathbf{Z}^T(\mathbf{Z}\mathbf{Z}^T+\mathbf{\Gamma})^{-1}\mathbf{Z})\), the quadratic problem in Eq. (\ref{eqn:eHHl}) can be linearized as a regression type. We show the details in Theorem 2.
\begin{theorem}
When \(\mathbf{H}_{l-1}^{T}\mathbf{H}_{l-1}=\gamma (\mathbf{I}_{m}-\mathbf{Z}^T(\mathbf{Z}\mathbf{Z}^T+\mathbf{\Gamma})^{-1}\mathbf{Z})\), where \(\gamma\) is a constant, the problem in Eq. (\ref{eqn:eHHl}) can be reformulated as:
\begin{equation}
\begin{array}{cc}
\underset{\mathbf{H}, \mathbf{P}}{min}\ \gamma(\left \| \mathbf{H}-\mathbf{P}\mathbf{Z} \right \|_F^2+\left \| \mathbf{P} \mathbf{\Gamma}^{\frac{1}{2}} \right \|_F^2)\\-2\lambda Tr\left \{\mathbf{H}\mathbf{H}_{l-1}^{T}\mathbf{S}^T  \right \}, s.t.\ \mathbf{H}\in \left \{ -1,1 \right \}^{n\times m}.  
\end{array}
\label{eqn:rsh}
\end{equation}
\end{theorem}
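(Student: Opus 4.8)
The plan is to observe that Eq.~(\ref{eqn:eHHl}) is already linear in $\mathbf{H}$ except for the single quadratic form $Tr\{\mathbf{H}\mathbf{H}_{l-1}^{T}\mathbf{H}_{l-1}\mathbf{H}^{T}\}$, so the entire argument reduces to replacing that one term by the variational expression furnished by Theorem~1. First I would invoke the hypothesis $\mathbf{H}_{l-1}^{T}\mathbf{H}_{l-1}=\gamma(\mathbf{I}_{m}-\mathbf{Z}^{T}(\mathbf{Z}\mathbf{Z}^{T}+\mathbf{\Gamma})^{-1}\mathbf{Z})$ and substitute it into the quadratic term, pulling out the scalar $\gamma$ to obtain $\gamma\,Tr\{\mathbf{H}(\mathbf{I}_{m}-\mathbf{Z}^{T}(\mathbf{Z}\mathbf{Z}^{T}+\mathbf{\Gamma})^{-1}\mathbf{Z})\mathbf{H}^{T}\}$. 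This is precisely the right-hand side of the identity established in Theorem~1, so I may rewrite it as the attained minimum $\gamma\,\underset{\mathbf{P}}{min}\,(\|\mathbf{H}-\mathbf{P}\mathbf{Z}\|_{F}^{2}+\|\mathbf{P}\mathbf{\Gamma}^{1/2}\|_{F}^{2})$, the minimizer being $\mathbf{P}^{\ast}=\mathbf{H}\mathbf{Z}^{T}(\mathbf{Z}\mathbf{Z}^{T}+\mathbf{\Gamma})^{-1}$ as recorded in that proof.

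The second step is to fold this inner minimization into the outer one over $\mathbf{H}$. Since the remaining term $-2\lambda Tr\{\mathbf{H}\mathbf{H}_{l-1}^{T}\mathbf{S}\}$ is free of $\mathbf{P}$, for every fixed $\mathbf{H}$ the objective of Eq.~(\ref{eqn:eHHl}) equals the $\mathbf{P}$-minimum of $\gamma(\|\mathbf{H}-\mathbf{P}\mathbf{Z}\|_{F}^{2}+\|\mathbf{P}\mathbf{\Gamma}^{1/2}\|_{F}^{2})-2\lambda Tr\{\mathbf{H}\mathbf{H}_{l-1}^{T}\mathbf{S}\}$; minimizing the latter jointly over $\mathbf{H}$ and $\mathbf{P}$ therefore yields the same optimal value and the same optimal $\mathbf{H}$ as Eq.~(\ref{eqn:eHHl}). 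This is the elementary fact $\underset{\mathbf{H}}{min}\,[f(\mathbf{H})+\underset{\mathbf{P}}{min}\,g(\mathbf{H},\mathbf{P})]=\underset{\mathbf{H},\mathbf{P}}{min}\,[f(\mathbf{H})+g(\mathbf{H},\mathbf{P})]$, and the discrete constraint $\mathbf{H}\in\{-1,1\}^{n\times m}$ is carried along unchanged because it never interacts with $\mathbf{P}$. Finally, since $\mathbf{S}$ is a pairwise (hence symmetric) matrix, $\mathbf{S}=\mathbf{S}^{T}$, so the linear term may be written as $Tr\{\mathbf{H}\mathbf{H}_{l-1}^{T}\mathbf{S}^{T}\}$, matching Eq.~(\ref{eqn:rsh}) verbatim.

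I expect no genuine analytic obstacle here: the heavy lifting is done entirely by Theorem~1, and the rest is bookkeeping. The one point that deserves explicit care is the legitimacy of interchanging the two minimizations — I would state clearly that it holds because $\mathbf{P}$ appears only inside the quadratic block and the Theorem~1 infimum is actually \emph{attained} (with $\Gamma_{ii}>0$ guaranteeing invertibility of $\mathbf{Z}\mathbf{Z}^{T}+\mathbf{\Gamma}$), so no limiting argument is needed and the equivalence is exact rather than merely an inequality. It is also worth flagging, though it is not part of the proof, that the statement presupposes the factorization $\mathbf{H}_{l-1}^{T}\mathbf{H}_{l-1}=\gamma(\mathbf{I}_{m}-\mathbf{Z}^{T}(\mathbf{Z}\mathbf{Z}^{T}+\mathbf{\Gamma})^{-1}\mathbf{Z})$ is realizable for a suitable choice of $\gamma$, $\mathbf{Z}$ and $\mathbf{\Gamma}$; establishing that realizability is a separate matter from the reformulation claimed here.
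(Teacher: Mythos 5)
Your proposal is correct and follows essentially the same route as the paper: invoke Theorem~1 under the stated factorization of $\mathbf{H}_{l-1}^{T}\mathbf{H}_{l-1}$, substitute the attained minimizer $\mathbf{P}^{\ast}=\mathbf{H}\mathbf{Z}^{T}(\mathbf{Z}\mathbf{Z}^{T}+\mathbf{\Gamma})^{-1}$, and conclude that the two objectives have equal value. Your write-up is in fact more careful than the paper's one-line proof, since you make explicit both the nested-to-joint minimization step $\underset{\mathbf{H}}{min}\,[f(\mathbf{H})+\underset{\mathbf{P}}{min}\,g(\mathbf{H},\mathbf{P})]=\underset{\mathbf{H},\mathbf{P}}{min}\,[f(\mathbf{H})+g(\mathbf{H},\mathbf{P})]$ and the use of $\mathbf{S}=\mathbf{S}^{T}$ to reconcile the linear term, neither of which the paper spells out.
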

\begin{proof}
Based on the condition \(\mathbf{H}_{l-1}^{T}\mathbf{H}_{l-1}=\gamma (\mathbf{I}_{m}-\mathbf{Z}^T(\mathbf{Z}\mathbf{Z}^T+\mathbf{\Gamma})^{-1}\mathbf{Z})\) and Theorem 1, substituting \(\mathbf{P}^{\ast}=\mathbf{H}\mathbf{Z}^T(\mathbf{Z}\mathbf{Z}^T+ \mathbf{\Gamma})^{-1}\) into the objective of Eq. (\ref{eqn:rsh}), whose objective value is equal to that of Eq. (\ref{eqn:eHHl}), Therefore, Theorem 2 is proved.
\end{proof}
Since \(Tr\left \{\mathbf{H}^T\mathbf{H}\right \}=mn\), the problem in Eq. (\ref{eqn:rsh}) equals:
\begin{equation}
\begin{array}{ccc}
\underset{\mathbf{H},\mathbf{P}}{max}\, Tr\left \{\mathbf{H}(\frac{\lambda}{\gamma}\mathbf{H}_{l-1}^{T}\mathbf{S}+\mathbf{Z}^T\mathbf{P}^T)\right \}\\-\frac{1}{2}(\left \| \mathbf{P}\mathbf{\Gamma}^{\frac{1}{2}} \right\|_{F}^2 +Tr\left \{\mathbf{P}\mathbf{Z}\mathbf{Z}^T\mathbf{P}^T\right \}),\\ 
\ s.t.\ \mathbf{H}\in \left \{ -1,1 \right \}^{n\times m},
\end{array}
\label{eqn:lrsh}
\end{equation}
which is a linear problem in term of \(\mathbf{H}\).
%In Eq. (\ref{eqn:lrsh}), when \(\mathbf{P}\) is known, \(\mathbf{H}=sign(\mathbf{Z}^T\mathbf{P}^T +\frac{\lambda}{\gamma}\mathbf{H}^{lT}\mathbf{S}^T)\). Obviously, if \(\mathbf{A}\) satisfies the condition \(\mathbf{X}\mathbf{A}^T= \mathbf{Z}^T\mathbf{P}^T +\frac{\lambda}{\gamma}\mathbf{H}^{lT}\mathbf{S}^T\), it is one global solution of Eq. (\ref{eqn:lrsh}). In practice, since \(d<<n\), \(\mathbf{X}^T\mathbf{X}\) is usually a positive-definite matrix. Hence there exists \(\mathbf{A}=(\mathbf{P}\mathbf{Z} +\frac{\lambda}{\gamma}\mathbf{S}\mathbf{H}^{l}))\mathbf{X}(\mathbf{X}^T\mathbf{X})^{-1}\) satisfying the condition.
\begin{figure*}[ht]
\center
\subfigure[@ \(\left \| \mathbf{H}_{l}-\mathbf{H}_{l-1} \right \|_{F}\)]{\includegraphics[trim={0em 0em 0em 0em}, clip, width=0.28\textwidth]{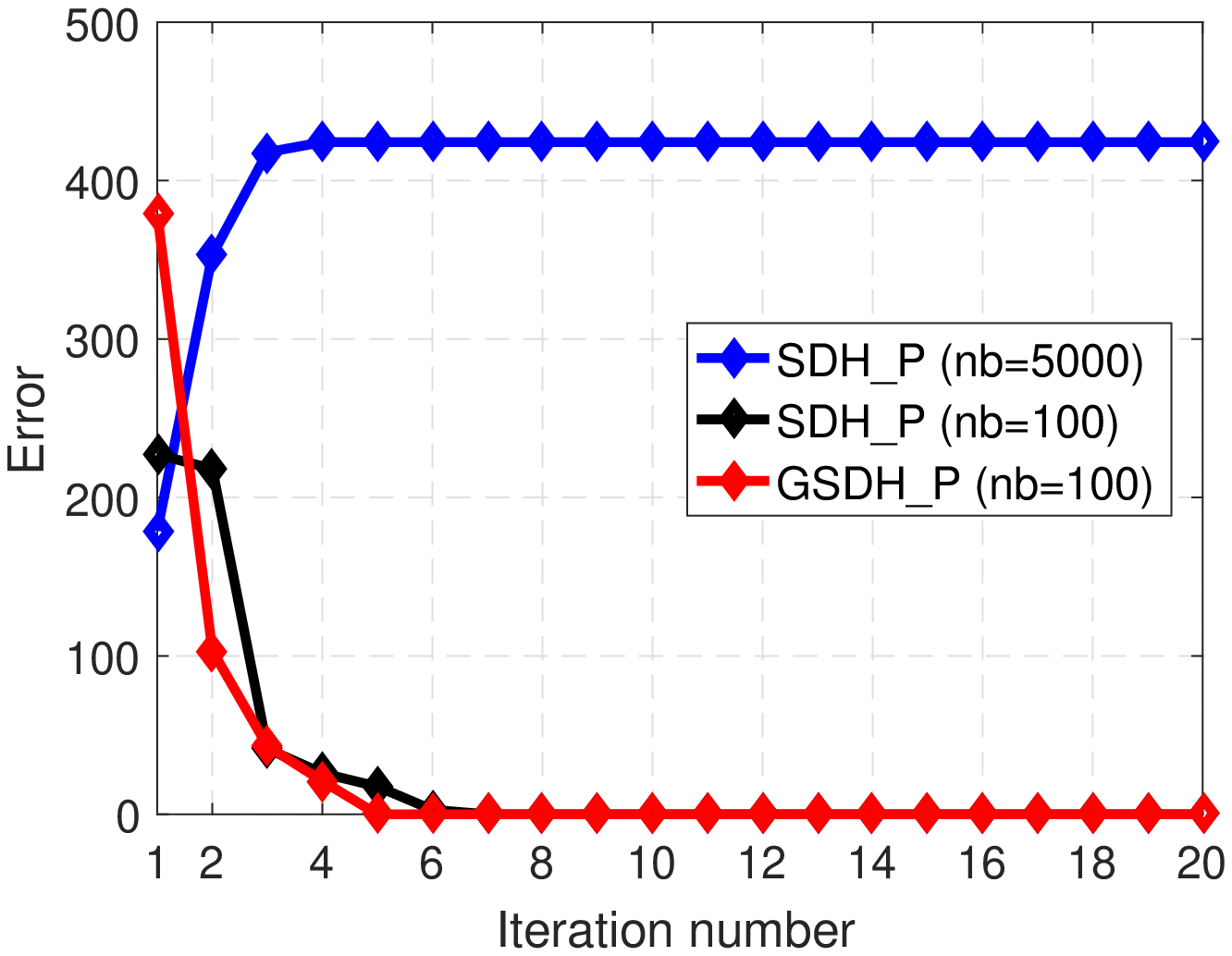}}
\subfigure[@ \(\left \| \mathbf{H}\mathbf{H}^T-\lambda\mathbf{S}\right \|_{F}\)]{\includegraphics[trim={0em 0em 0em 0em}, clip, width=0.28\textwidth]{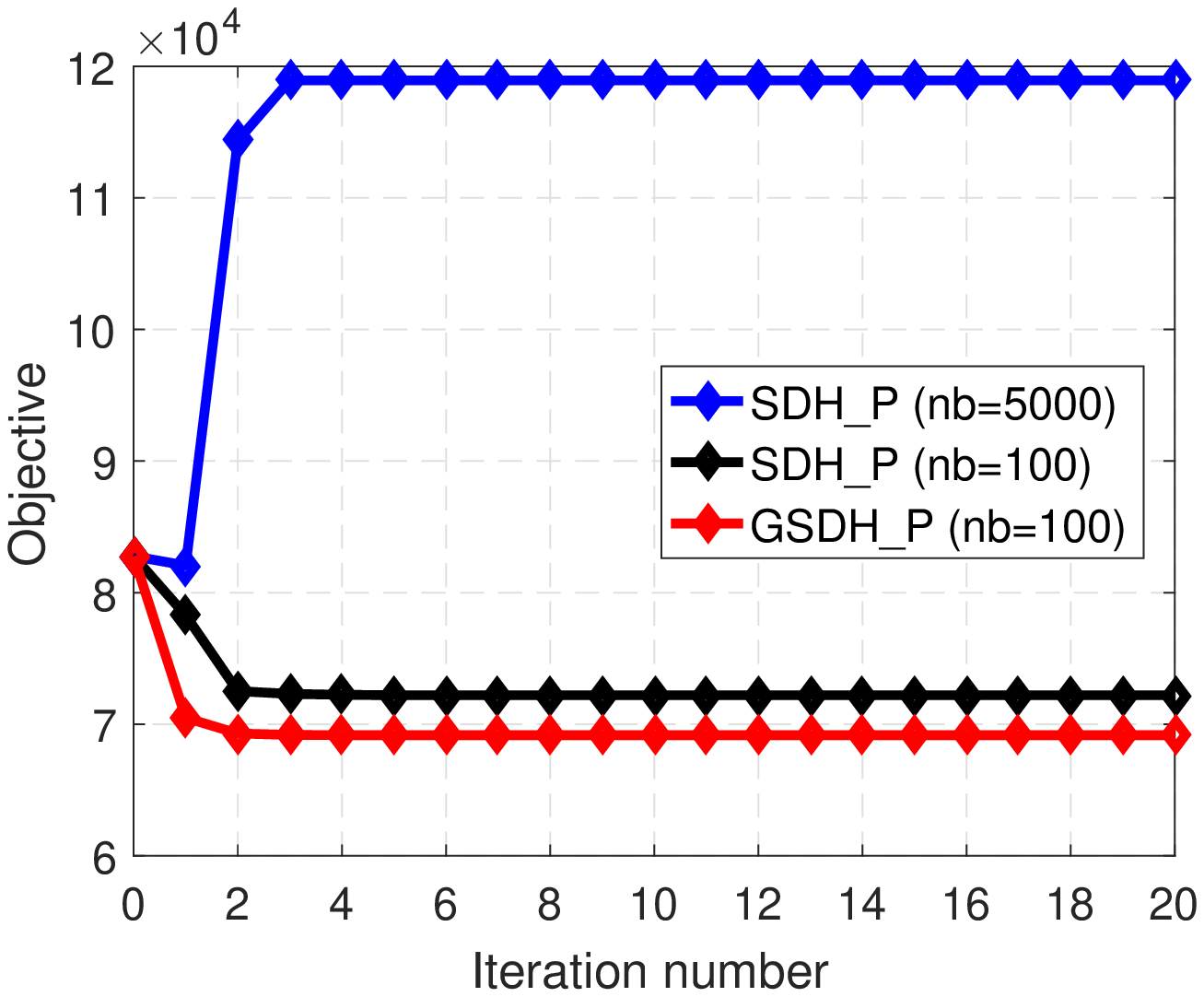}}
\subfigure[@ Top 500 ]{\includegraphics[trim={0em 0em 0em 0em}, clip, width=0.28\textwidth]{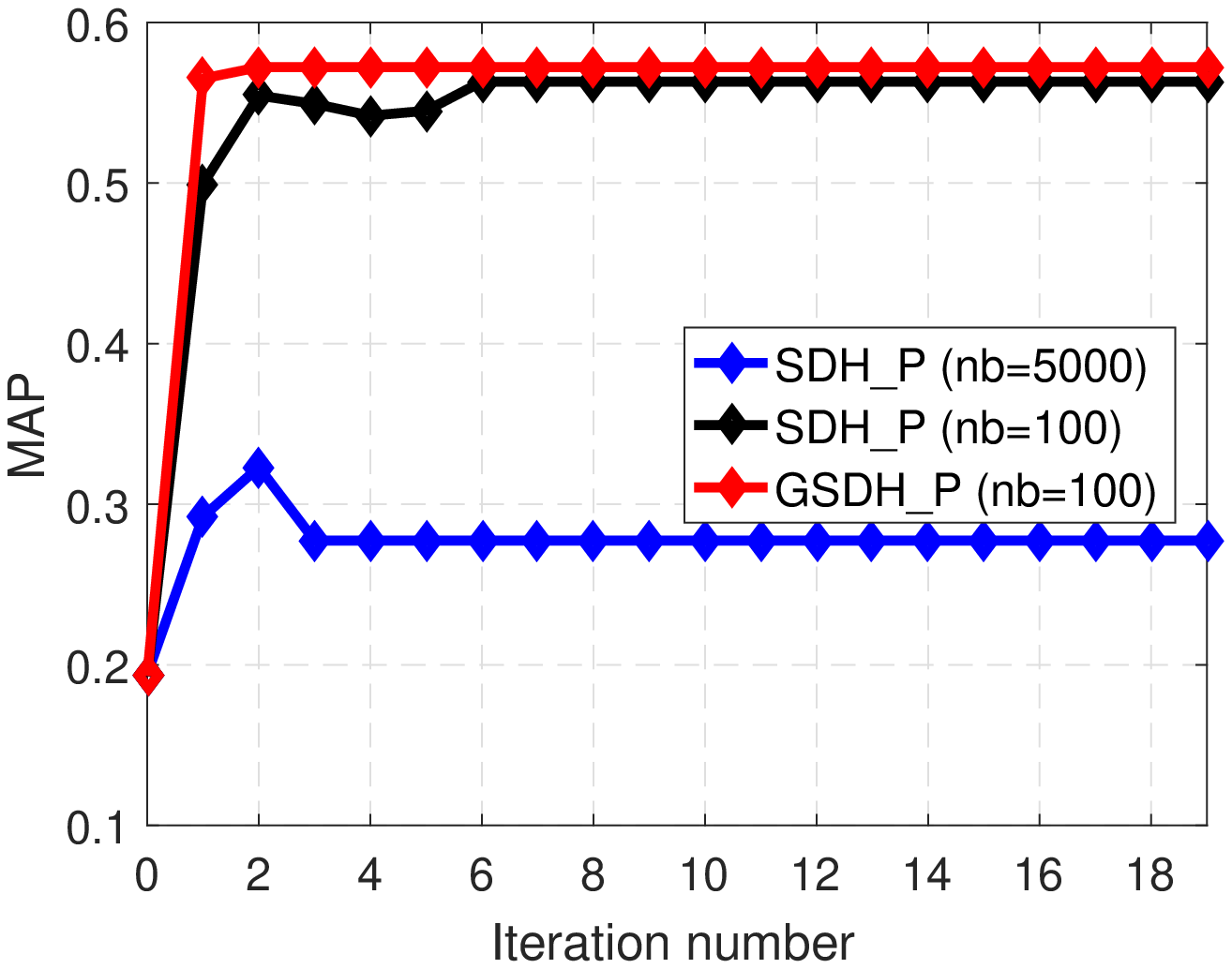}}
\vspace{-0.7em}
\caption{The error, objective and mean average precision (MAP) of SDH\_P and GSDH\_P with different number of iterations. (In total we select 5K training and 1K query images from the CIFAR10 database, and employ all training samples as anchors. In (b) and (c), when the number of iteration is 0, the results are achieved by using the projection matrix calculated in the initialization step.)}
\label{figobj}
\vspace{-1em}
\end{figure*}

Next, we demonstrate that there exists \(\gamma\) and \(\mathbf{Z}\) such that \(\mathbf{H}_{l-1}^{T}\mathbf{H}_{l-1}=\gamma (\mathbf{I}_{m}-\mathbf{Z}^T(\mathbf{Z}\mathbf{Z}^T+\mathbf{\Gamma})^{-1}\mathbf{Z})\). The details are shown in Theorem 3.
\begin{theorem}
Suppose that a full rank matrix \(\mathbf{H}_{l-1}^{T}\mathbf{H}_{l-1}=\mathbf{U}\mathbf{\Lambda}^2\mathbf{U}^T\), where \(\mathbf{\Lambda}\in \mathbb{R}^{m\times m}\) is a positive diagonal matrix and \(\mathbf{U}^T\mathbf{U}=\mathbf{U}\mathbf{U}^T=\mathbf{I}_{m}\). If \(\gamma \geq \Lambda_{ii}^2\) and \(\Gamma_{ii}>0\), \((1\leq i\leq m)\) and a real nonzero matrix \(\mathbf{Z}=\mathbf{V} \mathbf{\Delta} \mathbf{U}^T\) satisfies the conditions: \(\mathbf{V}^T\mathbf{V}=\mathbf{V}\mathbf{V}^T=\mathbf{I}_{m}\), and \(\mathbf{\Delta}\in \mathbb{R}^{m\times m}\) is a non-negative real diagonal matrix with the \(i\)-th diagonal element being \(\Delta_{ii}=\sqrt{\frac{\gamma\Gamma_{ii}} {\Lambda_{ii}^2}-\Gamma_{ii}}\).
\end{theorem}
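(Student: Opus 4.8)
The plan is to read Theorem~3 as an existence statement that certifies the hypothesis of Theorem~2: given the eigendecomposition $\mathbf{H}_{l-1}^{T}\mathbf{H}_{l-1}=\mathbf{U}\mathbf{\Lambda}^{2}\mathbf{U}^{T}$, I would exhibit explicit $\gamma$, $\mathbf{\Gamma}$ and $\mathbf{Z}$ for which $\gamma(\mathbf{I}_{m}-\mathbf{Z}^{T}(\mathbf{Z}\mathbf{Z}^{T}+\mathbf{\Gamma})^{-1}\mathbf{Z})=\mathbf{U}\mathbf{\Lambda}^{2}\mathbf{U}^{T}$, and then simply verify this identity by direct substitution of the prescribed $\mathbf{Z}=\mathbf{V}\mathbf{\Delta}\mathbf{U}^{T}$. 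Since $\mathbf{H}_{l-1}$ is assumed full rank, $\mathbf{\Lambda}$ has strictly positive diagonal, so every division by $\Lambda_{ii}^{2}$ below is legitimate.

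The computation proceeds in a few short steps. First, using $\mathbf{U}^{T}\mathbf{U}=\mathbf{I}_{m}$ I get $\mathbf{Z}\mathbf{Z}^{T}=\mathbf{V}\mathbf{\Delta}^{2}\mathbf{V}^{T}$, so that $\mathbf{Z}\mathbf{Z}^{T}+\mathbf{\Gamma}$ is diagonalized in the $\mathbf{V}$ basis provided $\mathbf{\Gamma}$ commutes with $\mathbf{V}$ (I make this hold by choosing $\mathbf{V}$ compatible with the diagonal $\mathbf{\Gamma}$, e.g.\ $\mathbf{V}=\mathbf{I}_{m}$, which already yields a valid nonzero $\mathbf{Z}$). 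Then $(\mathbf{Z}\mathbf{Z}^{T}+\mathbf{\Gamma})^{-1}=\mathbf{V}(\mathbf{\Delta}^{2}+\mathbf{\Gamma})^{-1}\mathbf{V}^{T}$, and the orthogonal factors telescope to give $\mathbf{Z}^{T}(\mathbf{Z}\mathbf{Z}^{T}+\mathbf{\Gamma})^{-1}\mathbf{Z}=\mathbf{U}\,\mathbf{\Delta}^{2}(\mathbf{\Delta}^{2}+\mathbf{\Gamma})^{-1}\mathbf{U}^{T}$. Hence $\mathbf{I}_{m}-\mathbf{Z}^{T}(\mathbf{Z}\mathbf{Z}^{T}+\mathbf{\Gamma})^{-1}\mathbf{Z}=\mathbf{U}\,\mathbf{\Gamma}(\mathbf{\Delta}^{2}+\mathbf{\Gamma})^{-1}\mathbf{U}^{T}$, so multiplying by $\gamma$ and comparing with $\mathbf{U}\mathbf{\Lambda}^{2}\mathbf{U}^{T}$ reduces the whole claim to the scalar identities $\gamma\Gamma_{ii}/(\Delta_{ii}^{2}+\Gamma_{ii})=\Lambda_{ii}^{2}$ for each $i$. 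Solving for $\Delta_{ii}^{2}$ gives $\Delta_{ii}^{2}=\gamma\Gamma_{ii}/\Lambda_{ii}^{2}-\Gamma_{ii}$, which is exactly the prescribed $\Delta_{ii}=\sqrt{\gamma\Gamma_{ii}/\Lambda_{ii}^{2}-\Gamma_{ii}}$; as a shortcut one may also replace the middle steps by the Woodbury identity $\mathbf{I}_{m}-\mathbf{Z}^{T}(\mathbf{Z}\mathbf{Z}^{T}+\mathbf{\Gamma})^{-1}\mathbf{Z}=(\mathbf{I}_{m}+\mathbf{Z}^{T}\mathbf{\Gamma}^{-1}\mathbf{Z})^{-1}$ and then diagonalize.

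Finally I would check well-posedness: the radicand factors as $\Gamma_{ii}(\gamma/\Lambda_{ii}^{2}-1)$, which the hypotheses $\gamma\ge\Lambda_{ii}^{2}$ and $\Gamma_{ii}>0$ force to be nonnegative, so each $\Delta_{ii}$ is a well-defined nonnegative real and $\mathbf{Z}$ is a genuine real (and nonzero) matrix, while $\mathbf{\Delta}^{2}+\mathbf{\Gamma}$ is invertible so every inverse written above exists.

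The step I expect to be the main obstacle is not the algebra but pinning down the role of $\mathbf{V}$: the telescoping of the orthogonal factors only goes through when $\mathbf{V}^{T}\mathbf{\Gamma}\mathbf{V}$ stays diagonal, i.e.\ when $\mathbf{V}$ commutes with $\mathbf{\Gamma}$. For a general orthogonal $\mathbf{V}$ the factor $\mathbf{V}^{T}\mathbf{\Gamma}\mathbf{V}$ is full and the target identity fails, so the existence argument must select a $\mathbf{V}$ compatible with $\mathbf{\Gamma}$ (an arbitrary $\mathbf{V}$ works only in the degenerate case $\mathbf{\Gamma}=c\,\mathbf{I}_{m}$). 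Making this dependence explicit, rather than treating $\mathbf{V}$ as entirely free, is the one point I would be careful to state precisely.
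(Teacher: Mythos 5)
Your proof is correct and follows essentially the same route as the paper: represent $\mathbf{Z}=\mathbf{V}\mathbf{\Delta}\mathbf{U}^{T}$ via the SVD, reduce the target identity $\gamma\left(\mathbf{I}_{m}-\mathbf{Z}^{T}(\mathbf{Z}\mathbf{Z}^{T}+\mathbf{\Gamma})^{-1}\mathbf{Z}\right)=\mathbf{U}\mathbf{\Lambda}^{2}\mathbf{U}^{T}$ to the scalar equations $\gamma\Gamma_{ii}(\Delta_{ii}^{2}+\Gamma_{ii})^{-1}=\Lambda_{ii}^{2}$, and solve for $\Delta_{ii}$, with nonnegativity of the radicand guaranteed by $\gamma\geq\Lambda_{ii}^{2}$ and $\Gamma_{ii}>0$. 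The one place you go beyond the paper is the commutation caveat you flag at the end: the paper asserts $\mathbf{I}_{m}-\mathbf{Z}^{T}(\mathbf{Z}\mathbf{Z}^{T}+\mathbf{\Gamma})^{-1}\mathbf{Z}=\mathbf{U}_{\mathbf{Z}}(\mathbf{I}_{m}-\mathbf{\Delta}(\mathbf{\Delta}^{2}+\mathbf{\Gamma})^{-1}\mathbf{\Delta})\mathbf{U}_{\mathbf{Z}}^{T}$ for an arbitrary left factor $\mathbf{V}$, which in fact requires $\mathbf{V}^{T}\mathbf{\Gamma}\mathbf{V}=\mathbf{\Gamma}$; your explicit choice of $\mathbf{V}$ commuting with $\mathbf{\Gamma}$ (e.g.\ $\mathbf{V}=\mathbf{I}_{m}$) closes a gap the paper leaves implicit, and since the theorem is an existence claim this restriction costs nothing.
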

\begin{proof}
Based on singular value decomposition (SVD), there exist matrices \(\mathbf{V}\) and \(\mathbf{U}_{\mathbf{Z}}\), satisfying the conditions \(\mathbf{V}\mathbf{V}^T=\mathbf{V}^T\mathbf{V}=\mathbf{I}_{m}\) and \(\mathbf{U}_{\mathbf{Z}}\mathbf{U}_{\mathbf{Z}}^T=\mathbf{U}_{\mathbf{Z}}^T\mathbf{U}_{\mathbf{Z}}=\mathbf{I}_{m}\), such that a real nonzero matrix \(\mathbf{Z}\) is represented by \(\mathbf{Z}=\mathbf{V}\mathbf{\Delta}\mathbf{U}_\mathbf{Z}^T\), where \(\mathbf{\Delta}\) is a non-negative real diagonal matrix.  Then \(\mathbf{I}_{m}-\mathbf{Z}^T(\mathbf{Z}\mathbf{Z}^T+\mathbf{\Gamma})^{-1}\mathbf{Z}=\mathbf{U}_{\mathbf{Z}}(\mathbf{I}_{m}-\mathbf{\Delta}(\mathbf{\Delta}^2+\mathbf{\Gamma})^{-1}\mathbf{\Delta})\mathbf{U}_{\mathbf{Z}}^T\). Note that when the vectors in \(\mathbf{V}\) and \(\mathbf{U}_{\mathbf{Z}}\) corresponds to the zero diagonal elements, they can be constructed by employing a Gram-Schmidt process such that \(\mathbf{V}\mathbf{V}^T=\mathbf{V}^T\mathbf{V}=\mathbf{I}_{m}\) and \(\mathbf{U}_{\mathbf{Z}}\mathbf{U}_{\mathbf{Z}}^T=\mathbf{U}_{\mathbf{Z}}^T\mathbf{U}_{\mathbf{Z}}=\mathbf{I}_{m}\), and these constructed vectors are not unique.

Since \(\mathbf{H}_{l-1}^{T}\mathbf{H}_{l-1}=\mathbf{U}\mathbf{\Lambda}^2\mathbf{U}^T\) and \(\mathbf{I}_{m}-\mathbf{Z}^T(\mathbf{Z}\mathbf{Z}^T+\mathbf{\Gamma})^{-1}\mathbf{Z}=\frac{\mathbf{H}_{l-1}^T\mathbf{H}_{l-1}}{\gamma}\), it can have \(\gamma\Gamma_{ii}(\Delta_{ii}^2+\Gamma_{ii})^{-1}=\Lambda_{ii}^2\) when \(\mathbf{U}_{\mathbf{Z}}=\mathbf{U}\). Since there exists \(0 < \Gamma_{ii}(\Delta_{ii}^2+\Gamma_{ii})^{-1}\leq 1\), \(\gamma\) should satisfy: \(\gamma\geq \Lambda_{ii}^2\) and \(\Gamma_{ii}>0\). Additionally, based on \(\gamma\Gamma_{ii}(\Delta_{ii}^2+\Gamma_{ii})^{-1}=\Lambda_{ii}^2\), there exists \(\Delta_{ii}=\sqrt{\frac{\gamma \Gamma_{ii}} {\Lambda_{ii}^2}-\Gamma_{ii}}\).
Therefore, Theorem 3 is proved.
\end{proof}

\(\mathbf{H}_{l-1}^T\mathbf{H}_{l-1}\) is usually a positive-definite matrix thanks to \(m<<n\), leading to  \(\Lambda_{ii}>0\). Based on Theorem 3, it is easy to construct a real nonzero matrix \(\mathbf{Z}\). Since \(\gamma\geq \Lambda_{ii}^2\), we set \(\gamma=\underset{i}{max}\ \Lambda_{ii}^2+\beta\) for simplicity, where \(1\leq i\leq m\) and \(\beta\geq 0\) is a constant.  Then Eq. (\ref{eqn:lrsh}) can be solved by alternatively updating \(\mathbf{H}\), \(\mathbf{Z}\) and \(\mathbf{P}\). Actually, we can obtain \(\mathbf{H}\) by using an efficient algorithm in Theorem 4 that does not need to compute the matrices \(\mathbf{Z}\) and \(\mathbf{P}\).
\begin{theorem}
For the inner \(t\)-th iteration embedded in the outer \(l\)-th iteration, the problem in Eq. (\ref{eqn:lrsh}) can be reformulated as the following problem:
\begin{equation}
\vspace{-0.2em}
\begin{array}{cc}
\underset{\mathbf{H}_{l}}{max}\,Tr\left\{\mathbf{H}_{l}((\gamma \mathbf{I}_{m}-\mathbf{H}_{l-1}^{T}\mathbf{H}_{l-1})\mathbf{H}_{l_{t-1}}^{T} \right. \\ \left.
+\lambda\mathbf{H}_{l-1}^{T}\mathbf{S})\right \},
s.t. \ \mathbf{H}_{l}\in \left \{ -1,1 \right \}^{n\times m},  
\end{array} 
\label{eqn:al}
\vspace{-0.2em}
\end{equation}
where \(\mathbf{H}_{l}\in \mathbb{R}^{n\times m}\) denotes binary codes \(\mathbf{H}\) in the outer \(l\)-th iteration, and \(\mathbf{H}_{l_{t-1}}\) represents the obtained binary codes \(\mathbf{H}\) at the inner \(t-1\)-th iteration embedded in the outer \(l\)-th iteration.
\end{theorem}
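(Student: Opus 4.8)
The plan is to derive Eq.~(\ref{eqn:al}) as the $\mathbf{H}$-update of an alternating maximization applied to Eq.~(\ref{eqn:lrsh}), in which $\mathbf{P}$ and $\mathbf{H}$ are optimized in turn. First I would fix $\mathbf{H}$ at the previous inner iterate $\mathbf{H}_{l_{t-1}}$ and solve the inner regression over $\mathbf{P}$. By Theorem 1 this has the closed-form optimizer $\mathbf{P}^{\ast}=\mathbf{H}_{l_{t-1}}\mathbf{Z}^T(\mathbf{Z}\mathbf{Z}^T+\mathbf{\Gamma})^{-1}$. The key observation is that, once $\mathbf{P}$ is frozen at $\mathbf{P}^{\ast}$, the two purely quadratic-in-$\mathbf{P}$ terms $\|\mathbf{P}\mathbf{\Gamma}^{\frac{1}{2}}\|_F^2$ and $Tr\{\mathbf{P}\mathbf{Z}\mathbf{Z}^T\mathbf{P}^T\}$ become constants with respect to $\mathbf{H}$, so they drop out of the subsequent $\mathbf{H}$-maximization; what remains is the bilinear coupling term $Tr\{\mathbf{H}\mathbf{Z}^T\mathbf{P}^{\ast T}\}$ together with the linear term $\tfrac{\lambda}{\gamma}Tr\{\mathbf{H}\mathbf{H}_{l-1}^T\mathbf{S}\}$.

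Next I would simplify the coupling term to eliminate the auxiliary matrices. Transposing the closed form and using the symmetry of $\mathbf{Z}\mathbf{Z}^T+\mathbf{\Gamma}$ gives $\mathbf{P}^{\ast T}=(\mathbf{Z}\mathbf{Z}^T+\mathbf{\Gamma})^{-1}\mathbf{Z}\mathbf{H}_{l_{t-1}}^T$, so that $\mathbf{Z}^T\mathbf{P}^{\ast T}=\mathbf{Z}^T(\mathbf{Z}\mathbf{Z}^T+\mathbf{\Gamma})^{-1}\mathbf{Z}\,\mathbf{H}_{l_{t-1}}^T$. The crucial step is to invoke the identity established inside the proof of Theorem 3, namely $\mathbf{I}_{m}-\mathbf{Z}^T(\mathbf{Z}\mathbf{Z}^T+\mathbf{\Gamma})^{-1}\mathbf{Z}=\tfrac{1}{\gamma}\mathbf{H}_{l-1}^T\mathbf{H}_{l-1}$, which rearranges to $\mathbf{Z}^T(\mathbf{Z}\mathbf{Z}^T+\mathbf{\Gamma})^{-1}\mathbf{Z}=\tfrac{1}{\gamma}(\gamma\mathbf{I}_{m}-\mathbf{H}_{l-1}^T\mathbf{H}_{l-1})$ and hence $\mathbf{Z}^T\mathbf{P}^{\ast T}=\tfrac{1}{\gamma}(\gamma\mathbf{I}_{m}-\mathbf{H}_{l-1}^T\mathbf{H}_{l-1})\mathbf{H}_{l_{t-1}}^T$. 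This collapses the whole $\mathbf{Z},\mathbf{P}$-dependence into a matrix expressed only through $\mathbf{H}_{l-1}$ and $\mathbf{H}_{l_{t-1}}$.

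Finally I would collect terms. Writing the free variable as $\mathbf{H}_l$, the $\mathbf{H}$-subproblem reads $\max_{\mathbf{H}_l}\tfrac{1}{\gamma}Tr\{\mathbf{H}_l(\gamma\mathbf{I}_{m}-\mathbf{H}_{l-1}^T\mathbf{H}_{l-1})\mathbf{H}_{l_{t-1}}^T\}+\tfrac{\lambda}{\gamma}Tr\{\mathbf{H}_l\mathbf{H}_{l-1}^T\mathbf{S}\}$ over $\mathbf{H}_l\in\{-1,1\}^{n\times m}$. Since $\gamma>0$, multiplying the objective by $\gamma$ leaves the maximizer unchanged, and merging the two traces (using that $\mathbf{S}$ is symmetric, so the $\mathbf{S}$ and $\mathbf{S}^T$ appearing in Eqs.~(\ref{eqn:rsh}) and (\ref{eqn:lrsh}) agree) yields precisely Eq.~(\ref{eqn:al}).

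The main obstacle is recognizing that the $\mathbf{P}$-quadratic terms become inert constants once the inner regression is solved, and then executing the single substitution from Theorem 3 that reduces $\mathbf{Z}^T(\mathbf{Z}\mathbf{Z}^T+\mathbf{\Gamma})^{-1}\mathbf{Z}$ to $\tfrac{1}{\gamma}(\gamma\mathbf{I}_{m}-\mathbf{H}_{l-1}^T\mathbf{H}_{l-1})$; the remainder is routine trace bookkeeping and a positive rescaling that does not alter the discrete maximizer.
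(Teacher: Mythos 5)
Your proposal is correct and follows essentially the same route as the paper's proof: fix $\mathbf{H}$ at $\mathbf{H}_{l_{t-1}}$, substitute the closed-form $\mathbf{P}^{\ast}=\mathbf{H}_{l_{t-1}}\mathbf{Z}^T(\mathbf{Z}\mathbf{Z}^T+\mathbf{\Gamma})^{-1}$ back in (dropping the now-constant $\mathbf{P}$-terms), and replace $\mathbf{Z}^T(\mathbf{Z}\mathbf{Z}^T+\mathbf{\Gamma})^{-1}\mathbf{Z}$ by $\tfrac{1}{\gamma}(\gamma\mathbf{I}_{m}-\mathbf{H}_{l-1}^T\mathbf{H}_{l-1})$ before rescaling by $\gamma$. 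The only cosmetic difference is that the paper cites this identity as coming from Theorem 2 and its proof (where it is the standing assumption) rather than from Theorem 3, and your explicit justification of the dropped quadratic terms and the $\mathbf{S}$ versus $\mathbf{S}^T$ symmetry is more careful than the paper's terse substitution.
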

\begin{proof}
In Eq. (\ref{eqn:lrsh}), for the inner \(t\)-th iteration embedded in the outer \(l\)-th iteration, fixing \(\mathbf{H}\) as \(\mathbf{H}_{l_{t-1}}\), it is easy to obtain \(\mathbf{P}_{l_{t}}=\mathbf{H}_{l_{t-1}}\mathbf{Z}^T(\mathbf{Z}\mathbf{Z}^T+\mathbf{\Gamma})^{-1}\). Substituting \(\mathbf{P}_{l_{t}}\) into Eq. (\ref{eqn:lrsh}), it becomes:
\begin{equation}
\begin{array}{ccc}
\underset{\mathbf{H}_{l}}{max}\ Tr\left \{\mathbf{H}_{l}\mathbf{Z}^T (\mathbf{Z}\mathbf{Z}^T+\mathbf{\Gamma})^{-1}\mathbf{Z} \mathbf{H}_{l_{t-1}}^{T}\right \} \\
+\frac{\lambda}{\gamma} Tr\left \{\mathbf{H}_{l}\mathbf{H}_{l-1}^{T}\mathbf{S}\right \},\ s.t. \ \mathbf{H}_{l}\in \left \{ -1,1 \right \}^{n\times m}. 
\end{array}
\label{eqn:ha}
\end{equation}
Based on Theorem 2 and its proof, there have \(\gamma \mathbf{I}_{m}-\mathbf{H}_{l-1}^{T}\mathbf{H}_{l-1}=\gamma \mathbf{Z}^T (\mathbf{Z}\mathbf{Z}^T+\mathbf{\Gamma})^{-1}\mathbf{Z}\). Substituting it into Eq. (\ref{eqn:ha}), whose optimization problem becomes Eq. (\ref{eqn:al}). Therefore, Theorem 4 is proved. 
\end{proof} 

For the inner loop embedded in the outer \(l\)-th iteration, there are many choices for the initialization value \(\mathbf{H}_{l_{0}}\). Here, we set \(\mathbf{H}_{l_{0}}=\mathbf{H}_{l-1}\). At the \(t\)-th iteration, the global solution of Eq. (\ref{eqn:al}) is \(\mathbf{H}_{l_{t}}=sgn(\lambda\mathbf{S}^T\mathbf{H}_{l-1}+\mathbf{H}_{l_{t-1}}(\gamma \mathbf{I}_{m}-\mathbf{H}_{l-1}^T\mathbf{H}_{l-1}))\). Additionally, for the inner loop, both \(\mathbf{P}_{l_{t}}\) and \(\mathbf{H}_{l_{t}}\) are global solutions of the \(t\)-th iteration, it suggests that the objective of Eq. (\ref{eqn:al}) will be non-decreasing and converge to at least a local optima. Therefore, we have the following theorem.
\begin{theorem}
For the inner loop embedded in the outer \(l\)-th iteration, the objective of Eq. (\ref{eqn:al}) is monotonically non-decreasing in each iteration and will converge to at least a local optima.
\end{theorem}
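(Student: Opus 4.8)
The plan is to recognize the inner loop as block-coordinate (alternating) maximization of the joint objective of Eq. (\ref{eqn:lrsh}) over the two blocks $\mathbf{P}\in\mathbb{R}^{n\times m}$ and $\mathbf{H}\in\{-1,1\}^{n\times m}$, and then combine the standard ``each step is a global optimum $\Rightarrow$ monotone'' estimate with the finiteness of the discrete box to get convergence. Throughout the inner loop the matrices $\mathbf{Z}$, $\mathbf{\Gamma}$, $\gamma$ and $\mathbf{H}_{l-1}$ are fixed (by Theorem 3 they enter only through the constant matrix $\gamma\mathbf{I}_m-\mathbf{H}_{l-1}^{T}\mathbf{H}_{l-1}$), so the quantity whose monotonicity I track is the genuine joint objective
\[ F(\mathbf{H},\mathbf{P}) = Tr\left\{\mathbf{H}\left(\tfrac{\lambda}{\gamma}\mathbf{H}_{l-1}^{T}\mathbf{S}+\mathbf{Z}^T\mathbf{P}^T\right)\right\} - \tfrac{1}{2}\left(\left\|\mathbf{P}\mathbf{\Gamma}^{\frac{1}{2}}\right\|_F^2 + Tr\left\{\mathbf{P}\mathbf{Z}\mathbf{Z}^T\mathbf{P}^T\right\}\right), \]
of which Eq. (\ref{eqn:al}) is merely the $\mathbf{P}$-eliminated linear surrogate used to produce the $\mathbf{H}$-update.

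First I would verify that each of the two block updates returns a global maximizer of its sub-problem. For the $\mathbf{P}$-step with $\mathbf{H}=\mathbf{H}_{l_{t-1}}$ fixed, $F(\mathbf{H}_{l_{t-1}},\cdot)$ is a strictly concave quadratic in $\mathbf{P}$, since its quadratic part is $-\tfrac{1}{2}Tr\{\mathbf{P}(\mathbf{Z}\mathbf{Z}^T+\mathbf{\Gamma})\mathbf{P}^T\}$ and $\mathbf{Z}\mathbf{Z}^T+\mathbf{\Gamma}\succ 0$ because $\Gamma_{ii}>0$; hence $\mathbf{P}_{l_t}=\mathbf{H}_{l_{t-1}}\mathbf{Z}^T(\mathbf{Z}\mathbf{Z}^T+\mathbf{\Gamma})^{-1}$ (the closed form from the proof of Theorem 1) is its unique maximizer. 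For the $\mathbf{H}$-step with $\mathbf{P}=\mathbf{P}_{l_t}$ fixed, using $Tr\{\mathbf{H}^T\mathbf{H}\}=mn$ the objective collapses to a linear function of $\mathbf{H}$ over the box $\{-1,1\}^{n\times m}$, so it decouples entrywise and the sign rule of Theorem 4 yields the global maximizer $\mathbf{H}_{l_t}$.

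Monotonicity then follows by chaining the two maximization properties,
\[ F(\mathbf{H}_{l_{t-1}},\mathbf{P}_{l_{t-1}}) \le F(\mathbf{H}_{l_{t-1}},\mathbf{P}_{l_t}) \le F(\mathbf{H}_{l_t},\mathbf{P}_{l_t}), \]
the first inequality from the $\mathbf{P}$-step and the second from the $\mathbf{H}$-step; since the objective of Eq. (\ref{eqn:al}) agrees with $F$ up to the fixed additive constant and scaling induced by $Tr\{\mathbf{H}^T\mathbf{H}\}=mn$, the same monotonicity carries over. For convergence I would eliminate $\mathbf{P}$ and regard $g(\mathbf{H})=\max_{\mathbf{P}}F(\mathbf{H},\mathbf{P})$ as a function on the finite set $\{-1,1\}^{n\times m}$; it is finite-valued (each inner maximum is attained at the unique $\mathbf{P}_{l_t}$) and takes at most $2^{nm}$ values, hence is bounded above. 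A non-decreasing sequence bounded above converges, which proves the second assertion; moreover, because the value set is finite, the sequence is in fact eventually constant, and at that fixed point neither block update can strictly increase $F$, so the limiting iterate is a block-coordinatewise (local) optimum of Eq. (\ref{eqn:lrsh}), and therefore of Eq. (\ref{eqn:al}).

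The main obstacle is not the monotonicity, which is the routine alternating-optimization estimate, but making ``converge to at least a local optimum'' precise in this mixed continuous/discrete setting. The cleanest way to close that gap is to lean on the finiteness of the discrete box: it forces the monotone objective sequence to stabilize after finitely many inner iterations rather than merely to converge in the limit, and the stabilized iterate is a local optimum precisely because both sub-problems are solved to global optimality at every step.
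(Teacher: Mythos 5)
Your proposal is correct and takes essentially the same route as the paper: the paper's entire justification is that both $\mathbf{P}_{l_t}$ and $\mathbf{H}_{l_t}$ are global solutions of their respective sub-problems at each inner iteration, which is exactly the alternating (block-coordinate) maximization skeleton you build on. Your write-up just makes that one-sentence argument rigorous — strict concavity of the $\mathbf{P}$-step, the entrywise sign rule for the $\mathbf{H}$-step, the chained inequalities, and the finiteness of $\{-1,1\}^{n\times m}$ forcing the monotone objective sequence to stabilize at a coordinatewise optimum — all details the paper leaves implicit.
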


Although Theorem 5 suggests that the objective of Eq. (\ref{eqn:al}) will converge, its convergence is largely affected by the parameter \(\gamma\), which is used to balance the convergence and semantic information in \(\mathbf{S}\). Usually, the larger \(\gamma\), the faster convergence but the more loss of semantic information. Therefore, we empirically set a small non-negative constant for \(\beta\), i.e. \(0 \leq \beta \leq 100\), where \(\gamma=\underset{i}{max}\ \Lambda_{ii}^2+\beta\).

 %Then we demonstrate that the objective of Eq. (\ref{eqn:al}) will converge to a local optima in Theorem 5 and show its convergence process in Fig. 1a.
%
%\begin{theorem}
%The objective of Eq. (\ref{eqn:al}) is monotonically non-decreasing in each iteration and will converge to a local optima. 
%\end{theorem}
%\begin{proof}
%Since the problem in Eq. (\ref{eqn:al}) equals the problem in Eq. (\ref{eqn:lrsh}), in which updating \(\mathbf{P}\) and \(\mathbf{H}\) are not decreasing the objective.  Additionally, the objective of Eq. (\ref{eqn:al}) has an upper bound. Therefore, Theorem 5 is proved.
%\end{proof}
Based on Eq. (\ref{eqn:al}), the optimal solution \(\mathbf{H}_{l}^{\ast}\) can be obtained. Then we utilize \(\mathbf{H}_{l}^{\ast}\) to replace \(\mathbf{H}_{l-1}\) in Eq. (\ref{eqn:HHl}) for next iteration in order to obtain the optimal solution \(\mathbf{H}^{\ast}\). Since \(\mathbf{H}^{\ast}=sgn(\lambda\mathbf{S}^T\mathbf{H}^{\ast}+\mathbf{H}^{\ast}(\gamma \mathbf{I}_{m}-\mathbf{H}^{\ast T}\mathbf{H}^{\ast}))\) and \(\mathbf{H}=sgn(\mathbf{X}\mathbf{A}^T)\), based on Lemma 1, \(\mathbf{A}\) should satisfy  \(\mathbf{X}\mathbf{A}^T=\lambda\mathbf{S}^T\mathbf{H}^{\ast}+\mathbf{H}^{\ast}(\gamma \mathbf{I}_{m}-\mathbf{H}^{\ast T}\mathbf{H}^{\ast})\). However, it is an overdetermined linear system due to \(n>>d\). For simplicity, we utilize a least-squares model to obtain \(\mathbf{A}\), which is \(\mathbf{A}= (\lambda\mathbf{H}^{\ast T}\mathbf{S}+(\gamma \mathbf{I}_{m}-\mathbf{H}^{\ast T}\mathbf{H}^{\ast})\mathbf{H}^{\ast T}
)\mathbf{X}(\mathbf{X}^T\mathbf{X})^{-1}\).
\vspace{-0.5em}
\subsubsection{Scalable symmetric discrete hashing with updating batch binary codes}

\emph{\textbf{Remark}: The optimal solution of Eq. (\ref{eqn:HHl}) is at least the local optimal solution of Eq. (\ref{eqn:obj}) only when \(\left \| \mathbf{H}_{l}-\mathbf{H}_{l-1} \right \|_{F}=0\). Given an initialization \(\mathbf{H}^{0}\), \(\mathbf{H}\) can be alternatively updated by solving Eq. (\ref{eqn:al}). However, with updating all binary codes at once on the non-convex feasible region, \(\mathbf{H}\) might change on two different discrete matrices, which would lead to the error \(\left \| \mathbf{H}_{l}-\mathbf{H}_{l-1} \right \|_{F} \neq 0\) (please see Figure \ref{figobj}a) and the objective of Eq. (\ref{eqn:obj}) becomes worse (please see Figure \ref{figobj}b). Therefore, we divide \(\mathbf{H}\) into a variety of batches and gradually update each of them in a sequential mode, i.e. batch by batch.}

To update one batch of \(\mathbf{H}\), i.e. \(\mathbf{H}_{b}=\mathbf{H}(idx,:)\), where \(idx \in \mathbb{R}^{n_{b}}\) is one column vector denoting the index of selected binary codes in \(\mathbf{H}\), the optimization problem derived from Eq. (\ref{eqn:HHl}) is:
\begin{equation}
\begin{array}{cc}
\underset{\mathbf{H}_{b}}{min}\left \| \mathbf{H}_{l-1}\mathbf{H}_{b}^{T}-\lambda \mathbf{S}_{b} \right \|_F^2,\\ s.t.\  \mathbf{H}_{b}\in \left \{ -1,1 \right \}^{n_{b}\times m}, \mathbf{H}_{b}\subset \mathbf{H},
\end{array}
\label{eqn:bHHl}
\end{equation}
where \(\mathbf{S}_{b}=\mathbf{S}(:,idx) \in \mathbb{R}^{n\times n_{b}}\). 

Furthermore, although \(\mathbf{S}\in \mathbb{R}^{n\times n}\) is high-dimensional for large \(n\), it is low-rank or can be approximated as a low-rank matrix. Similar to previous algorithms \cite{WJS} \cite{shi2017asymmetric}, we can select \(p\) (\(p<<n\)) samples from \(n\) training samples as anchors and then construct an anchor based pairwise matrix \(\mathbf{S}_{A}\in \mathbb{R}^{p\times n}\), which preserves almost all similarity information of \(\mathbf{S}\). Let \(\mathbf{H}_{A} \in \left \{ -1,1 \right \}^{p\times m} \) denote binary codes of anchors, and then utilize \(\mathbf{S}_{A}\) to replace \(\mathbf{S}\) for updating \(\mathbf{H}_{b}\), Eq. (\ref{eqn:bHHl}) becomes:
 \begin{equation}
 \begin{array}{cc}
 \underset{\mathbf{H}_{b}}{min}\left \| \mathbf{H}_{Al-1}\mathbf{H}_{b}^{T}-\lambda \mathbf{S}_{Ab} \right \|_F^2,\\ s.t.\  \mathbf{H}_{b}\in \left \{ -1,1 \right \}^{n_{b}\times m}, \mathbf{H}_{A}\subset \mathbf{H}, \mathbf{H}_{b}\subset \mathbf{H},
 \end{array}
 \label{eqn:AbHHl}
 \end{equation}
 where \(\mathbf{H}_{Al-1}\) denotes \(\mathbf{H}_{A}\) obtained at the \(l-1\)-th iteration in the outer loop, and \(\mathbf{S}_{Ab}=\mathbf{S}_{A}(:,idx) \in \mathbb{R}^{p\times n_{b}}\). 
\begin{table}[tbp]
\begin{tabular}{l}
\hline
\textbf{Algorithm 1: SDH\_P}         \\
\hline
\textbf{Input:} Data \(\mathbf{X}\in \mathbb{R}^{n\times d}\), pairwise matrix \(\mathbf{S}_A \in \mathbb{R}^{p\times n}\),\\  bit number \(m\), parameters \(\lambda\), \(\beta>0\), batch size \(n_{b}\),\\ anchor index \(a\_idx \in \mathbb{R}^{p}\),
outer and inner loop \\ maximum  iteration number \(L_{1}\), \(L_{2}\). \\
\textbf{Output:} \(\mathbf{A}\in \mathbb{R}^{m\times d}\) and \(\mathbf{H}\in \left \{ -1,1 \right \}^{n\times m}\). \\
\hline
1:\ \textbf{Initialize:} Let \(\mathbf{X}_{A}=\mathbf{X}(a\_idx,:)\), set \(\mathbf{A}\) to be the \\
 \quad left-eigenvectors of  \(\mathbf{X}^T\mathbf{S}_{A}^T\mathbf{X}_{A}\) corresponding to \\ \quad   its largest \(m\) eigenvalues, calculate \(\mathbf{H}=sgn(\mathbf{X}\mathbf{A}^T)\) \\ \quad  and \(\mathbf{H}_A=\mathbf{H}(a\_idx,:)\).\\
2:\ \textbf{while not converge or reach maximum iterations}  \\
3.\ \quad \(index\leftarrow randperm(n)\);  \\
4.\ \quad \textbf{for} \(i=1\) to \(\frac{n}{n_{b}}\) \textbf{do}  \\
5.\ \qquad \(idx\leftarrow index((i-1)n_{b}+1:in_{b})\);\\
6:\ \qquad Do the SVD of \(\mathbf{H}_A^{T}\mathbf{H}_A=\mathbf{U}\mathbf{\Lambda}^{2}\mathbf{U}^T\); \\
7:\ \qquad \(\gamma \leftarrow max(diag(\mathbf{\Lambda}^2))+\beta\);\\
8:\ \qquad \textbf{repeat}  \\
9:\ \qquad \qquad \(\mathbf{H}(idx,:)\leftarrow sgn(\lambda\mathbf{S}_A(:,idx)^T\mathbf{H}_A\)\\ \quad \qquad \qquad \(+\mathbf{H}(idx,:)(\gamma\mathbf{I}_{m}-\mathbf{H}_A^T\mathbf{H}_A))\); \\
10:\qquad \textbf{until convergence} \\
11: \ \qquad \(\mathbf{H}_A=\mathbf{H}(a\_idx,:)\); \\
12:\ \quad \textbf{end for} \\
13:\ \textbf{end while} \\
14: Do the SVD of \(\mathbf{H}_A^{T}\mathbf{H}_A=\mathbf{U}\mathbf{\Lambda}^{2}\mathbf{U}^T\); \\
15: \(\gamma \leftarrow max(diag(\mathbf{\Lambda}))+\beta\);\\
16: \(\mathbf{A}= (\lambda\mathbf{H}_A^{T}\mathbf{S}_A+(\gamma \mathbf{I}_{m}-\mathbf{H}_A^{T}\mathbf{H}_A)\mathbf{H}^{T}
)\mathbf{X}(\mathbf{X}^T\mathbf{X})^{-1}\).\\
\hline
\end{tabular}
\vspace{-1em}
\end{table}

\begin{table}[tb]
\begin{tabular}{l}
\hline
\textbf{Algorithm 2: GSDH\_P}         \\
\hline
\textbf{Input:} Data \(\mathbf{X}\in \mathbb{R}^{n\times d}\), pairwise matrix \(\mathbf{S}_{A} \in \mathbb{R}^{p\times n}\),\\  bit number \(m\), parameters \(\lambda\), \(\beta>0\), batch size \(n_{b}\),\\ anchor index \(a\_idx \in \mathbb{R}^{p}\),
outer/inner maximum \\ iteration number \(L_{1}\), \(L_{2}\). \\
\textbf{Output:} \(\mathbf{A}\in \mathbb{R}^{m\times d}\) and \(\mathbf{H}\in \left \{ -1,1 \right \}^{n\times m}\). \\
\hline
1:\ \textbf{Initialize:} Let \(\mathbf{X}_{A}=\mathbf{X}(a\_idx,:)\), set \(\mathbf{A}\) to be the \\
 \quad left-eigenvectors of  \(\mathbf{X}^T\mathbf{S}_{A}^T\mathbf{X}_{A}\) corresponding to \\ \quad   its largest \(m\) eigenvalues, calculate \(\mathbf{H}=sgn(\mathbf{X}\mathbf{A}^T)\), \\ \quad and \(\mathbf{H}_A=\mathbf{H}(a\_idx,:)\).\\
2:\ \textbf{while not converge or reach maximum iterations}  \\
3.\ \quad \(index\leftarrow randperm(n)\);  \\
4.\ \quad \textbf{for} \(i=1\) to \(\frac{n}{n_{b}}\) \textbf{do}  \\
5.\ \qquad \(idx\leftarrow index((i-1)n_{b}+1:in_{b})\);\\
6:\ \qquad  \textbf{for} \(j=1\) to \(m\) \textbf{do}  \\
7:\ \qquad \quad Calculating \(\mathbf{\widetilde{S}}_A(:,idx)\) with fixing \(\mathbf{h}^{k}\),\\ \qquad \qquad \(k=1,2,\cdots,m\), \(k\neq j\);  \\
8:\ \qquad\quad \textbf{repeat}  \\
9:\ \qquad \quad \qquad  \(\mathbf{H}(idx,j)\leftarrow sgn(\lambda\mathbf{\widetilde{S}}_A(:,idx)^T\mathbf{H}_A(:,j)\)\\ \qquad \qquad \qquad \(+\beta  \mathbf{H}(idx,j))\); \\
10:\qquad \quad\textbf{until convergence} \\
11: \qquad \quad \(\mathbf{H}_A(:,j)=\mathbf{H}(a\_idx,j)\); \\
12:\qquad \textbf{end for} \\
13: \quad \textbf{end for} \\
14: \textbf{end while} \\
15: Do the SVD of \(\mathbf{H}_A^{T}\mathbf{H}_A=\mathbf{U}\mathbf{\Lambda}^{2}\mathbf{U}^T\); \\
16: \(\gamma \leftarrow max(diag(\mathbf{\Lambda}))+\beta\);\\
17: \(\mathbf{A}= (\lambda\mathbf{H}_A^{T}\mathbf{S}_A+(\gamma \mathbf{I}_{m}-\mathbf{H}_A^{T}\mathbf{H}_A)\mathbf{H}^{T}
)\mathbf{X}(\mathbf{X}^T\mathbf{X})^{-1}\).\\

%15: \textbf{for} \(j=1\) to \(m\) \textbf{do} \\
%16: \quad Calculating \(\mathbf{\widetilde{S}}_{A}\) with fixing \(\mathbf{h}^{k}\), \(k\neq j\); \\
%17: \quad  \(\mathbf{a}_{j}= (\beta \mathbf{h}^{jT} +\lambda \mathbf{h}_{A}^{jT}\mathbf{\widetilde{S}}_{A} )\mathbf{M}\). \\
%18: \textbf{end for} \\
\hline
\end{tabular}
\vspace{-1em}
\end{table}

Similar to Eq. (\ref{eqn:HHl}), the problem in Eq. (\ref{eqn:AbHHl}) can be firstly transformed into a quadratic problem, and then can be reformulated as a similar form to Eq. (\ref{eqn:al}) based on Theorems 1-4. e.g.
\begin{equation}
\vspace{-0.2em}
\begin{array}{cc}
\underset{\mathbf{H}_{bl}}{max}\,Tr\left\{\mathbf{H}_{bl}((\gamma \mathbf{I}_{m}-\mathbf{H}_{Al-1}^{T}\mathbf{H}_{Al-1})\mathbf{H}_{bl_{t-1}}^{T} \right. \\ \left.
+\lambda\mathbf{H}_{Al-1}^{T}\mathbf{S}_{Ab})\right \},
s.t. \ \mathbf{H}_{bl}\in \left \{ -1,1 \right \}^{n_{b}\times m}.  
\end{array} 
\label{eqn:al1}
\vspace{-0.2em}
\end{equation}
where \(\mathbf{H}_{bl}\) denotes the batch binary codes at the \(l\)-th iteration in the outer loop. 

For clarity, we present the detailed optimization procedure to attain \(\mathbf{H}\) by updating each batch \(\mathbf{H}_{b}\) and calculate the projection matrix \(\mathbf{A}\) in Algorithm 1, namely symmetric discrete hashing via a pairwise matrix (SDH\_P). For Algorithm 1, with gradually updating each batch of \(\mathbf{H}\), the error \(\left \| \mathbf{H}_{l}-\mathbf{H}_{l-1} \right \|_{F}\) usually converges to zero (please see Figure \ref{figobj}a) and the objective of Eq. (\ref{eqn:obj}) also converges to a better local optima (please see Figure \ref{figobj}b). Besides, we also display the retrieval performance in term of mean average precision (MAP) with a small batch size and different iterations in Figure \ref{figobj}c.

\vspace{-1em}
\subsection{Greedy Symmetric Discrete Hashing}
To make the update step more smooth, we greedily update each bit of the batch matrix \(\mathbf{H}_{b}\). Suppose that \(\mathbf{h}_{b}^{j}=\mathbf{H}(idx,j)\) is the \(j\)-th bit of \(\mathbf{H}_{b}\), it can be updated by solving the following optimization problem:
\begin{equation}
\vspace{-0.2em}
\begin{array}{cc}
\underset{\mathbf{h}_{b}^{j}}{min}\left \| \mathbf{h}_A^{j}\mathbf{h}_{b}^{jT}-\lambda (\mathbf{S}_{Ab}-\sum_{k\neq j}^{m} \mathbf{h}_A^{k}\mathbf{h}_{b}^{kT})  \right \|_F^2,\\ s.t.\  \mathbf{h}_{b}^{j}\in \left \{ -1,1 \right \}^{n_{b}},  \mathbf{h}_{A}^{j}\subset \mathbf{h}^{j}, \mathbf{h}_{b}^j\subset \mathbf{h}^j,
\end{array}
\label{eqn:bjHHl}
\vspace{-0.2em}
\end{equation}
where \(\mathbf{h}_A^{j}\) and \(\mathbf{h}_A^{k}\) represent the \(j\)-th and \(k\)-th bits of \(\mathbf{H}_A\), respectively, and \(\mathbf{h}_{b}^{k}\) is the \(k\)-th bit of \(\mathbf{H}_{b}\).

The problem in Eq. (\ref{eqn:bjHHl}) can also be firstly transformed into a quadratic problem and then solved using Theorems 1-4. Similar to Eq. (\ref{eqn:al}), the problem in Eq. (\ref{eqn:bjHHl}) can be transformed to:
 
\begin{equation}
\vspace{-0.2em}
\begin{array}{cc}
\underset{\mathbf{h}_{bl}^{j}}{max}\,Tr\left\{\mathbf{h}_{bl}^{j}(\beta \mathbf{h}_{bl_{t-1}}^{jT}+\lambda\mathbf{h}_{Al-1}^{jT}\mathbf{\widetilde{S}}_{Ab})\right \}, \\
s.t. \ \mathbf{h}_{bl}^{j}\in \left \{ -1,1 \right \}^{n_{b}},  
\end{array} 
\label{eqn:gal1}
\vspace{-0.2em}
\end{equation}
\noindent where \(\mathbf{h}_{bl}^j\) is the \(\mathbf{h}_{b}^j\) obtained at the \(l\)-th iteration in the outer loop, \(\mathbf{h}_{bl_{t-1}}^{j}\) is the \(\mathbf{h}_{b}^j\) obtained at the \(t-1\)-th iteration in inner loop embedded in the \(l\)-th outer loop  and \(\mathbf{\widetilde{S}}_{Ab}= \mathbf{S}_{Ab}-\sum_{k\neq j}^{m} \mathbf{h}_A^{k}\mathbf{h}_{b}^{kT}\).

In summary, we show the detailed optimization procedure in Algorithm 2, namely greedy symmetric discrete hashing via a pairwise matrix (GSDH\_P). The error \(\left \| \mathbf{H}_{l}-\mathbf{H}_{l-1} \right \|_{F}\) and the objective of Eq. (\ref{eqn:obj}) in Algorithm 2 and its retrieval performance in term of MAP with different number of iterations are shown in Figure \ref{figobj}a, b and c, respectively.

\textbf{Out-of-sample:} In the query stage, \(\mathbf{H}\) is employed as the binary codes of training data. We adopt two strategies to encode the query data point \(\mathbf{q}\in \mathbb{R}^{d}\): (i) encoding it using \(h(\mathbf{q})=sgn(\mathbf{q}\mathbf{A}^T)\); (ii) similar to previous algorithms \cite{lin2014fast} \cite{xia2014supervised}, employing \(\mathbf{H}\) as labels to learn a classification model, like least-squares, decision trees (DT) or convolutional neural networks (CNNs), to classify \(\mathbf{q}\). %Note that when using \(\mathbf{H}\) as labels, many previous algorithms \cite{zhuang2016fast} \cite{liu2017discretely} jointly learn \(\mathbf{H}\) and a classification model to further improve the model performance. However, here we still learn them individually in order to better illustrate the strength of learned \(\mathbf{H}\); (iii) utilizing \(\mathbf{Y}=\lambda\mathbf{H}_A^{T}\mathbf{S}_A+(\gamma \mathbf{I}_{m}-\mathbf{H}_A^{T}\mathbf{H}_A)\mathbf{H}^{T}\) as target and employing CNNs to learn a regression model \(f(\cdot)\), and then encoding \(\mathbf{q}\) via \(h(\mathbf{q})=sgn(f(\mathbf{q}))\).

%\begin{table*}[htbp]
%\footnotesize
%%\scriptsize 
%\centering
%\caption{Ranking performance of SDH\_P and GSDH\_P with the projection matrix \(\mathbf{A}\) learned by two different strategies on databases: CIFAR-10 and NUS-WIDE.}
%\begin{tabular}{|c|c|c|c|c|c|c|c|c|c|c|c|}
%\hline
%\multirow{3}*{Method}& \multirow{3}*{Strategy} & \multicolumn{5}{|c|}{CIFAR-10}& \multicolumn{5}{|c|}{NUS-WIDE}   \\
%\cline{3-12}
% &  &\multicolumn{5}{|c|}{MAP (Top 5000) }    & \multicolumn{5}{|c|}{NDCG (Top 100)}      \\ 
% \cline{3-12}
%&  &8-bit &16-bit & 32-bit  & 64-bit & 128-bit   &8-bit &16-bit  & 32-bit  &64-bit & 128-bit \\
%\hline
%\multirow{2}*{SDH\_P} &  Two-step  &0.5900 &0.5930 &0.5983  &0.6072 &0.6244 &  0.2168   &0.2101 &0.2065   &0.2098   & 0.2090      \\
%\cline{2-12}
% & \(\textbf{Ours}\) &\(\mathbf{0.5967}\) &\(\mathbf{0.5971}\)&\(\mathbf{0.6124}\) &\(\mathbf{0.6264}\) & \(\mathbf{0.6296}\)   & \(\mathbf{0.2270}\) & \(\mathbf{0.2233}\) & \(\mathbf{0.2229}\) & \(\mathbf{0.2262}\) & \(\mathbf{0.2297}\)    \\
%\hline
%\hline
%\multirow{2}*{GSDH\_P}  & Two-step & 0.5880 &0.6120 &0.6230 & 0.6350 &0.6400 &0.3201 &0.3644 &0.3904 &0.4120 &0.4300 \\
%\cline{2-12}
% & \(\textbf{Ours}\) &\(\mathbf{0.5960}\)&\(\mathbf{0.6210}\)& \(\mathbf{0.6320}\) & \(\mathbf{0.6410}\) &\(\mathbf{0.6480}\) &  \(\mathbf{0.3270}\) & \(\mathbf{0.3699}\) & \(\mathbf{0.4008}\)  & \(\mathbf{0.4170}\)  &\(\mathbf{0.4381}\)    \\
%\hline
%\end{tabular}
%\label{projection}
%\vspace{-1em}
%\end{table*}
%\vspace{-1em}
\subsection{Convergence Analysis}
Empirically, when \(n>>n_{b}\), the proposed algorithms can converge to at least a local optima, although they cannot be theoretically guaranteed to converge in all cases. Here, we explain why gradually updating each batch of binary codes is beneficial to the convergence of hash code learning. 

In Eq. (\ref{eqn:al}), with updating one batch of \(\mathbf{H}\), i.e. \(\mathbf{H}_{b}\in \left \{ -1,1 \right \}^{n_b\times m} \), Eq. (\ref{eqn:al}) becomes:
\begin{equation}
\vspace{-0.2em}
\begin{array}{cc}
\underset{\mathbf{H}_{bl}}{max}\,Tr\left\{\mathbf{H}_{bl}((\gamma \mathbf{I}_{m}-\mathbf{H}_{l-1}^{T}\mathbf{H}_{l-1})\mathbf{H}_{bl_{t-1}}^{T} \right. \\ \left.
+\lambda\mathbf{H}_{l-1}^{T}\mathbf{S}_b)\right \},
s.t. \ \mathbf{H}_{bl} \in \left \{ -1,1 \right \}^{n_{b}\times m},  
\end{array} 
\label{eqn:alhb}
\vspace{-0.2em}
\end{equation} 
The hash code matrix \(\mathbf{H}\) can be represented as \(\mathbf{H}=\left [ \mathbf{H}_b; \mathbf{\widetilde{H}} \right ]\), where  \(\mathbf{\widetilde{H}}\in  \left \{ -1,1 \right \}^{(n-n_b)\times m}\). Since \(n>>n_b\), the objective of Eq. (\ref{eqn:alhb}) is determined by:
\begin{equation}
\vspace{-0.2em}
\begin{array}{cc}
\underset{\mathbf{H}_{bl}}{max}\,Tr\left\{\mathbf{H}_{bl}((\gamma \mathbf{I}_{m}-\mathbf{\widetilde{H}}_{l-1}^{T}\mathbf{\widetilde{H}}_{l-1})\mathbf{H}_{bl_{t-1}}^{T} \right. \\ \left.
+\lambda\mathbf{\widetilde{H}}_{l-1}^{T}\mathbf{S}_b)\right \},
s.t. \ \mathbf{H}_{bl} \in \left \{ -1,1 \right \}^{n_{b}\times m},  
\end{array} 
\label{eqn:malhb}
\vspace{-0.2em}
\end{equation} 

Based on Theorem 5, the inner loop can theoretically guarantee the convergence of the objective in Eq. (\ref{eqn:al}), and thus the optimal solution \(\mathbf{H}_{bl}^{\ast}\) of Eq. (\ref{eqn:malhb}) can be obtained by the inner loop. Then it has:
\begin{equation}
\begin{array}{ll}
Tr\left\{\mathbf{H}_{bl}^{\ast}((\gamma \mathbf{I}_{m}-\mathbf{\widetilde{H}}_{l-1}^{T}\mathbf{\widetilde{H}}_{l-1})\mathbf{H}_{bl}^{\ast T}+\lambda\mathbf{\widetilde{H}}_{l-1}^{T}\mathbf{S}_b)\right \} \geq \\ Tr\left\{\mathbf{H}_{bl-1}((\gamma \mathbf{I}_{m}-\mathbf{\widetilde{H}}_{l-1}^{T}\mathbf{\widetilde{H}}_{l-1})\mathbf{H}_{bl-1}^{T}+\lambda\mathbf{\widetilde{H}}_{l-1}^{T}\mathbf{S}_b)\right \}
\end{array}
\label{eqn:Hl1}
\end{equation}
Because of \(n>>n_{b}\), Eq. (\ref{eqn:Hl1}) usually leads to  
\begin{equation}
\begin{array}{ll}
Tr\left\{\mathbf{H}_{bl}^{\ast}((\gamma \mathbf{I}_{m}- \mathbf{\widehat{H}}_{l}^T\mathbf{\widehat{H}}_{l})\mathbf{H}_{bl}^{\ast T} +\lambda \mathbf{\widehat{H}}_{l}^{T}\mathbf{S}_b)\right \} \geq \\ Tr\left\{\mathbf{H}_{bl-1}((\gamma \mathbf{I}_{m}-\mathbf{H}_{l-1}^{T}\mathbf{H}_{l-1})\mathbf{H}_{bl-1}^{T}+\lambda\mathbf{H}_{l-1}^{T}\mathbf{S}_b)\right \}
\end{array}
\label{eqn:hbl}
\end{equation}
where \(\mathbf{\widehat{H}}_{l}=\left [ \mathbf{H}_{bl}^{\ast}; \mathbf{\widetilde{H}}_{l-1} \right ]\) and \(\mathbf{H}_{l-1}=\left [ \mathbf{H}_{bl-1}; \mathbf{\widetilde{H}}_{l-1} \right ]\). Eq. (\ref{eqn:hbl}) suggests that when \(n_{b}<<n\), updating each batch matrix can usually make the objective of Eq. (\ref{eqn:al}) gradually converge to at least a local optima. 

%\vspace{-1em}
\subsection{Time Complexity Analysis}
In Algorithm 1, \(n>>d>>m\) and \(n>>p>>m\). Step 1 calculating matrices \(\mathbf{A}\) and \(\mathbf{H}\) requires \(\mathcal{O}(pnd)\) and \(\mathcal{O}(ndm)\) operations, respectively. For the outer loop stage, the time complexity of steps 6, 7, 9 and 11 is \(\mathcal{O}(pm^2)\), \(\mathcal{O}(m)\), \(\mathcal{O}(pmn_{b})\) and \(\mathcal{O}(pm)\), respectively. Hence, the outer loop stage spends \(\mathcal{O}(L_{1}L_{2}npm)\) operations. Steps 14-16 to calculate the projection matrix \(\mathbf{A}\) spend \(\mathcal{O}(pm^2)\), \(\mathcal{O}(m)\) and \(max(\mathcal{O}(pnm), \mathcal{O}(nd^2))\), respectively. Therefore, the total complexity of Algorithm 1 is \(max(\mathcal{O}(npd), \mathcal{O}(nd^2), \mathcal{O}(L_{1}L_{2}npm))\). Empirically, \(L_{1}\leq 20\) and \(L_{1}\leq 3\). 

For Algorithm 2, step 1 calculating  \(\mathbf{A}\), \(\mathbf{H}\) and \(\mathbf{M}\) spends at most \(max(\mathcal{O}(pnd), \mathcal{O}(nd^2))\). In the loop stage, the major steps both 7 and 9 require \(\mathcal{O}(pn_{b})\) operations. Hence, the total time complexity of the loop stage is \(\mathcal{O}(L_{1}L_{2}npm)\). Additionally, calculating the final \(\mathbf{A}\) costs the same time to the steps 14-16 in Algorithm 1. Therefore, the time complexity of Algorithm 2 is \(max(\mathcal{O}(npd), \mathcal{O}(nd^2), \mathcal{O}(L_{1}L_{2}npm))\). 

%\vspace{-1em}
\section{Extension to Other Hashing Algorithms}
In this subsection, we illustrate that the proposed algorithm GSDH\_P is suitable for solving many other pairwise based hashing models.

Two-step hashing algorithms \cite{GCD} \cite{lin2015supervised} iteratively update each bit of the different loss functions defined on the Hamming distance of data pairs so that the loss functions of many hashing algorithms such as BRE \cite{BT}, MLH \cite{MDM} and EE \cite{carreira2010elastic} are incorporated into a general framework, which can be written as:
\begin{equation}
\underset{\mathbf{h}^j}{min}\ \mathbf{h}^{j}\mathbf{L}\mathbf{h}^{jT},\ s.t.\ \mathbf{h}^j\in \left \{ -1,1 \right \}^n 
\label{eqn:Lh}
\end{equation}
where \(\mathbf{h}^j\) represents the \(j\)-th bit of binary codes \(\mathbf{H}\in\left \{ -1,1 \right \}^{n\times m}\) and \(\mathbf{L}\in \mathbb{R}^{n\times n}\) is obtained based on different loss functions with fixing all bits of binary codes except \(\mathbf{h}^j\).

The algorithms \cite{GCD} \cite{lin2015supervised} firstly relax \(\mathbf{h}^j\in \left \{ -1,1 \right \}^n\) into \(\mathbf{h}^j\in \left [ -1,1 \right ]^n\) and then employ L-BFGS-B \cite{zhu1997algorithm} to solve the relaxed optimization problem, followed by thresholding to attain the binary vector \(\mathbf{h}^j\). However, our optimization mechanism can directly solve Eq. (\ref{eqn:Lh}) without relaxing \(\mathbf{h}^j\). 

Since \(Tr(\mathbf{h}^j\mathbf{h}^{jT}\mathbf{h}^j\mathbf{h}^{jT})=const\) and \(Tr(\mathbf{L}\mathbf{L}^T)=const\), the problem in Eq. (\ref{eqn:Lh}) can be equivalently reformulated as:
\begin{equation}
\underset{\mathbf{h}^j}{min} \left \| \mathbf{h}^j\mathbf{h}^{jT}-(-\mathbf{L}) \right \|_F^2\ s.t.\ \mathbf{h}^j\in \left \{ -1,1 \right \}^n, 
\label{eqn:hL}
\end{equation}
whose optimization type is same as the objective of Eq. (\ref{eqn:obj}) w.r.t \(\mathbf{H}\). Replacing the constraint \(\mathbf{h}^j\in \left \{ -1,1 \right \}^n\) with \(\mathbf{h}^j=sgn(\mathbf{X}\mathbf{a}_{j}^T)\), where \(\mathbf{a}_{j}\in \mathbb{R}^{d}\) is the \(j\)-th row vector of \(\mathbf{A}\), Eq. (\ref{eqn:hL}) becomes:
\begin{equation}
\underset{\mathbf{a}_{j}}{min} \left \| \mathbf{h}^j\mathbf{h}^{jT}-(-\mathbf{L})  \right \|_F^2\ s.t.\ \mathbf{h}^j=sgn(\mathbf{X}\mathbf{a}_{j}^T),
\label{eqn:fL}
\end{equation}
Since \(\mathbf{L}\in \mathbb{R}^{n\times n}\) will consume large computation and storage costs for large \(n\), we select \(p\) training anchors to construct \(\mathbf{L}_{A}\in \mathbb{R}^{p\times n}\) based on different loss functions. Replacing \(\mathbf{L}\) in Eq. (\ref{eqn:fL}) with \(\mathbf{L}_{A}\), it becomes:

\begin{equation}
\underset{\mathbf{a}_{j}}{min} \left \| \mathbf{h}_{A}^j\mathbf{h}^{jT}-(-\mathbf{L}_{A})  \right \|_F^2,\  s.t.\ \mathbf{h}^j=sgn(\mathbf{X}\mathbf{a}_{j}^T),
\label{eqn:fLA}
\end{equation}

Similar to solving Eq. (\ref{eqn:obj}), we can firstly obtain \(\mathbf{h}^{j}\) and then calculate \(\mathbf{a}_{j}\). To attain \(\mathbf{h}^{j}\), we still gradually update each batch \(\mathbf{h}_{b}^{j}\) by solving the following problem:
\begin{equation}
\begin{array}{cc}
\underset{\mathbf{h}_b^{j}}{min} \left \| \mathbf{h}_{A}^j\mathbf{h}_{b}^{jT}-(-\mathbf{L}_{Ab})  \right \|_F^2,
\end{array}
\label{eqn:fLAb}
\end{equation}
where \(\mathbf{L}_{Ab}=\mathbf{L}_{A}(:,idx)\in \mathbb{R}^{p\times n_{b}}\).

The optimization type of Eq. (\ref{eqn:fLAb}) is the same as that of Eq. (\ref{eqn:bjHHl}). \(\mathbf{h}^{j}\) can be obtained by gradually updating \(\mathbf{h}_{b}^{j}\) as shown in GSDH\_P.  After obtaining \(\mathbf{h}^{j}\),  \(\mathbf{a}_{j}\) can be attained by using \(\mathbf{a}_{j}= (\beta \mathbf{h}^{jT} -\mathbf{h}_{A}^{jT}\mathbf{L}_{A} )\mathbf{X}(\mathbf{X}^T\mathbf{X})^{-1}\).

%\textcolor{red}{Since many pairwise based hashing algorithms do not consider the relevance of images within the same class, to alleviate this issue, we redefine the Hamming affinity of any pairs (\(\mathbf{h}_{i}, \mathbf{h}_{j}\)) as follows:}
%\begin{equation}
%d_{H}(\mathbf{h}_{i},\mathbf{h}_{j}) = \frac{1}{2}(m-\frac{s_{ij}}{r_{max}}\mathbf{h}_{i}\mathbf{h}_{j}),
%\label{eqn:dh}
%\end{equation}
%where \(s_{ij}\) is defined in Eq. (\ref{eqn:sc}) and Eq. (\ref{eqn:sm}) for single- and multi-label tasks, respectively.
%For the single-label case, \(\frac{1}{2}(m-\frac{s_{ij}}{r_{max}}\mathbf{h}_{i}\mathbf{h}_{j})\) is equivalent to the Hamming distance between \(\mathbf{h}_{i}\) and \(\mathbf{h}_{j}\).
   
%Based on the definition Eq. (\ref{eqn:dh}),  many other pairwise based hashing algorithms can be extended to handle multi-label tasks. For example, FastH-Hinge \cite{lin2015supervised} becomes:
%\begin{equation}
%\mathcal{L}(\mathbf{h}_{i}, \mathbf{h}_{j})= \left\{\begin{matrix}
%\left [ \frac{1}{2}(m-\frac{s_{ij}}{r_{max}}\mathbf{h}_{i}\mathbf{h}_{j}) \right ]^2  & if\ (\mathbf{x}_{i},\mathbf{x}_{j})\in \mathcal{M} \\ 
% \left [ max(\frac{s_{ij}}{2 r_{max}}\mathbf{h}_{i}\mathbf{h}_{j}, 0) \right ]^2  & otherwise,
%\end{matrix}\right. 
%\label{eqn:hinge}
%\end{equation}
%which can be solved by calculating \(\mathbf{L}_{A}\) same as \cite{GCD} and then employing GSDH\_P. In this paper, the hinge loss function Eq. (\ref{eqn:hinge}) solved by GSDH\_P is named as GSDH\_P\(_{Hinge}\).

Based on Eqs. (\ref{eqn:Lh})-(\ref{eqn:fLAb}), many pairwise based hashing models can lean binary codes by using GSDH\_P. For instances, we show the performance on solving the optimization model in BRE \cite{BT} \cite{lin2015supervised}:
\begin{equation}
\mathcal{L}(\mathbf{h}_{i}, \mathbf{h}_{j})= \left [ m\delta (s_{ij}<0)-d_{H}(\mathbf{h}_{i},\mathbf{h}_{j}) \right ]^2
\label{eqn:bre}
\end{equation}
where \(d_{H}(\mathbf{h}_{i} ,\mathbf{h}_{j})\) is the Hamming distance between \(\mathbf{h}_{i}\) and \(\mathbf{h}_{j}\), and \(\delta (\cdot)\in \left \{ 0,1 \right \}\) is an indicator function. Here, \(s_{ij}<0\) denotes \((\mathbf{x}_{i},\mathbf{x}_{j})\in \mathcal{C}\). \\
\noindent One typical model with a hinge loss function \cite{MDM} \cite{lin2015supervised}:
\begin{equation}
\mathcal{L}(\mathbf{h}_{i}, \mathbf{h}_{j})= \left\{\begin{matrix}
\left [ 0-d_{H}(\mathbf{h}_{i} ,\mathbf{h}_{j}) \right ]^2  & \ (\mathbf{x}_{i},\mathbf{x}_{j})\in \mathcal{M} \\ 
 \left [ max(0.5m-d_{H}(\mathbf{h}_{i} ,\mathbf{h}_{j}), 0) \right ]^2  & otherwise.
\end{matrix}\right. 
\label{eqn:hinge}
\end{equation}
 In this paper, the optimization model in BRE solved by GSDH\_P is named as GSDH\_P\(_{BRE}\). Similarly, the hinge loss function Eq. (\ref{eqn:hinge}) solved by GSDH\_P is named as GSDH\_P\(_{Hinge}\).

\section{Experimental Results and Analysis}
\label{experiments}
We evaluate the proposed algorithms SDH\_P and GSDH\_P on one benchmark single-label database: CIFAR-10 \cite{ARW}, and two popular multi-label databases: NUS-WIDE \cite{chua2009nus} and COCO \cite{lin2014microsoft}. The CIFAR-10 database contains 60K color images of ten object categories, with each category consisting of 6K images. The NUS-WIDE database contains around 270K images collected from Flickr, with each image consisting of multiple semantic labels. Totally, this database has 81 ground truth concept labels. Here, similar to \cite{FCWH}, we choose the images associated with the 21 most frequent labels. In total, there are around 195K images. The COCO database consists of about 328K images belonging to 91 objects types. We adopt the 2014 training and validation datasets. They have around 83K training and 41K validation images belonging to 80 object categories.

%\begin{table*}[htbp]
%\vspace{-1.5em}
%%\footnotesize
%\scriptsize 
%\centering
%\caption{Ranking performance of SDH\_P and GSDH\_P with the projection matrix \(\mathbf{A}\) learned by two different strategies on databases: CIFAR-10 and NUS-WIDE.}
%\begin{tabular}{|c|c|c|c|c|c|c|c|c|c|}
%\hline
%\multirow{3}*{Method}& \multirow{3}*{Strategy} & \multicolumn{4}{c|}{CIFAR-10}& \multicolumn{4}{c|}{NUS-WIDE}   \\
%\cline{3-10}
% &  &\multicolumn{4}{c|}{MAP (Top 5000) }    & \multicolumn{4}{c|}{NDCG (Top 100)}      \\ 
% \cline{3-10}
%&  &16-bit & 32-bit  & 64-bit & 128-bit   &16-bit  & 32-bit  &64-bit & 128-bit \\
%\hline
%\multirow{2}*{SDH\_P} &  Two-step   &0.5880 &0.6049  &0.6191 &0.6226    &0.2904 &0.3001   &0.3067   & 0.3167      \\
%\cline{2-10}
% & \(\textbf{Ours}\)  &\(\mathbf{0.6061}\)&\(\mathbf{0.6211}\) &\(\mathbf{0.6299}\) & \(\mathbf{0.6282}\)   & \(\mathbf{0.3136}\)&\(\mathbf{0.3220}\) & \(\mathbf{0.3293}\) & \(\mathbf{0.3344}\)    \\
%\hline
%\hline
%\multirow{2}*{GSDH\_P}  & Two-step  &0.6091 &0.6221 & 0.6313 &0.6326  &0.4151 &0.4361 &0.4521 &0.4715 \\
%\cline{2-10}
% & \(\textbf{Ours}\) &\(\mathbf{0.6127}\)& \(\mathbf{0.6278}\) & \(\mathbf{0.6329}\) &\(\mathbf{0.6349}\)  & \(\mathbf{0.4159}\) & \(\mathbf{0.4369}\)  & \(\mathbf{0.4566}\)  &\(\mathbf{0.4741}\)    \\
%\hline
%\end{tabular}
%\label{projection}
%\vspace{-2em}
%\end{table*}
%\vspace{-0.5em}
\subsection{Experimental Setting}
 We compare SDH\_P and GSDH\_P against fifteen state-of-the-art hashing methods including six point-wise and pairwise based algorithms: KSH \cite{WJR}, CCA-ITQ \cite{gong2013iterative}, SDH \cite{FCWH}, COSDISH \cite{kang2016column}, KSDH \cite{shi2016kernel}, ADGH \cite{shi2017asymmetric}, four ranking algorithms: RSH \cite{wang2013learning}, CGH \cite{li2013learning}, Top-RSBC \cite{song2015top} and DSeRH \cite{liu2017discretely}, and five deep hashing algorithms: CNNH \cite{xia2014supervised}, DNNH \cite{lai2015simultaneous}, DHN \cite{zhu2016deep}, Hashnet \cite{cao2017hashnet} and DCH \cite{cao2018deep}. Additionally, we also compare BRE \cite{BT} and MLH  \cite{MDM} and FastH \cite{lin2015supervised} with GSDH\_P\(_{BRE}\) and GSDH\_P\(_{Hinge}\) to illustrate the generalization of GSDH\_P. For KSH and KSDH, we employ the same pairwise matrix as SDH\_P and GSDH\_P for tackling single-label and multi-label tasks. For all deep hashing algorithms, we show their reported retrieval accuracy on each database. Additionally, for better comparing with deep hashing algorithms, we utilize the binary codes learned by GSDH\_P as labels to train classification models, by using the AlexNet architecture \cite{krizhevsky2012imagenet} in the Pytorch framework with pre-trained on the ImageNet database \cite{deng2009imagenet}, and name this method as GSDH\_P\(^{\ast}\). For SDH\_P and GSDH\_P, we empirically set \(p=1000\), \(n_{b}=100\), \(\beta=10\), \(L_1=20\) and \(L_2=3\). 

To evaluate the hashing methods, we utilize three major criteria: MAP, Precision and Recall, to evaluate their ranking performance on the single-label task, and employ two main criteria: NDCG \cite{jarvelin2000ir} and average cumulative gain (ACG) \cite{wang2013learning}, to assess their performance on multi-label tasks. Given a set of queries, MAP is the mean of the average precision (AP) for each query. AP is defined as:
\begin{equation}
AP@R = \frac{\sum_{k=1}^R P(k) \delta(k)}{\sum_{k=1}^{R}\delta(k)}
\end{equation}
where \(R\) is the number of top returned samples, \(P(k)\) is the precision at cut-off \(k\) in the list, \(\delta(k)=1\) if the sample ranked at \(k\)-th position is relevant, otherwise, \(\delta(k)=0\).

NDCG is the normalization of the discounted cumulative gain (DCG), which is calculated by \cite{jarvelin2000ir} :
\begin{equation}
DCG=rel_{1} + \sum_{k=2}^R \frac{rel_{k}}{log_{2}(k)}, 
\end{equation}
where \(k\) is the ranking position and \(rel_{k}\) is the relevance between the \(k\)-th retrieved sample and the query.

ACG denotes the average cumulative gain, it is defined as:
\begin{equation}
ACG_{r}=\frac{1}{\left | \mathcal{N}_r \right |} \sum_{x\in \mathcal{N}_r} rel_{x}
\end{equation}
where \(\mathcal{N}_r\) represents the retrieved samples within Hamming radius \(r\) and \(rel_{x}\) is the relevance between a returned sample \(\mathbf{x}\) and the query.

Since all non-deep hashing methods have similar test time, we only show their training time for better comparison. We utilize MATLAB and conduct all experiments on a 3.50GHz Intel Xeon E5-1650 CPU with 128GB memory.

\begin{table}[!htb]
%\footnotesize
\scriptsize 
%\small
\centering
\caption{Ranking performance (MAP) on top 500 retrieved samples and training time (seconds) of different hashing methods on the single-label database CIFAR-10.}
\vspace{0.2em}
\begin{tabular}{|c|c|c|c|c|c|}
\hline
\multirow{3}*{Method} & \multicolumn{5}{c|}{ GIST (\(n=10000\))}  \\ 
 \cline{2-6}
 & \multicolumn{4}{c|}{MAP (Top 500) } & Time  \\      
 \cline{2-6}
 & 8-bit&16-bit & 32-bit  & 64-bit & 64-bit  \\
\hline
KSH  & 0.4281  & 0.4706  & 0.5098  & 0.5270 &\(4.6\times10^{3}\)      \\
\hline
CCA-ITQ	 & 0.2014  &0.1984 &0.2147 &0.2316 &0.9   \\
\hline
SDH  & 0.4970 & 0.5370 &0.5670 &0.5781 &7.3     \\
\hline
COSDISH  & 0.5116 & \underline{0.5912} & \underline{0.5980} &\underline{0.6162} &\(1.5\times10^{1}\)          \\
\hline
KSDH  & \underline{0.5370} &0.5740 &0.5900 &0.6000 &3.5            \\
\hline
ADGH  &0.5360  &0.5700  &\underline{0.5980} &0.6020 &1.4              \\
\hline
\hline
RSH &0.2121 &0.1889  &0.1812 &0.1810 &\(7.5\times10^{4}\)           \\
\hline
CGH  & 0.3013 &0.3040 &0.3255  &0.3305 &\(9.2\times10^{2}\)       \\
\hline
Top-RSBC &0.2568 &0.2404 &0.2400 &0.2544 &\(9.3\times10^{4}\)     \\
\hline
DSeRH	 &0.2020  &0.2070 &0.2087 &0.2116 &\(2.7\times10^{3}\)   \\

\hline
\hline 
\textbf{SDH\_P} & \(\mathbf{0.5755}\)    & 0.5981   &0.6102 &0.6222   &\(1.1\times 10^{1}\)     \\
\hline
\textbf{GSDH\_P} &0.5600  & \(\mathbf{0.5992}\) & \(\mathbf{0.6272}\) &\(\mathbf{0.6300}\) & \(2.4\times 10^{1}\) \\
%\hline
%\textbf{GSDH\_P\(_{H}\)} &0.5390 &0.5906&0.5966 &0.6118 &\textcolor{red}{\(2.1\times 10^{2}\)}  \\
%\hline
%\hline
%FastH &0.5027&0.5908 &0.6091 &0.6300  & \(3.2\times 10^2\) \\
%\hline
%FastH\(_{H}\) &0.5099 &0.5958&0.6154 &0.6266  &\(2.9\times 10^2\) \\
%\hline
%\hline
%\textbf{GSDH\_P+DT} &\(\mathbf{0.5465}\)  &0.5970  &\(\mathbf{0.6258}\) &0.6322 & \(2.2\times 10^2\)  \\
%\hline
%\textbf{GSDH\_P\(_{H}\)+DT} &0.5381 & \(\mathbf{0.5994}\) &0.6173  &\(\mathbf{0.6497}\)  &\(2.9\times 10^2\)  \\
\hline
\hline
\multirow{3}*{Method} & \multicolumn{5}{c|}{GIST (Full)}  \\ 
 \cline{2-6}
 & \multicolumn{4}{c|}{MAP (Top 500) } & Time  \\      
 \cline{2-6}
 & 8-bit&16-bit & 32-bit  & 64-bit & 64-bit  \\
 \hline
KSH &0.4057 &0.4725  &0.5126& 0.5317  & \(3.5\times10^{4}\) \\
\hline
CCA-ITQ	 &0.2338 &0.2224 &0.2473  &0.2745   & \(4.7\)    \\
\hline
SDH  & 0.4723 &0.5700 &0.5920&0.6038   & \(5.2\times 10^{1}\)    \\
\hline
COSDISH  & 0.5593  &0.6065  &\underline{0.6125}&\underline{0.6255} & \(9.2\times 10^{1}\)         \\
\hline  
KSDH  &0.5687  &0.5955   &0.6015&0.6091   & \(7.6\times 10^{1}\)             \\
\hline
ADGH  &\underline{0.5731}&\underline{0.6097} &0.6113   &0.6119          & \(4.4\)             \\
\hline
\hline
\textbf{SDH\_P} &\(\mathbf{0.5735}\) &\(\mathbf{0.6172}\) &0.6222 &0.6279   & \(3.3\times 10^{1}\)        \\
\hline
\textbf{GSDH\_P} &0.5680 &0.6142 & \(\mathbf{0.6239}\) & \(\mathbf{0.6333}\) & \(6.9\times 10^{1}\)      \\
%\hline
%\textbf{GSDH\_P\(_{H}\)}&0.5687 &0.6046 &0.6064 &0.6134 & \(1.7\times 10^{3}\) \\
%\hline
%\hline
%FastH &0.4539 &0.5828 &0.6312 &0.6580 &\(1.4\times 10^{3}\)  \\
%\hline
%FastH\(_{H}\) &0.5053 &0.6025 &0.6404 &0.6593  &\(7.8\times10^{2}\) \\
%\hline
%\hline
%\textbf{GSDH\_P+DT} & \(\mathbf{0.5849}\)&\(\mathbf{0.6342}\)& \(\mathbf{0.6682}\)  &\(\mathbf{0.6830}\)   & \(7.7\times 10^{2}\)      \\
%\hline
%\textbf{GSDH\_P\(_{H}\)+DT}&0.5200  &0.5857  &0.6322  &0.6579  &\(2.0\times10^{3}\)  \\
\hline
%\hline
%\multirow{2}*{Method} & \multicolumn{5}{c|}{CNN (Full)} \\
% \cline{2-6}
% & \multicolumn{5}{c|}{MAP (Top 500) } \\
% \cline{2-6}
% & 8-bit&16-bit & 32-bit  & 64-bit & 128-bit   \\
% \hline
%CNNH &0.9220 &0.9183&0.9270&0.9297   &0.9274    \\
%\hline
%SFL	 &0.8353  &0.7748 &0.8245 &\color{red}{0.2811}   &\color{red}{0.5082}    \\
%\hline
%DSRH  &0.8845&0.8873 &0.9163&0.9160   &0.9153     \\
%\hline
%DCH  &0.9335   &0.9416&0.9403&0.9444 &0.9479          \\
%\hline  
%\hline
%\textbf{SDH\_P} &\(\mathbf{0.9512}\) &0.9522 &0.9518&0.9534   &0.9490        \\
%\hline
%\textbf{GSDH\_P} &0.9503  &\(\mathbf{0.9525}\) &\(\mathbf{0.9545}\) &\(\mathbf{0.9542}\) &\(\mathbf{0.9546}\)       \\
%\hline
\end{tabular}
\label{cifar10}
\vspace{-1em}
\end{table}

\begin{table}[!htb]
%\footnotesize
\scriptsize 
%\small
\centering
\caption{Ranking performance (MAP) on top 1000 retrieved samples and training time (seconds) of KSH, BRE, MLH and their variants on the CIFAR-10 database.}
\vspace{0.2em}
\begin{tabular}{|c|c|c|c|c|c|}
\hline
\multirow{3}*{Method} & \multicolumn{5}{c|}{ GIST (\(n=10000\))}  \\ 
 \cline{2-6}
 & \multicolumn{4}{c|}{MAP (Top 1000) } & Time  \\      
 \cline{2-6}
 & 8-bit&16-bit & 32-bit  & 64-bit & 64-bit  \\
\hline
KSH  & \underline{0.4069}  &\underline{0.4458} &\underline{0.4908}   &\underline{0.5121}  &\(4.6\times10^{3}\)      \\
\hline
BRE	 & 0.1999  &0.2028   &0.2187 &0.2340 & \(1.3\times 10^{5}\)   \\
\hline
MLH  &0.1509   &0.2581   &0.2491 &0.2467 & \(2.5\times 10^{2}\)    \\
\hline
\textbf{GSDH\_P}\(_{KSH}\)  &0.5528 &\(\mathbf{0.5996}\) &0.6126 &0.6162 & \(2.4\times 10^{1}\)        \\
\hline
\textbf{GSDH\_P}\(_{BRE}\)  &0.5478 &0.5960 &\(\mathbf{0.6158}\) & \(\mathbf{0.6251}\) &  \(5.8\times 10^{1}\)             \\
\hline
\textbf{GSDH\_P}\(_{Hinge}\) &\(\mathbf{0.5582}\) &0.5961 &0.6136 &0.6240 & \(1.2\times 10^{2}\)          \\
\hline
\hline
FastH\(_{KSH}\) &\underline{0.5364} &\underline{0.5814} &0.6106 &0.6207 & \(4.1\times 10^{2}\)            \\
\hline
FastH\(_{BRE}\) &0.5200 &0.5800 &0.6104 &0.6139 &  \(4.1\times 10^{2}\)      \\
\hline
FastH\(_{Hinge}\) &0.5243 &0.5775 &\underline{0.6150} &\underline{0.6305} &\(3.7\times 10^{2}\)      \\
\hline
\textbf{GSDH\_P}\(_{KSH}\)+DT  & \(\mathbf{0.5482}\) &0.5921 &0.6207 &0.6282 &\(1.7\times 10^{2}\)         \\
\hline
\textbf{GSDH\_P}\(_{BRE}\)+DT  &0.5221&0.5862 &\(\mathbf{0.6224}\) &0.6292 & \(1.8\times 10^{2}\)               \\
\hline
\textbf{GSDH\_P}\(_{Hinge}\)+DT &0.5352 & \(\mathbf{0.5972}\) &0.6215 & \(\mathbf{0.6340}\) & \(2.1\times 10^{2}\)           \\
\hline
\end{tabular}
\label{cifar10_dt}
\vspace{-1em}
\end{table}

%\begin{table}[!htb]
%%\footnotesize
%\scriptsize 
%%\small
%\centering
%\caption{Ranking performance (MAP) on top 500 retrieved samples and training time (seconds) of different hashing methods on the CIFAR-10 database.}
%\vspace{0.2em}
%\begin{tabular}{|c|c|c|c|c|c|}
%\hline
%
%
%\end{tabular}
%\label{cnncifarnuswide}
%\vspace{-1em}
%\end{table}

\begin{table}[!htb]
%\footnotesize
\scriptsize 
%\small
\centering
\caption{Ranking performance (MAP) of GSDH\_P\(^{\ast}\) and several popular deep hashing algorithms on the CIFAR-10 database (\(^{\dagger}\) denotes that the shown results are reported in  \cite{cao2018deep}).}
\vspace{0.2em}
\begin{tabular}{|c|c|c|c|c|}
\hline
\multirow{2}*{Method} & \multicolumn{4}{c|}{MAP @ Top 5000 }  \\ 
 \cline{2-5}
 & 12-bit & 24-bit & 32-bit  & 48-bit\\
\hline
CNNH  &0.429   &0.511   &0.509   &0.522     \\
 \hline
 DNNH &0.552  &0.566 &0.558 &0.581    \\
 \hline
 DHN  &\underline{0.555} &\underline{0.594} &\underline{0.603} &\underline{0.621}          \\
%\hline
%DPSH  &\underline{0.713}   &\underline{0.727} &\underline{0.744} &\underline{0.757}      \\

\hline
GSDH\_P\(^{\ast}\)  & \(\mathbf{0.791}\) & \(\mathbf{0.786}\) & \(\mathbf{0.780}\) & \(\mathbf{0.776}\)            \\
\hline
\hline
\multirow{2}*{Method} & \multicolumn{4}{c|}{ MAP @ H\(\leq\) 2}  \\ 
 \cline{2-5}
 & 16-bit & 32-bit & 48-bit  & 64-bit\\
 \hline
 CNNH\(^{\dagger}\)  &0.5512 &0.5468 &0.5454   &0.5364     \\
 \hline
 DNNH\(^{\dagger}\) 	 &0.5703   &0.5985&0.6421 &0.6118    \\
 \hline
 DHN\(^{\dagger}\)   & 0.6929   &0.6445   &0.5835 &0.5883      \\
 \hline
 HashNet\(^{\dagger}\)  &0.7576 &0.7776 &0.6399 &0.6259          \\
 \hline
 DCH\(^{\dagger}\)  & \underline{0.7901} & \underline{0.7979} &\underline{0.8071} &\underline{0.7936}          \\
 \hline
 \textbf{GSDH\_P\(^{\ast}\)}  &\(\mathbf{0.8480}\)&\(\mathbf{0.8480}\)&\(\mathbf{0.8400}\)  &\(\mathbf{0.8329}\)            \\
\hline
\end{tabular}
\label{cifar10_dh}
\vspace{-1em}
\end{table}

\vspace{-0.5em}
\subsection{Experiments on CIFAR-10}
%We compare the performance of the projection matrix \(\mathbf{A}\) learned by SDH\_P, GSDH\_P and GGSDH\_P against that learned by the two-step strategy, i.e. first learn \(\mathbf{H}\) and then calculate \(\mathbf{A}\) by a least-squares model, e.g. \(\mathbf{A}=\mathbf{H}^{T}\mathbf{X}(\mathbf{X}^T\mathbf{X})^{-1}\). For fairness, in these two strategies we utilize the same code matrix \(\mathbf{H}\) obtained by SDH\_P, GSDH\_P and GGSDH\_P. In this paper, SDH\_P, GSDH\_P and GGSDH\_P using the least-squares model to learn projection matrices are named as SDH\_P\(^{L}\), GSDH\_P\(^{L}\) and GGSDH\_P\(^{L}\), respectively. 

We partition the CIFAR-10 database into training and query sets, which consist of 50K and 10K images, respectively. Each image is aligned and cropped to \(32\times 32\) pixels and then represented by a 512-dimensional GIST feature vector \cite{oliva2001modeling}.  In our experiments, we kernelize GIST feature vectors by using the same kernel type in KSH \cite{WJR} and uniformly selecting 1K samples from the training set as anchors. Since some comparative multi-wise based algorithms are extremely slow when using a large number of training data, we utilize only a subset of data to train models for the non-deep hashing algorithms: KSH, CCA-ITQ, SDH, COSDISH, KSDH, ADGH, RSH, CGH, Top-RSBC and DSeRH. Similar to KSH, we uniformly pick up 1K and 100 images from each category for training and testing, respectively. Then, we evaluate the proposed algorithms and the six non-ranking algorithms (KSH, CCA-ITQ, SDH, COSDISH, KSDH and ADGH) using all training and query images. Their ranking performance in term of MAP with 500 retrieved samples is shown in Table \ref{cifar10}. As we can see, when using 10K training and 1K query images, both SDH\_P and GSDH\_P achieve better MAPs than the other algorithms at 8-, 16-, 32- and 64-bit. The gain of GSDH\_P ranges from 1.35\% to 4.88\% over the best competitor except SDH\_P.  Additionally, GSDH\_P obtains higher MAPs than SDH\_P at 16-, 32- and 64-bit. When using 50K training and 10K query images, GSDH\_P outperforms the other comparative non-deep hashing algorithms except ADCG at 8-bit. Figure \ref{figprecision} presents the precision and recall of various hashing algorithms at 8-, 16-, 32-, 64- and 128-bits on the CIFAR-10 database with Hamming radius being 2. It further demonstrates the superior performance of GSDH\_P over the other hashing algorithms. Although SDH\_P achieves best precision at 16- and 32-bit, its recall is very low compared to other algorithms.

To illustrate the generation of the proposed algorithm GSDH\_P, we uniformly pick up 1K and 100 images from each category for training and testing, respectively. We repeat this process 10 times and report the average MAP of KSH, BRE, MLH, FastH and GSDH\_P with top 1K samples returned in Table \ref{cifar10_dt}. Note that GSDH\_P\(_{KSH}\)+DT denotes learning classification models by using decision trees as classifiers and the learned binary codes of GSDH\_P\(_{KSH}\) as labels. Similar definitions for GSDH\_P\(_{BRE}\)+DT and GSDH\_P\(_{Hinge}\)+DT. Here, we utilize GSDH\_P\(_{KSH}\) to represent GSDH\_P for a clear comparison. Table \ref{cifar10_dt} illustrates that GSDH\_P can achieve significantly better performance than KSH, BRE and MLH. Meanwhile, it also outperforms FastH with lower training costs.

When evaluating the deep hashing algorithms, we follow the experimental protocol in \cite{xia2014supervised}, i.e. randomly selecting 500 and 100 images per class for training and testing, respectively. We show the reported MAP of CNNH, DNNH and DHN with 5K samples retrieved and the MAP of CNNH, DNNH, DHN, HashNet and DCH with Hamming radius being 2 in Table \ref{cifar10_dh}, which shows that GSDH\_P\(^\ast\) significantly outperforms recent state of the arts on the CIFAR-10 database.

\renewcommand{\thesubfigure}{\relax}
\begin{figure}[tb]
\center
\subfigure[]{\includegraphics[trim={2em 3em 2.5em 2em}, clip, width=0.07\textwidth]{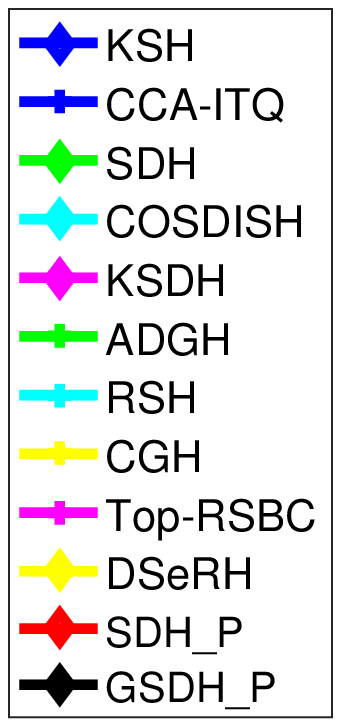}}
\setcounter{subfigure}{0}
\subfigure[(a) Precision ]{\includegraphics[trim={1em 0em 2em 1em}, clip, width=0.19\textwidth]{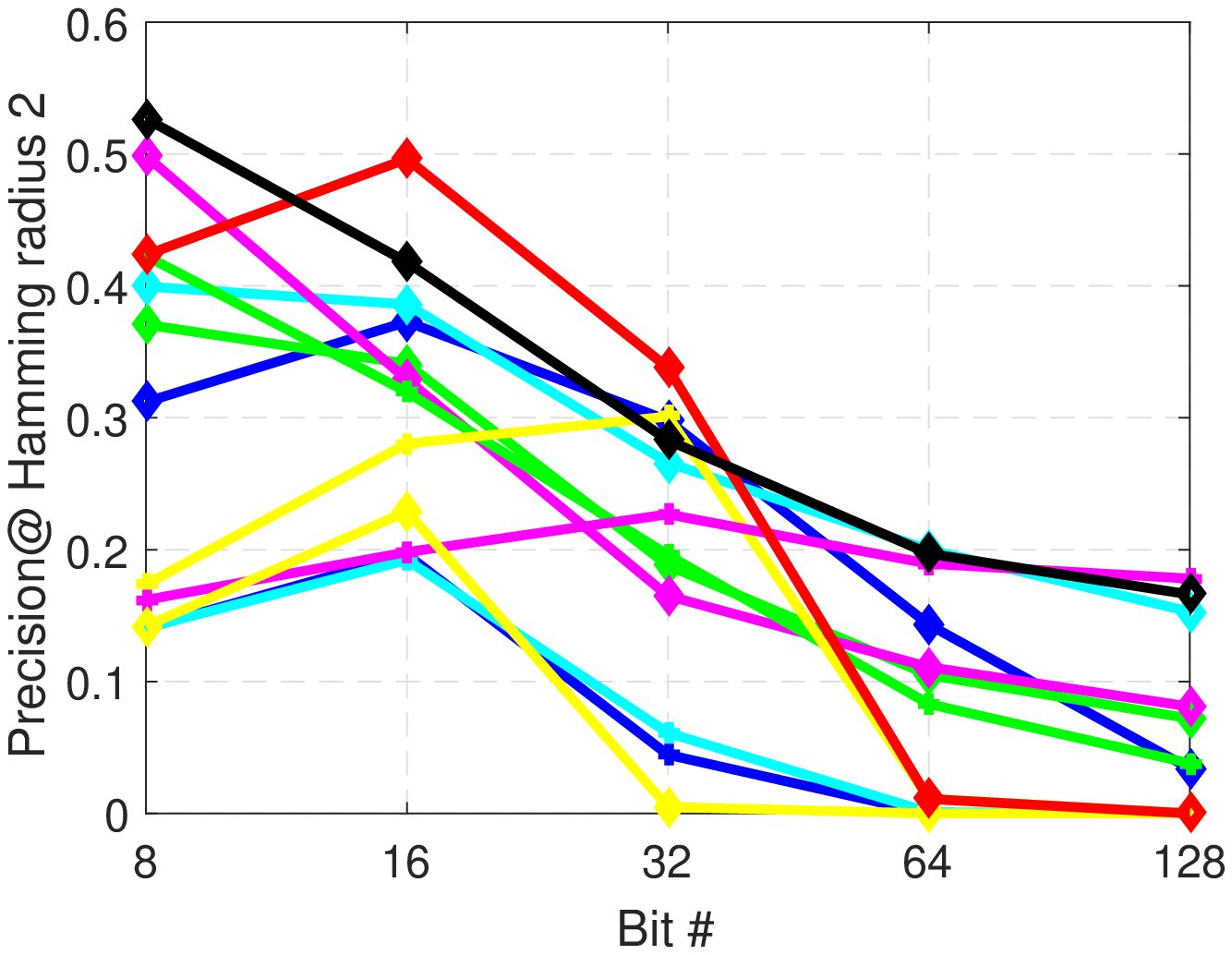}}
%\subfigure[@Eq. (\ref{shh})]{\includegraphics[trim={1em 0em 1em 0em}, clip, width=0.3\textwidth]{figures/objH2.eps}}
\subfigure[(b) Recall]{\includegraphics[trim={1em 0em 2em 1em}, clip, width=0.19\textwidth]{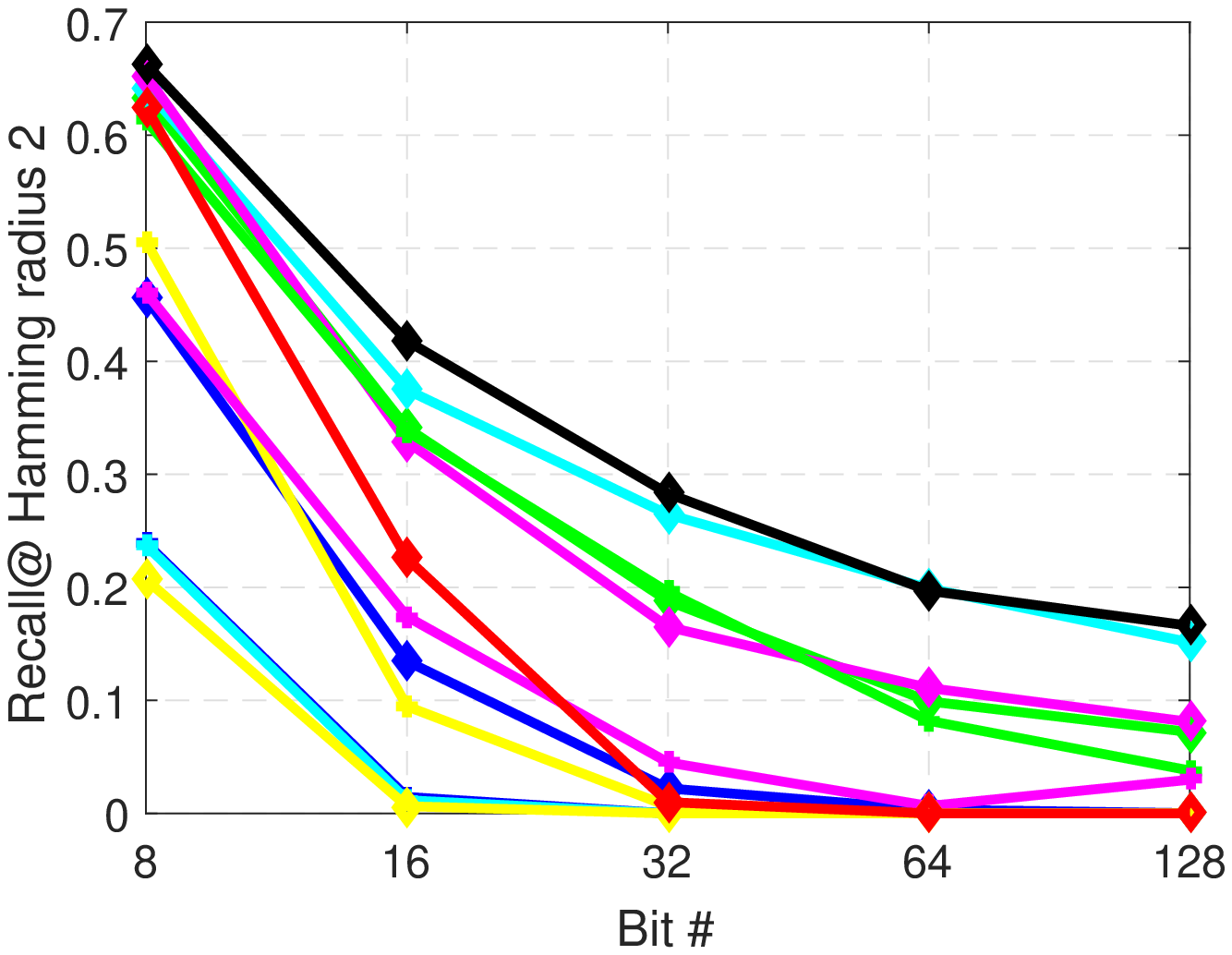}}
\vspace{-1em}
\caption{Precision and Recall vs. Bits of various algorithms on the CIFAR-10 database with Hamming radius being 2.}
\label{figprecision}
\vspace{-1em}
\end{figure}

\begin{table}[htb]
%\footnotesize
\scriptsize 
%\small
\centering
\caption{Ranking performance (NDCG) on top 50 retrieved samples and training time (seconds) of different hashing algorithms on the multi-label database NUS-WIDE.}
\vspace{0.2em}
\begin{tabular}{|c|c|c|c|c|c|}
\hline
\multirow{3}*{Method} & \multicolumn{5}{c|}{ BoW (\(n=10000\))}  \\ 
 \cline{2-6}
 & \multicolumn{4}{c|}{NDCG (Top 50) } & Time  \\      
 \cline{2-6}
 & 8-bit&16-bit & 32-bit  & 64-bit & 64-bit  \\
\hline
KSH   & 0.2755  &0.2730 & 0.2554 &0.2395  &\(2.1\times 10^{3}\)    \\
\hline
CCA-ITQ	 &0.2522 &0.2337 &0.2248 &0.2145  & 0.4   \\
\hline
SDH  &0.3108 & \underline{0.3674} & 0.3674 &0.3494 &\(2.5\times 10^{1}\)    \\
\hline
COSDISH  & 0.3169  & 0.3606 & \underline{0.3989} & \underline{0.3801}  & \(3.5\times 10^{1}\)          \\
\hline
KSDH  & \underline{0.3227} & 0.3125 & 0.3091 & 0.3092   & 5.3                \\
\hline
\hline
RSH 	&0.2040  &0.2061 &0.2088 &0.2193 & \(9.2\times 10^{4}\)           \\
\hline
CGH   &0.2163  & 0.2228    &0.2475  & 0.2323   & \(1.5\times 10^{3}\)       \\
\hline
Top-RSBC  & 0.1962 &0.2238     &0.2564  & 0.2758      & \(8.9\times 10^{4}\)     \\
\hline
DSeRH	  & 0.2158 &0.2264 &0.2271  &0.2210 &\(8.2\times 10^{3}\)    \\
\hline
\hline 
\textbf{SDH\_P} &0.2916  & 0.3014 & 0.3038  & 0.3174  & \(4.3\)     \\
\hline
\textbf{GSDH\_P} & \(\mathbf{0.3588}\)&\(\mathbf{0.3889}\)& \(\mathbf{0.4083}\) &\(\mathbf{0.4311}\)  & \(5.5\times 10^{1}\)        \\
%\hline
%\hline
%\textbf{FastH} & \(\mathbf{0.3310}\) & \(\mathbf{0.3435}\)&0.3684 &0.3785 &\(4.6\times 10^2\)  \\
%\hline
%\textbf{GSDH\_P+DT} &\(\mathbf{0.3320}\) & \(\mathbf{0.3437}\) & \(\mathbf{0.3713}\) & \(\mathbf{0.3973}\)  & \(4.7\times 10^{2}\) \\
%\hline
\hline 
\hline
\multirow{3}*{Method} & \multicolumn{5}{c|}{BoW (Full)}  \\ 
 \cline{2-6}
 & \multicolumn{4}{c|}{NDCG (Top 50) } & Time  \\      
 \cline{2-6}
 & 8-bit&16-bit & 32-bit  & 64-bit & 64-bit  \\
 \hline
CCA-ITQ	&0.2562 &0.2551 &0.2528 &0.2475  &8.8   \\
\hline
SDH & 0.3381  & \underline{0.3927} &0.3838 &0.3827 &\(3.8\times 10^{2}\)  \\
\hline
COSDISH   & \underline{0.3649} & 0.3539 & \underline{0.4279} &\underline{0.4323} &\(3.5\times 10^{2}\)   \\
\hline
\hline
\textbf{SDH\_P}  &0.3010  &0.3094& 0.3431 &0.3313 &\(7.7\times 10^{1}\)             \\
\hline
\textbf{GSDH\_P} & \(\mathbf{0.4152}\) & \(\mathbf{0.4521}\)&\(\mathbf{0.4646}\) &\(\mathbf{0.4875}\)  & \(5.0\times 10^2\)       \\
%\hline
%\hline
%\textbf{FastH} &0.2342  &0.2522 &0.2936 &0.3134  &\(6.6\times 10^{3}\)  \\
%\hline
%\textbf{GSDH\_P+DT} & \(\mathbf{0.2897}\)  &\(\mathbf{0.3174}\)  &\(\mathbf{0.3459}\)  & \(\mathbf{0.3721}\)  & \(6.8\times 10^{3}\) \\
%\hline
%\hline
%\multirow{3}*{Method} & \multicolumn{5}{c|}{CNN (n=50000)} \\
% \cline{2-6}
% & \multicolumn{5}{c|}{NDCG (Top 50) } \\
% \cline{2-6}
% & 8-bit&16-bit & 32-bit  & 64-bit & 128-bit   \\
% \hline
%CNNH &0.1775 &0.1609 & 0.1525&0.1611   & 0.1635   \\
%\hline
%SFL	 &0.1704  &0.1691 &0.1516 &0.1696   & 0.1811   \\
%\hline
%DSRH  &0.1785  &0.1604 &0.1648&0.1733   &0.1741     \\
%\hline
%DCH  & 0.5210 &0.5520&0.5638&0.5696 &0.5798          \\
%\hline  
%\hline
%\textbf{SDH\_P} &0.5393 &0.5821 &0.5963&0.6114   &0.6095        \\
%\hline
%\textbf{GSDH\_P} & \(\mathbf{0.5618}\) & \(\mathbf{0.6059}\) & \(\mathbf{0.6244}\) & \(\mathbf{0.6520}\) & \(\mathbf{0.6690}\)       \\
%\hline
\hline
\end{tabular}
\label{nuswide}
\vspace{-1em}
\end{table}

\begin{table}[!htb]
%\footnotesize
\scriptsize 
%\small
\centering
\caption{Ranking performance (MAP) of GSDH\_P\(^{\ast}\) and several popular deep hashing algorithms on the NUSWIDE database (\(^{\dagger}\) denotes that the shown results are reported in  \cite{cao2018deep}).}
\vspace{0.2em}
\begin{tabular}{|c|c|c|c|c|}
\hline
\multirow{2}*{Method} & \multicolumn{4}{c|}{MAP @ Top 5000 }  \\ 
 \cline{2-5}
 & 12-bit & 24-bit & 32-bit  & 48-bit\\
\hline
CNNH  &0.617 &0.663 &0.657   &0.688     \\
\hline
DNNH	& 0.674   &0.697 &0.713 &0.715    \\
\hline
DHN  & \underline{0.708} & \underline{0.735} &\underline{0.748} &\underline{0.758}          \\
\hline
%DPSH  & 0.794   &0.822   &0.838 &0.851      \\
%\hline
%GSDH\_P\(^{\ast}\)  &0.759 &0.791 & &            \\
GSDH\_P\(^{\ast}\)  & \(\mathbf{0.759}\) &\(\mathbf{0.791}\) &\(\mathbf{0.806}\) &\(\mathbf{0.799}\)            \\
\hline
\hline
\multirow{2}*{Method} & \multicolumn{4}{c|}{ MAP @ H\(\leq\) 2}  \\ 
 \cline{2-5}
 & 16-bit & 32-bit & 48-bit  & 64-bit\\
 \hline
 CNNH\(^{\dagger}\)  & 0.5843  &0.5989   &0.5734   &0.5729     \\
 \hline
 DNNH\(^{\dagger}\)	& 0.6191   &0.6216 &0.5902 &0.5626    \\
 \hline
 DHN\(^{\dagger}\)  &0.6901   &0.7021   &0.6736 &0.6190      \\
 \hline
 HashNet\(^{\dagger}\) &0.6944 &0.7147 &0.6736 &0.6190          \\
 \hline
 DCH\(^{\dagger}\) & \underline{0.7401} &\underline{0.7720} &\underline{0.7685} &\underline{0.7124}          \\
 \hline
 \textbf{GSDH\_P\(^{\ast}\)}  & \(\mathbf{0.8073}\)& \(\mathbf{0.8025}\) & 0.7572 & \(\mathbf{0.7442}\)             \\
%  \hline
%  \textbf{GSDH\_P\(^{\ast}\)}  & \(\mathbf{0.7560}\) &\(\mathbf{0.8126}\) & &            \\
\hline
\end{tabular}
\label{nusiwde_dh}
\vspace{-1em}
\end{table}

\renewcommand{\thesubfigure}{\relax}
\begin{figure}
\center
\subfigure[]{\includegraphics[trim={18em 7em 18.5em 7em}, clip, width=0.07\textwidth]{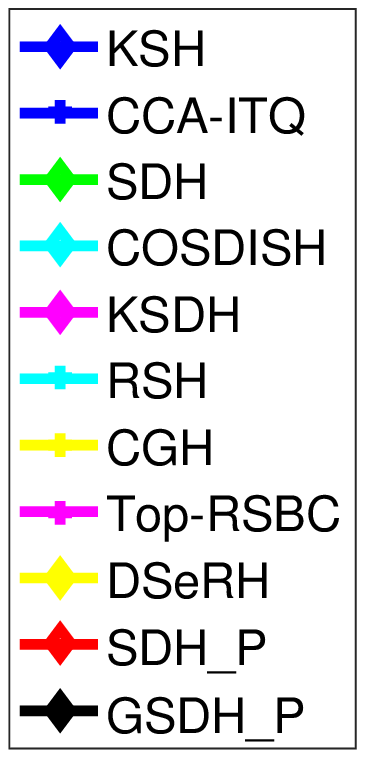}}
\setcounter{subfigure}{0}
\subfigure[(a) ACG ]{\includegraphics[trim={1em 0em 2em 1em}, clip, width=0.19\textwidth]{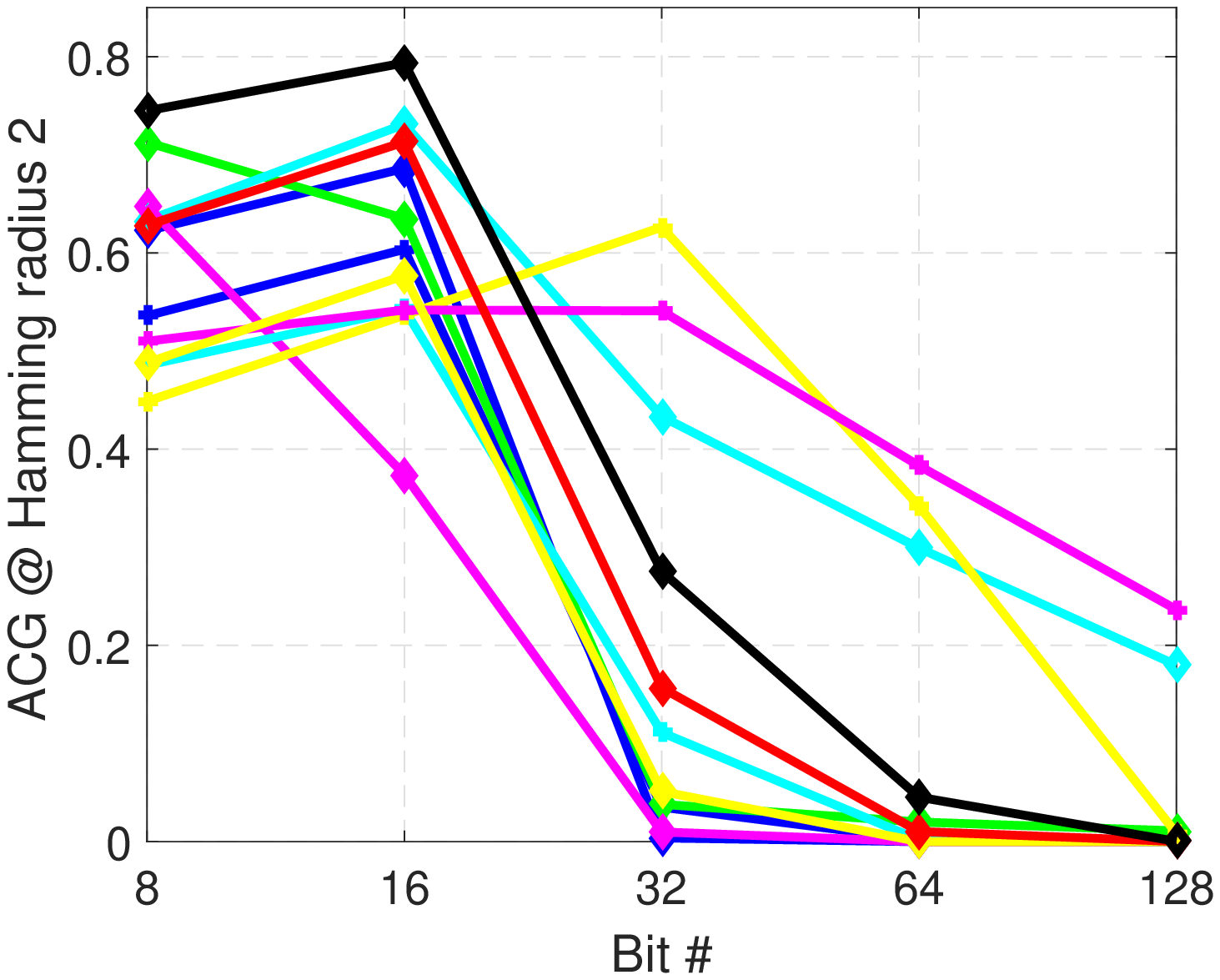}}
%\subfigure[@Eq. (\ref{shh})]{\includegraphics[trim={1em 0em 1em 0em}, clip, width=0.3\textwidth]{figures/objH2.eps}}
\subfigure[(b) NDCG]{\includegraphics[trim={1em 0em 2em 1em}, clip, width=0.19\textwidth]{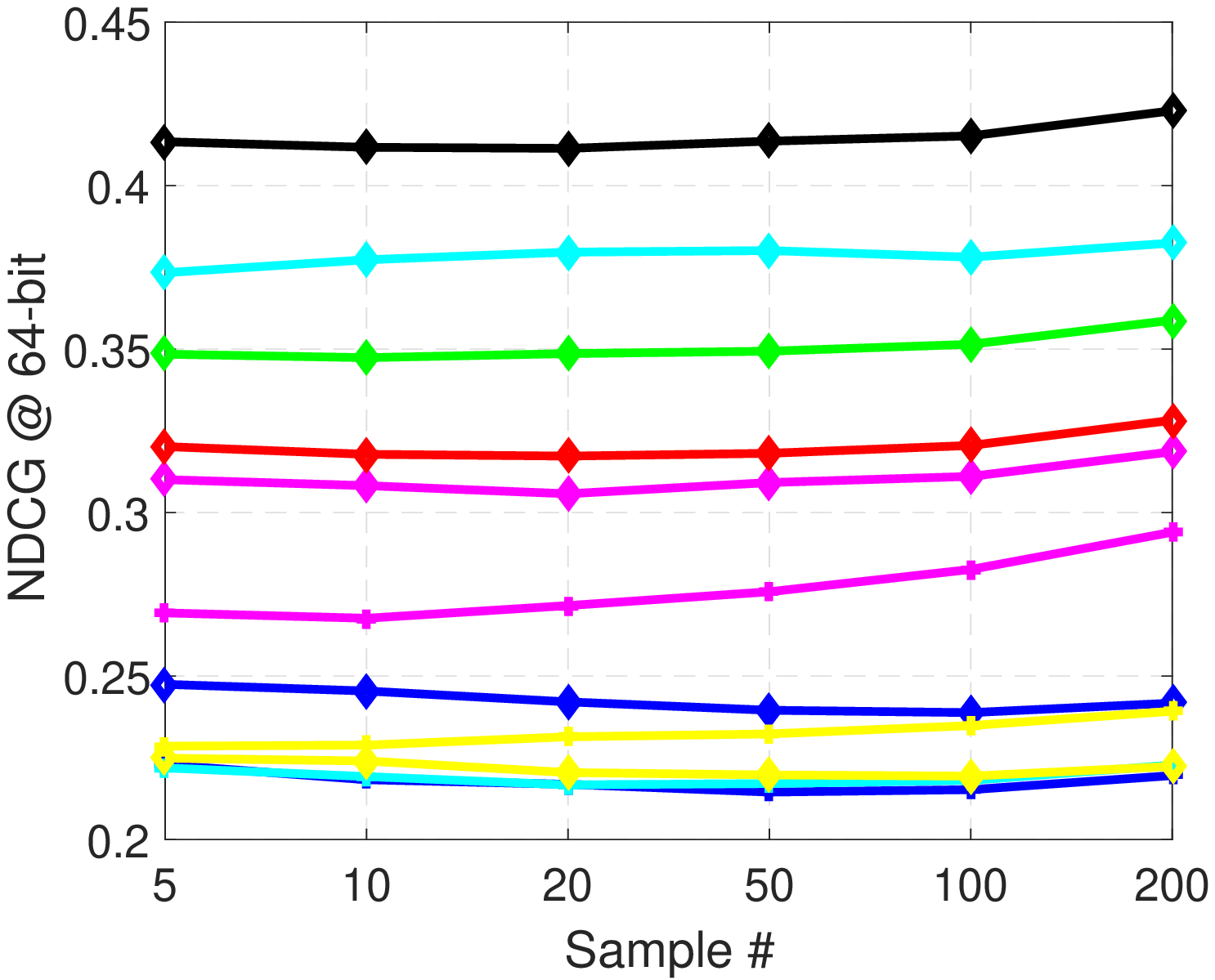}}
%\subfigure[(b) @COCO]{\includegraphics[trim={0em 0em 0em 0em}, clip, width=0.22\textwidth]{}}
%\subfigure[(d) @COCO]{\includegraphics[trim={0em 0em 0em 0em}, clip, width=0.22\textwidth]{}}
\vspace{-1em}
\caption{ACG vs. Bits and NDCG vs. retrieved samples of various algorithms with images selected from the NUS-WIDE database.}
\label{figndcg}
\vspace{-1.5em}
\end{figure}

\vspace{-0.5em}
\subsection{Experiments on NUS-WIDE}

For the NUS-WIDE database, we partition all images into training and test sets, including around 185K training and 10K query images, respectively. Each image is represented by the provided 500 Bag-of-Words (BoW) features.  In our experiments, similar to KSH \cite{WJR}, we kernelize BoW feature vectors by uniformly selecting 1K samples from the training set as anchors. Firstly, we uniformly select 10K training and 1K query images to evaluate all ranking and non-ranking algorithms except ADGH, since it cannot be directly applied to tackling multi-label tasks. Then, we utilize all training and query images to evaluate the proposed algorithms and three scalable algorithms: CCA-ITQ, SDH and COSDISH. Table \ref{nuswide} presents their ranking performance in term of NDCG with 50 samples retrieved. It shows that GSDH\_P has superior ranking performance to the other nine non-deep hashing algorithms when 10K training images are used. Its gain in term of NDCG is from 2.35\% to 13.42\% over the best competitors except SDH\_P. When all training images are used, GSDH\_P significantly outperforms CCA-ITQ, SDH and COSDISH, its gain ranges from 8.58 \% to 15.13\% over the best competitors at all four bits. Figure \ref{figndcg} presents the ACG of various algorithms on the NUS-WIDE database at 8-, 16-, 32-, 64- and 128-bit with Hamming radius being 2, and their NDCGs with 5, 10, 20, 50, 100 and 200 samples retrieved at 64-bit. It further illustrates that GSDH\_P outperforms the other hashing algorithms.

Additionally, we follow the protocol in \cite{xia2014supervised} \cite{lai2015simultaneous} to randomly select 100 query and uniformly sample 500 training images from each of the selected 21 most frequent labels, in order to evaluate the MAP of GSDH\_P\(^{\ast}\) and deep hashing algorithms CNNH, DNNH and DHN with top 5000 images returned. Moreover, to evaluate the MAP of GSDH\_P\(^{\ast}\), CNNH, DNNH, DHN, HashNet and DCH with Hamming radius being 2, we follow the experimental protocol in DCH \cite{cao2018deep}, i.e. randomly sample 5K and 10K images to construct testing and training sets, respectively. Note that when calculating MAP, if two images share at least one label, they are similar and \(s_{ij}=1\), otherwise, they are dissimilar and \(s_{ij}=0\). Table \ref{nusiwde_dh} shows their MAP with top 5K retrieved samples and Hamming radius being 2. It figures out that GSDH\_P\(^{\ast}\) can significantly outperform CNNH, DNNH and DHN when top 5K images are returned, and when Hamming radius being 2, GSDH\_P\(^{\ast}\) achieves 6.72\%, 3.05\% and 3.18\% higher MAP than the best competitor DCH at 16-, 32- and 64-bit.  

\begin{table}[t]
%\footnotesize
\scriptsize 
\centering
\caption{Ranking performance (NDCG) on top 50 retrieved samples and training time (seconds) of various hashing algorithms on images from the multi-label database COCO.}
\begin{tabular}{|c|c|c|c|c|c|}
\hline
\multirow{3}*{Method}& \multicolumn{5}{c|}{n=10000}   \\
\cline{2-6}
& \multicolumn{4}{c|}{NDCG (Top 50)} & Time    \\ 
 \cline{2-6}
& 8-bit&16-bit & 32-bit &64-bit & 64-bit \\
\hline
KSH  & \underline{0.2352}  &0.3226   &0.3810    &0.4168  & \(2.7\times 10^{3}\)   \\
\hline
CCA-ITQ  &0.2293   &\underline{0.3367} &\underline{0.3947}  &0.4103     & 0.3      \\
\hline
SDH 	 & 0.1911  &0.3218   &0.3911 &\underline{0.4345}    &  \(2.1\times 10^{1}\)    \\
\hline
COSDISH   & 0.1623 &0.2382   &0.2438 &0.2890      & \(2.2\times 10^{1}\)      \\
\hline
KSDH  &0.1623 &0.2000   &0.2292    &0.2597      & 2.7   \\
\hline
\hline
RSH 	&0.2156  &0.2751   &0.3299    &0.3583  & \(6.1\times 10^{4}\)          \\
\hline
CGH   & 0.1566  &0.1928 &0.2176 &0.2324  & \(1.9\times 10^{3}\)      \\
\hline
Top-RSBC  &0.1547  &0.1843   &0.2045    &0.2845 & \(8.6\times 10^{4}\)    \\
\hline
DSeRH	  &0.2144  &0.2556 &0.3210 &0.3560  & \(1.7\times 10^{3}\)    \\
\hline
\hline
\textbf{SDH\_P} & 0.1932   & 0.2476  &0.3165    &0.3465   & 2.7    \\
\hline
\textbf{GSDH\_P} & \(\mathbf{0.2526}\) & \(\mathbf{0.3550}\)   & \(\mathbf{0.4136}\)    & \(\mathbf{0.4385}\)    & \(5.5\times 10^{1}\) \\
\hline 
\hline
\multirow{3}*{Method} & \multicolumn{5}{c|}{Full}  \\ 
 \cline{2-6}
& \multicolumn{4}{c|}{NDCG (Top 50) } & Time  \\      
 \cline{2-6}
& 8-bit &16-bit & 32-bit  & 64-bit & 64-bit  \\
 \hline
CCA-ITQ	& \underline{0.1831} &0.2426  &0.2902  &0.3032   & \(1.0\times 10^{1}\)   \\
\hline
SDH &0.1684  &\underline{0.2430} &\underline{0.3537}  &\underline{0.3846}  & \(3.2\times 10^2\) \\
\hline
COSDISH   &0.1557 &0.1741 &0.2013 &0.2146  & \(1.9\times 10^{2}\)   \\
\hline
\hline
\textbf{SDH\_P}  &0.1756  &0.2427  &0.3252  &0.3471 & \(2.3\times 10^{1}\)            \\
\hline
\textbf{GSDH\_P} & \(\mathbf{0.2240}\)  & \(\mathbf{0.3160}\)  & \(\mathbf{0.3760}\)  & \(\mathbf{0.4020}\)  & \(6.1\times 10^2\)     \\
%\hline
%\hline
%\textbf{FastH} &0.2342  &0.2522 &0.2936 &0.3134  &\(6.6\times 10^{3}\)  \\
%\hline
%\textbf{GSDH\_P+DT} & \(\mathbf{0.2897}\)  &\(\mathbf{0.3174}\)  &\(\mathbf{0.3459}\)  & \(\mathbf{0.3721}\)  & \(6.8\times 10^{3}\) \\
\hline
\end{tabular}
\label{coco}
\vspace{-1em}
\end{table}

\vspace{-0.5em}
\subsection{Experiments on COCO}
For the COCO database, we adopt 83K training images and select 10K validation images to construct training and query sets, respectively. Then we resize each image to \(32\times 32\) pixels and represent each one by using a 2048-dimensional CNN feature vector, which is extracted by a popular and powerful neural network: ResNet50 \cite{he2016deep}. After that, we kernelize the features by using the same kernel type as KSH with 1000 anchors selected. Firstly, we uniformly select 10K training and 1K query images to evaluate the proposed algorithms SDH\_P and GSDH\_P, and nine non-deep hashing algorithms. Then we adopt all training and query images to evaluate the scalable hashing algorithms CCA-ITQ, SDH, COSDISH, SDH\_P and GSDH\_P.

%We also compare the proposed algorithms against five deep hashing algorithms CNNH \cite{xia2014supervised}, DLBHC \cite{lin2015deep}, SFL \cite{lai2015simultaneous}, DSRH \cite{zhao2015deep} and \textcolor{red}{DCH}. For the deep hashing algorithms, we utilize ResNet50 as the backbone of their networks to learn hash models. We utilize ResNet50 pre-trained on the Image-Net database \cite{russakovsky2015imagenet} to learn various classification or hash models. Meanwhile, we set batch size to be 64 and employ the ADAM \cite{kingma2014adam} optimization algorithm for training.    

Table \ref{coco} displays their ranking performance in term of NDCG with top 50 retrieved samples and training time of various algorithms on the COCO database. It illustrates that GSDH\_P has the superior performance to the other hashing algorithms. When using 10K training images, the gain of GSDH\_P in term of NDCG is 7.40\%, 5.44\%, 4.79\% and 0.92\% compared to the best competitor except SDH\_P at 8-, 16-, 32- and 64-bit, respectively; when using all training images, the NDCG of GSDH\_P is 4.09\%, 7.30\%, 2.23\% and 1.74\% higher than the best competitor except SDH\_P at the four bits, respectively. Figure \ref{figncocondcg} presents the ACG of various algorithms on the COCO database at 8-, 16-, 32-, 64- and 128-bit with Hamming radius being 2, and presents their NDCGs with 5, 10, 20, 50, 100 and 200 samples retrieved at 64-bit. It further demonstrates the strength of GSDH\_P.

\renewcommand{\thesubfigure}{\relax}
\begin{figure}
\center
\subfigure[]{\includegraphics[trim={18em 7em 18.5em 7em}, clip, width=0.07\textwidth]{figures/legend.eps}}
\setcounter{subfigure}{0}
\subfigure[(a) COCO ]{\includegraphics[trim={1em 0em 2em 1em}, clip, width=0.19\textwidth]{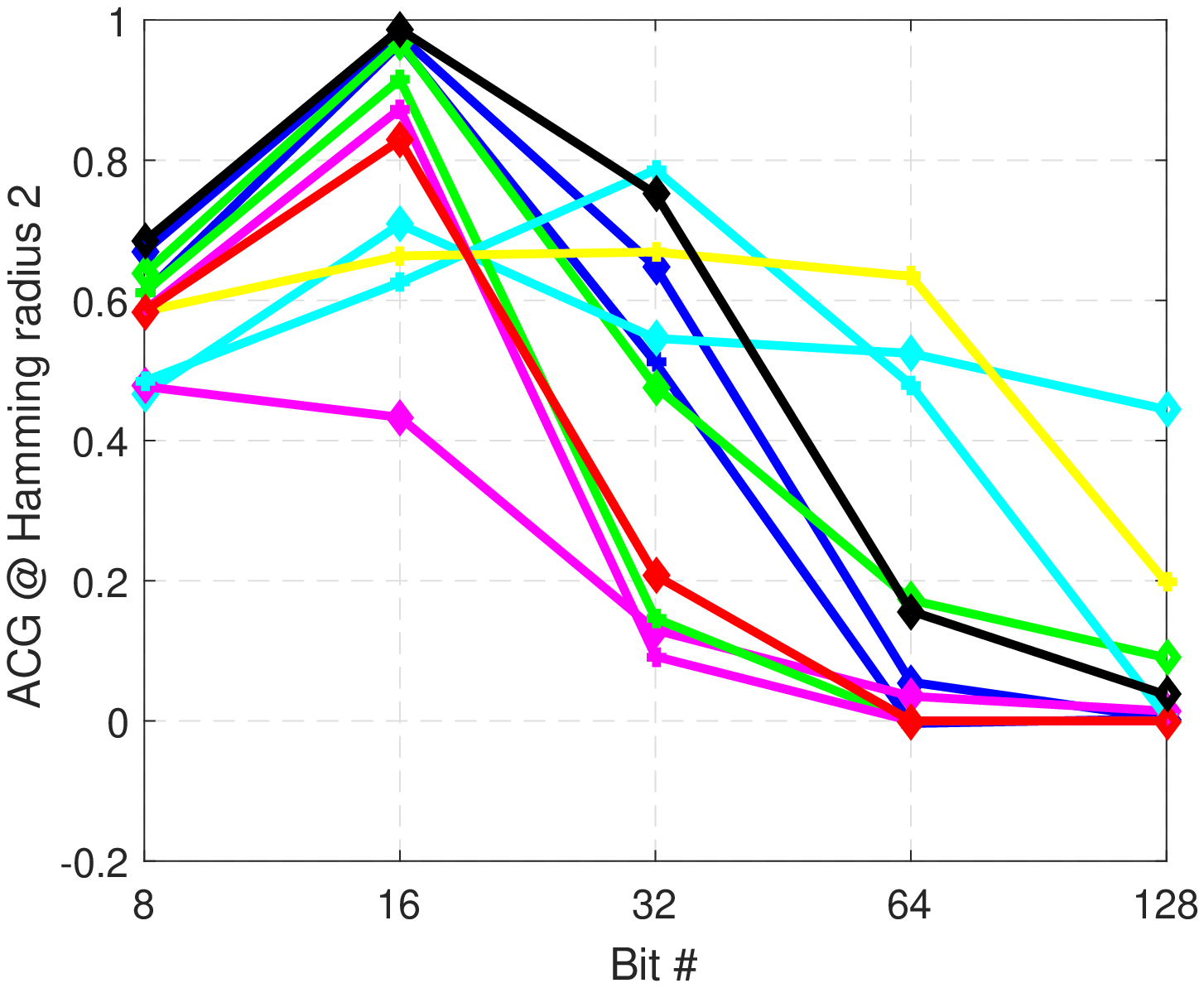}}
%\subfigure[@Eq. (\ref{shh})]{\includegraphics[trim={1em 0em 1em 0em}, clip, width=0.3\textwidth]{figures/objH2.eps}}
\subfigure[(b) COCO]{\includegraphics[trim={1em 0em 2em 1em}, clip, width=0.19\textwidth]{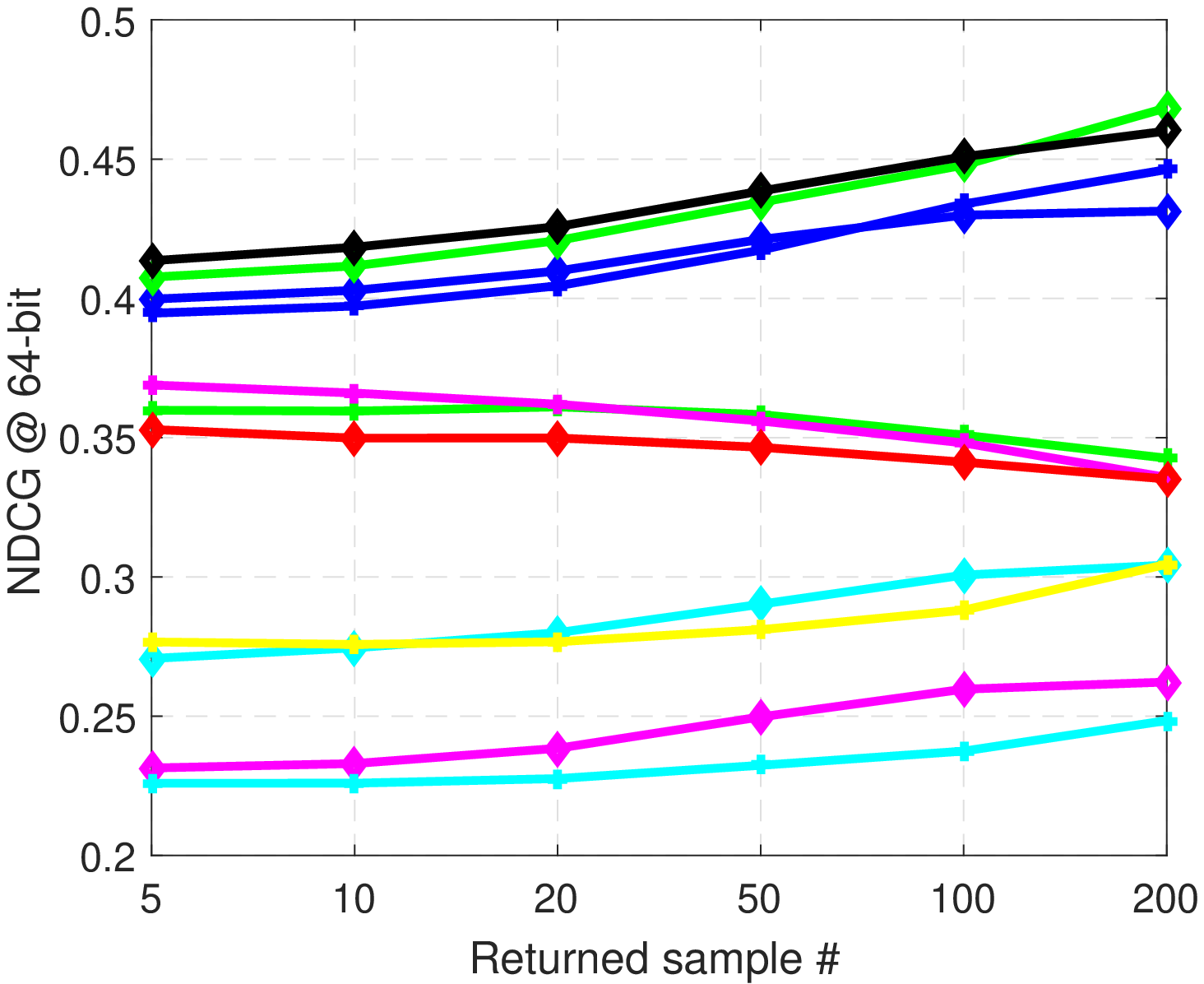}}
%\subfigure[(b) @COCO]{\includegraphics[trim={0em 0em 0em 0em}, clip, width=0.22\textwidth]{figures/COCO_ACG.eps}}
%\subfigure[(d) @COCO]{\includegraphics[trim={0em 0em 0em 0em}, clip, width=0.22\textwidth]{figures/COCO_NDCG_64.eps}}
\vspace{-1em}
\caption{ACG vs. Bits and NDCG vs. retrieved samples of various algorithms with images selected from the NUS-WIDE database.}
\label{figncocondcg}
\vspace{-1.5em}
\end{figure}

\vspace{-0.5em}
\subsection{Parameter Influence}
Here, we mainly evaluate two essential parameters: batch size \(n_{b}\) and regularization parameter \(\beta\), where \(n_{b}\) and \(\beta\) determine the batch number \(f\) and the parameter \(\gamma\), respectively. Similar to previous experiments, we uniformly select 10K training and 1K query images from CIFAR-10, NUS-WIDE and COCO databases, and then encode each image features into 16-bit binary codes. Figure \ref{fignpara} shows the influence of \( n_{b}\in \left [1,10,10^2,10^3,5\times 10^3,10^4 \right]\) and \(\beta \in \left [ 0,1,10,10^2,10^3,10^4,10^5 \right ]\) on ranking performance in term of MAP and NDCG on the three databases. It suggests that both SDH\_P and GSDH\_P can obtain the best or sub-best ranking performance when \(n_{b} \in \left [ 1, 100 \right]\) and \(\beta \in \left [ 0, 100 \right]\) on all the three databases. Similar results can be found at other bits, we do not show them for brevity. In our single-label and multi-label experiments, without loss of generality, we empirically set \(\beta=10\) and \(n_{b}=100\) for the proposed algorithms.

\renewcommand{\thesubfigure}{\relax}
\begin{figure*}
\center
\small
\subfigure[(a) CIFAR@ SDH\_P]{\includegraphics[trim={1em 1em 1em 1em}, clip, width=0.16\textwidth]{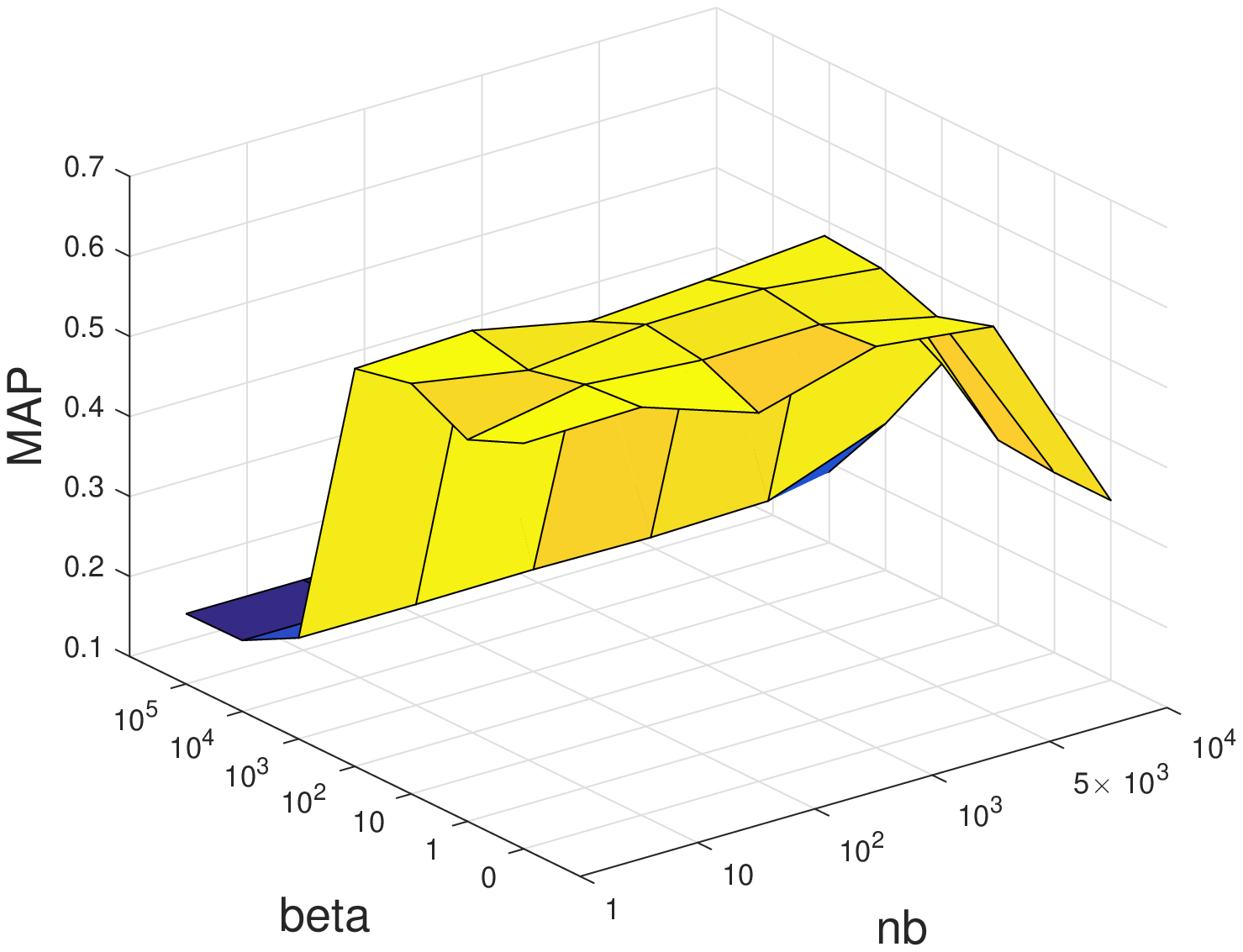}}
\subfigure[(b) CIFAR@ GSDH\_P]{\includegraphics[trim={1em 1em 1em 1em}, clip, width=0.16\textwidth]{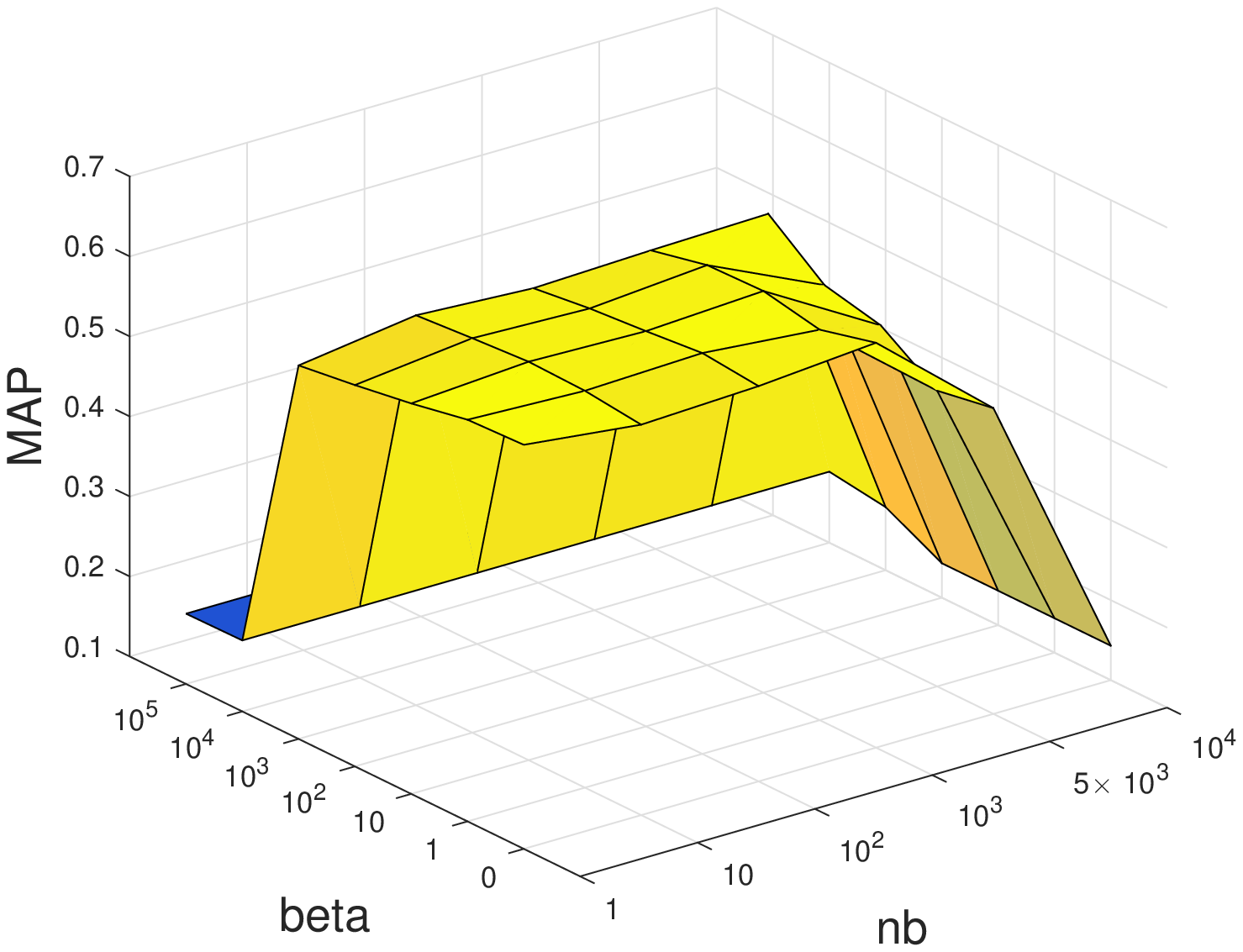}}
\subfigure[(c) NUS@ SDH\_P]{\includegraphics[trim={1em 1em 1em 1em}, clip, width=0.16\textwidth]{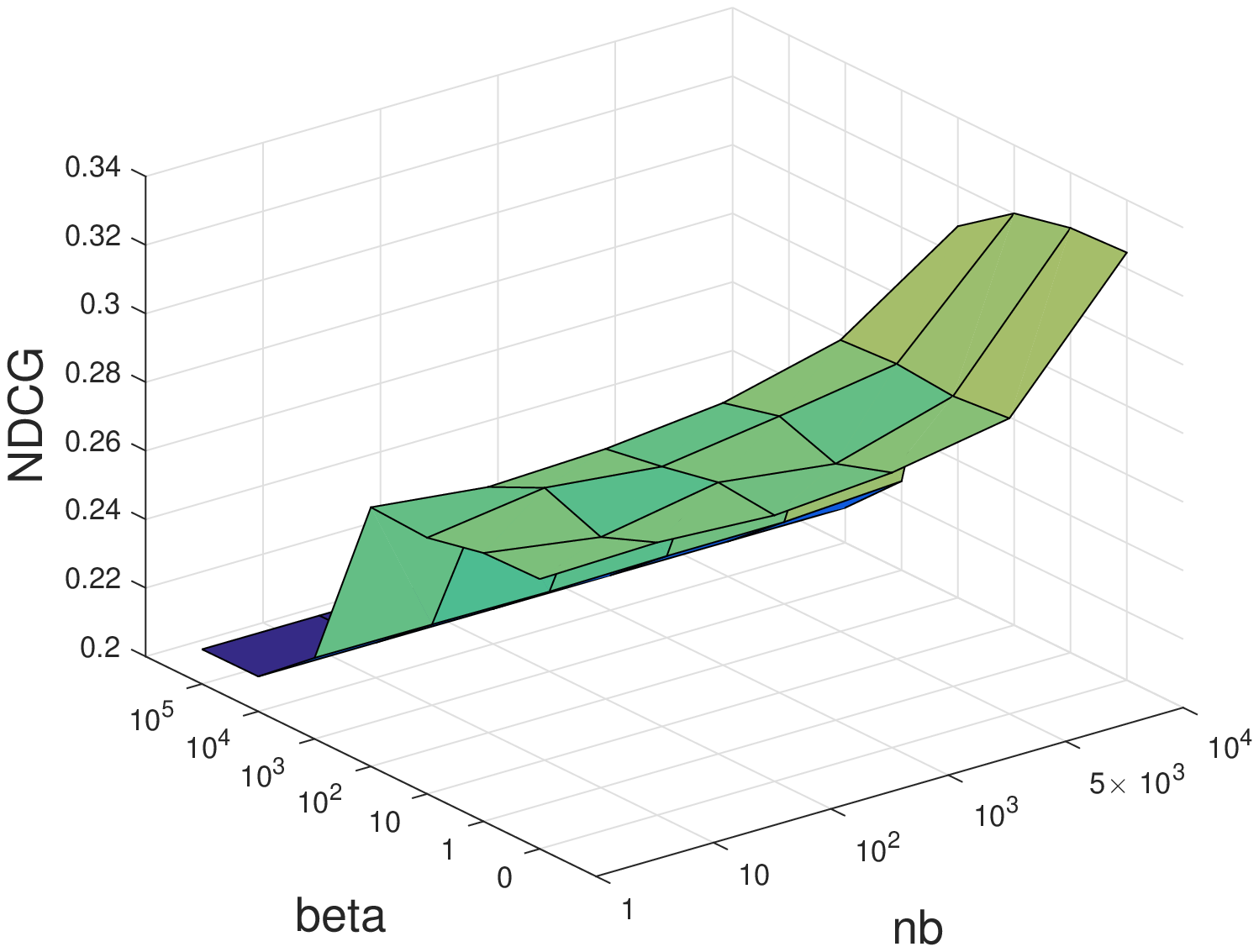}}
\subfigure[(d) NUS@ GSDH\_P]{\includegraphics[trim={1em 1em 1em 1em}, clip, width=0.16\textwidth]{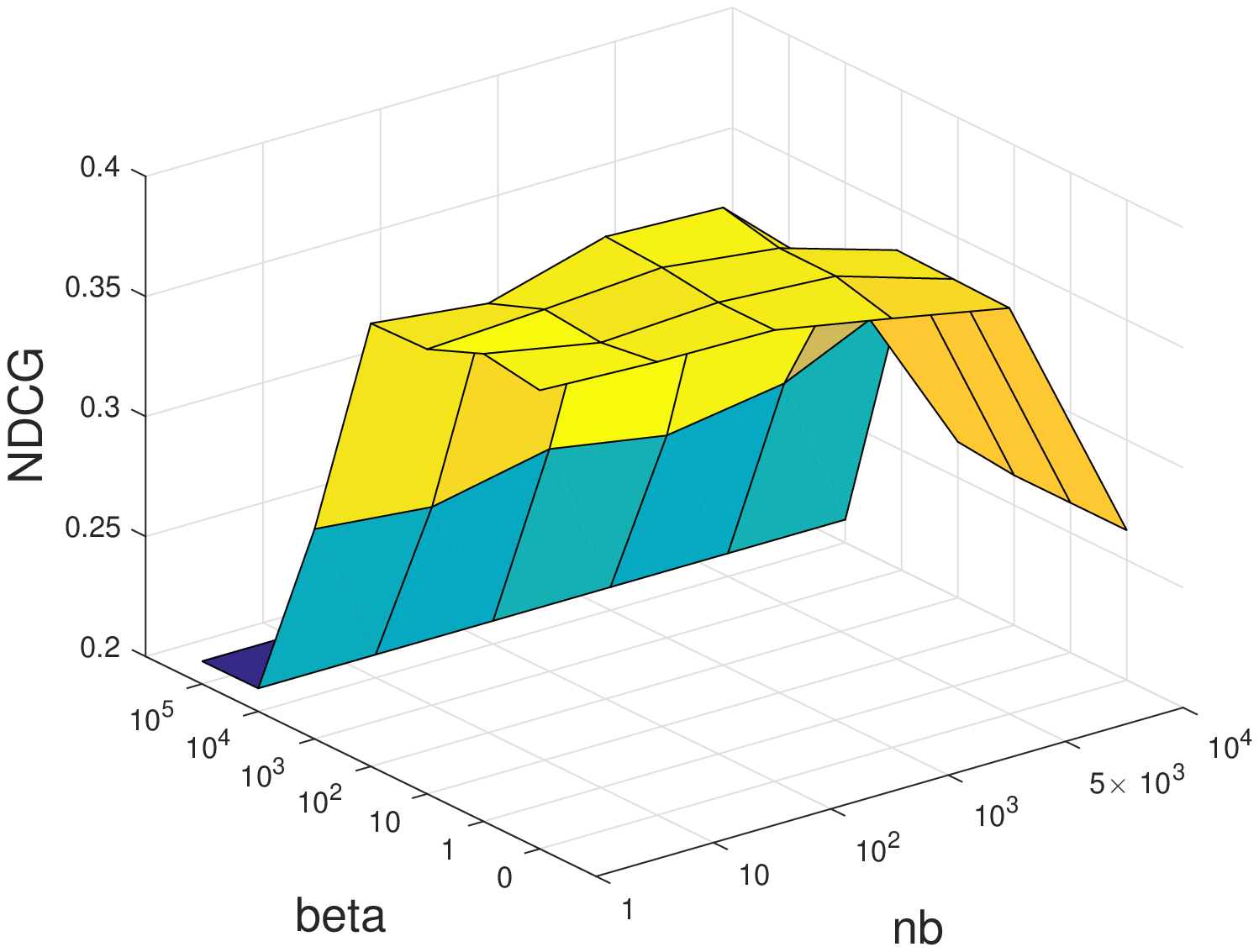}}
\subfigure[(e) COCO@ SDH\_P]{\includegraphics[trim={1em 1em 1em 1em}, clip, width=0.16\textwidth]{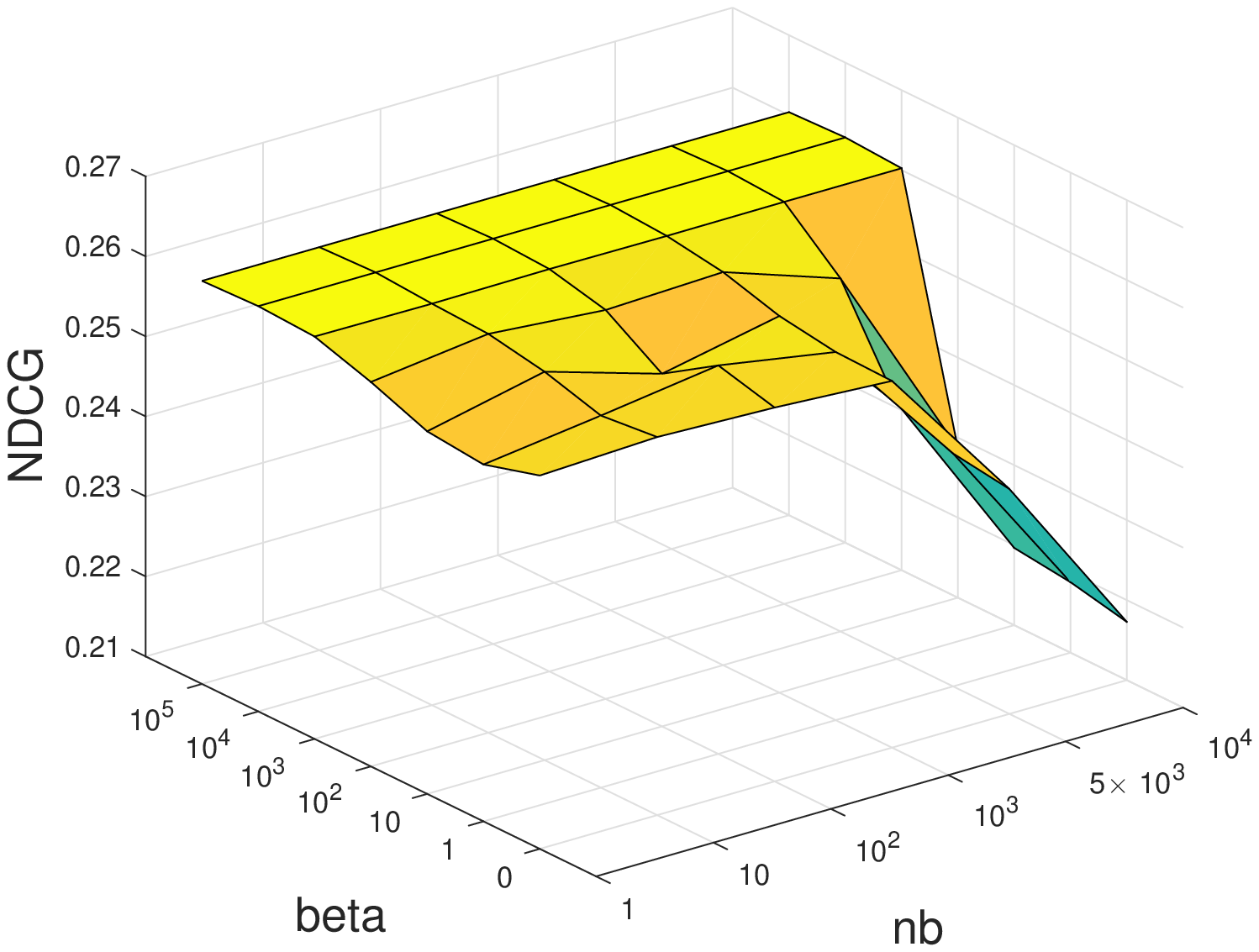}}
\subfigure[(f) COCO@ GSDH\_P]{\includegraphics[trim={1em 1em 1em 1em}, clip, width=0.16\textwidth]{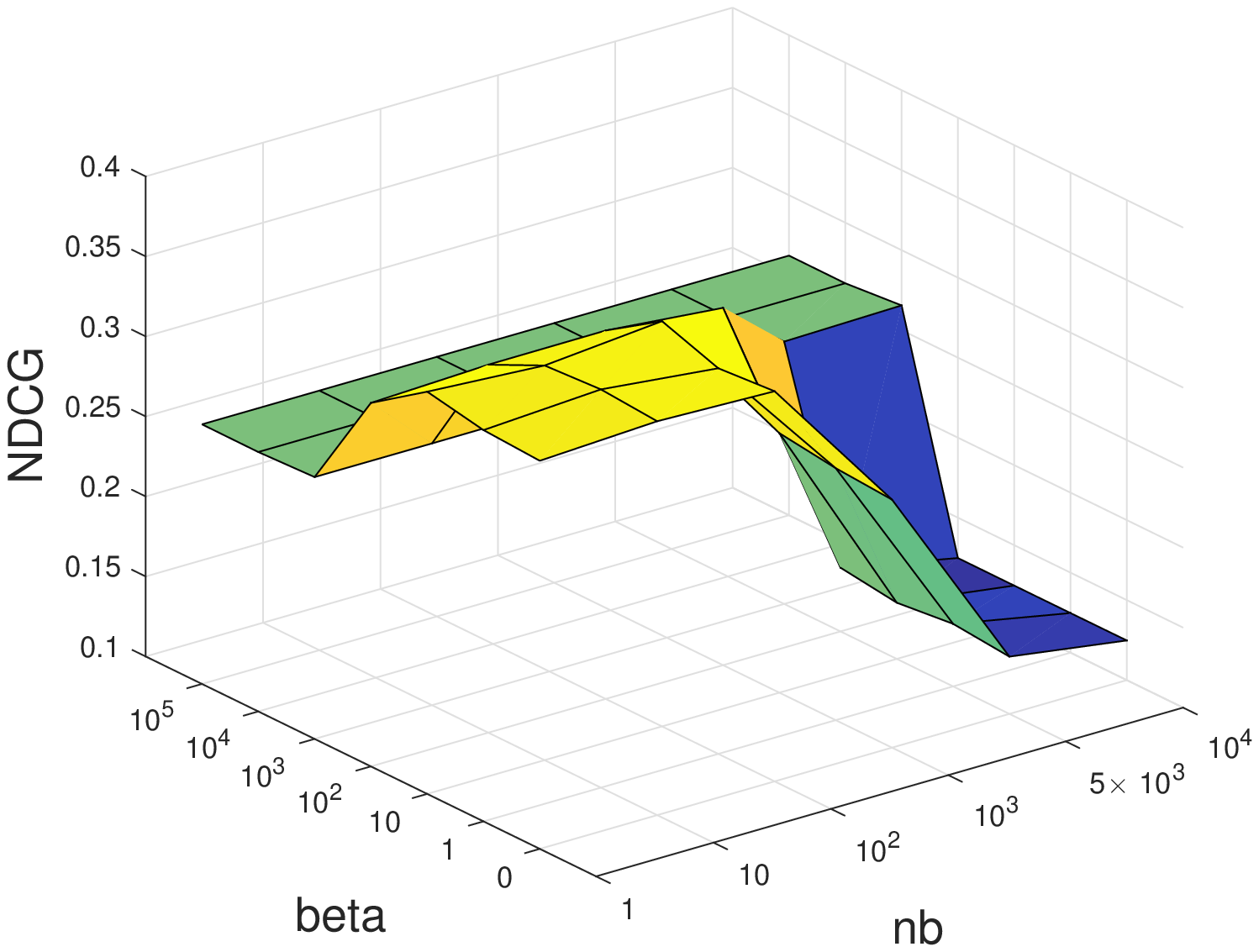}}
\vspace{-1em}
\caption{The influence of parameters \(n_{b}\) and \(\beta\) for SDH\_P and GSDH\_P on CIFAR-10, NUSWIDE and COCO databases.}
\label{fignpara}
\vspace{-1.5em}
\end{figure*}

%\subsection{Discussion and Analysis}
\vspace{-1em}
\section{Conclusion}
\label{conclusion}
In this paper, we propose a novel, scalable and general optimization method to directly solve the non-convex and non-smooth problems in term of hash functions. We firstly solve a quartic problem in a least-squares model that utilizes two identical code matrices produced by hash functions to approximate a pairwise label matrix, by reformulating the quartic problem in term of hash functions into a quadratic problem, and then linearize it by introducing a linear regression model. Additionally, we find that gradually learning each batch of binary codes is beneficial to the convergence of learning process. Based on this finding, we propose a symmetric discrete hashing algorithm to gradually update each batch of the discrete matrix, and a greedy symmetric discrete hashing algorithm to greedily update each bit of batch discrete matrices. Finally, we extend the proposed greedy symmetric discrete hashing algorithm to handle other optimization problems.  Extensive experiments on single-label and multi-label databases demonstrate the effectiveness and efficiency of the proposed method. In this paper we only focus on solving the problems in pairwise based hashing, in the future, it is worth extending the proposed mechanism to solve the problems in multi-wise based hashing, whose objective is also highly non-differential, non-convex and more difficult to directly solve. %We only apply the proposed strategy to one type of pairwise based hashing, however, it can be easily extended to solve the quadratic or quartic problem in term of hash functions in other pairwise based hashing algorithms, such as spectral hashing, binary reconstruction embedding and linear discriminant analysis hashing.   

\vspace{-0.5em}
{
\bibliographystyle{IEEETran}
\bibliography{ref}
}

\end{document}